\definecolor{darkgreen}{cmyk}{0.8,0,0.8,0.45}
\definecolor{lightgreen}{cmyk}{0.8,0,0.8,0.25}
\let\oldnl\nl% Store \nl in \oldnl
\newcommand{\nonl}{\renewcommand{\nl}{\let\nl\oldnl}}% Remove line number for one line
\definecolor{labelkey}{rgb}{0,0.08,0.45}
\definecolor{refkey}{rgb}{0,0.6,0.0}
\definecolor{Brown}{rgb}{0.45,0.0,0.05}
\definecolor{dgreen}{rgb}{0.00,0.49,0.00}
\definecolor{dblue}{rgb}{0,0.08,0.75}
\newtheorem{theorem}{Theorem}[section]
\newtheorem{proposition}[theorem]{Proposition}
\theoremstyle{plain}{\theorembodyfont{\rmfamily}%
}
\theoremstyle{plain}{\theorembodyfont{\rmfamily}%
\newtheorem{assumption}[theorem]{Assumption}}
\theoremstyle{plain}{\theorembodyfont{\rmfamily}%
\newtheorem{model}[theorem]{Model}}
\theoremstyle{plain}{\theorembodyfont{\rmfamily}%
}
\theoremstyle{plain}{\theorembodyfont{\rmfamily}%
\newtheorem{example}[theorem]{Example}}
\theoremstyle{plain}{\theorembodyfont{\rmfamily}%
\newtheorem{remark}[theorem]{Remark}}
\theoremstyle{plain}{\theorembodyfont{\rmfamily}%
}
\theoremstyle{plain}{\theorembodyfont{\rmfamily}%
}
\numberwithin{equation}{section}
\setlist[enumerate]{leftmargin=.5in}
\setlist[itemize]{leftmargin=.5in}
\renewcommand{\leq}{\ensuremath{\leqslant}}
\renewcommand{\geq}{\ensuremath{\geqslant}}
\renewcommand{\le}{\ensuremath{\leqslant}}
\newcommand{\minimize}[2]{\ensuremath{\underset{\substack{{#1}}}%
{\text{\rm minimize}}\;\;#2 }}
\newcommand{\menge}[2]{\big\{{#1}~\big |~{#2}\big\}} 
\newcommand{\HHH}{{\ensuremath{\boldsymbol{\mathsf H}}}}
\newcommand{\GGG}{{\ensuremath{\boldsymbol{\mathsf G}}}}
\newcommand{\HH}{\ensuremath{{\mathsf H}}}
\newcommand{\GG}{\ensuremath{{\mathsf G}}}
\newcommand{\emp}{\ensuremath{{\varnothing}}}
\newcommand{\Idb}{\ensuremath{\mathbf{I}}\,}
\newcommand{\cart}{\ensuremath{\raisebox{-0.5mm}{\mbox{\Large{$\times$}}}}}
\newcommand{\Cart}{\ensuremath{\raisebox{-0.5mm}{\mbox{\LARGE{$\times$}}}}}
\newcommand{\RR}{\ensuremath{\mathbb{R}}}
\newcommand{\RPP}{\ensuremath{\left]0,+\infty\right[}}
\newcommand{\RX}{\ensuremath{\left]-\infty,+\infty\right]}}
\newcommand{\EE}{\ensuremath{\mathbb E}}
\newcommand{\NN}{\ensuremath{\mathbb N}}
\newcommand{\prox}{\ensuremath{\text{\rm prox}}}
\newcommand{\argmind}[2]{\ensuremath{\underset{\substack{{#1}}}%
{\mathrm{argmin}}\;\;#2 }}
\newcommand{\Ab}{\ensuremath{\boldsymbol{A}}}
\newcommand{\Db}{\ensuremath{\boldsymbol{D}}}  \newcommand{\db}{\ensuremath{\boldsymbol{d}}}  
  \newcommand{\eb}{\ensuremath{\boldsymbol{e}}}
  \newcommand{\ub}{\ensuremath{\boldsymbol{u}}}  
  \newcommand{\vb}{\ensuremath{\boldsymbol{v}}}  
  \newcommand{\wb}{\ensuremath{\boldsymbol{w}}}  
  \newcommand{\xb}{\ensuremath{\boldsymbol{x}}}  
  \newcommand{\yb}{\ensuremath{\boldsymbol{y}}}  
  \newcommand{\zb}{\ensuremath{\boldsymbol{z}}}  
\newcommand{\alphab}{\ensuremath{\boldsymbol{\alpha}}}
\newcommand{\betab}{\ensuremath{\boldsymbol{\beta}}}
\newcommand{\deltab}{\ensuremath{\boldsymbol{\delta}}}
\newcommand{\xib}{\ensuremath{\boldsymbol{\xi}}}
\newcommand{\Dc}{\ensuremath{\mathcal{D}}}
\newcommand{\Hc}{\ensuremath{\mathcal{H}}}
\newcommand{\Nc}{\ensuremath{\mathcal{N}}}
\newcommand{\Pc}{\ensuremath{\mathcal{P}}}
\newcommand{\Rc}{\ensuremath{\mathcal{R}}}
\newcommand{\Sc}{\ensuremath{\mathcal{S}}}
\newcommand{\Wc}{\ensuremath{\mathcal{W}}}
\newcommand{\eE}{\ensuremath{\mathbb{E}}}
\newcommand{\eN}{\ensuremath{\mathbb{N}}}
\newcommand{\eR}{\ensuremath{\mathbb{R}}}
\newcommand{\eV}{\ensuremath{\mathbb{V}}}
\DeclareMathOperator{\SNR}{SNR}
\DeclareMathOperator{\SSIM}{SSIM}
\newcommand{\dist}[2]{\ensuremath{\pi_{#1}\left(#2\right)}}
\newcommand{\distc}[3]{\ensuremath{\pi_{#1}\left(#2 \mid #3\right)}}
\newcommand{\xmap}{\xb_{\text{MAP}}}
\newcommand{\xmmse}{\xb_{\text{MMSE}}}
\newcommand{\ncomposite}{C}
\newcommand{\idcomposite}{c}
\newcommand{\nworkers}{K}
\newcommand{\nmc}{N_\text{MC}}
\newcommand{\nbi}{N_\text{bi}}
\newcommand{\tflop}{\tau_{\texttt{flop}}}
\newcommand{\tlatency}{\tau_{\texttt{latency}}}
\newcommand{\tband}{\tau_{\texttt{bandwidth}}}
\newcommand{\email}{}%[1]{\href{mailto:#1}{\nolinkurl{#1}}}
\newcommand{\production}{0}
\begin{document}

\title{\sffamily 
A Distributed Block-Split Gibbs Sampler with \\ Hypergraph Structure for High-Dimensional \\ Inverse Problems}
\author{
P.-A. Thouvenin$^{\ddagger}$, A. Repetti$^{\dagger\star}$, P. Chainais$^{\ddagger}$ \footnote{PC and PAT acknowledge support from the ANR project ``Chaire IA Sherlock'' ANR-20-CHIA-0031-01, the {\em programme d'investissements d'avenir} ANR-16-IDEX-0004 ULNE and Région HDF. AR acknowledges support from the Royal Society of Edinburgh.}
\\[5mm]
\small
\small $^\ddagger$ Universit{\'e} de Lille, CNRS, Centrale Lille, UMR 9189 CRIStAL, F-59000 Lille, France\\
\small $^\dagger$ School of Mathematics and Computer Sciences, Heriot-Watt University, Edinburgh, UK\\
\small $^\star$ School of Engineering and Physical Sciences, Heriot-Watt University, Edinburgh, UK\\
\small \email{\{pierre-antoine.thouvenin, pierre.chainais\}@centralelille.fr, a.repetti@hw.ac.uk}
}
\date{}

\maketitle
\thispagestyle{empty}

\vskip 8mm

\begin{abstract}
    Sampling-based algorithms are classical approaches to perform Bayesian inference in inverse problems. They provide estimators with the associated credibility intervals to quantify the uncertainty on the estimators. Although these methods hardly scale to high dimensional problems, they have recently been paired with optimization techniques, such as proximal and splitting approaches, to address this issue. Such approaches pave the way to distributed samplers, splitting computations to make inference more scalable and faster. We introduce a distributed Split Gibbs sampler (SGS) to efficiently solve such problems involving distributions with multiple smooth and non-smooth functions composed with linear operators. The proposed approach leverages a recent approximate augmentation technique reminiscent of primal-dual optimization methods. It is further combined with a block-coordinate approach to split the primal and dual variables into blocks, leading to a distributed block-coordinate SGS. The resulting algorithm exploits the hypergraph structure of the involved linear operators to efficiently distribute the variables over multiple workers under controlled communication costs. It accommodates several distributed architectures, such as the Single Program Multiple Data and client-server architectures. Experiments on a large image deblurring problem show the performance of the proposed approach to produce high quality estimates with credibility intervals in a small amount of time. Supplementary material to reproduce the experiments is available online.
\end{abstract}

{\it Keywords:} MCMC algorithm, Bayesian inference, block-coordinate algorithm, distributed architecture, high dimensional imaging inverse problems

% *************************************************************************
\section{Introduction}
\label{Sec:introduction}
% *************************************************************************
This work focuses on sampling from a generic distribution of the form
\begin{equation} \label{eq:dist-gen-lin}
    \dist{}{\xb} \propto \exp ( - h(\xb) -f(\xb) - g(\Db \xb) ),
\end{equation}
where $h \colon \HHH \to ]-\infty,+\infty]$ is a Lipschitz-differentiable function, $f \colon \HHH \to \RX$ and $g\colon \GGG \to ]-\infty,+\infty]$ are possibly non-smooth functions, and $\Db \colon \HHH \to \GGG$ is a linear operator.
Such a distribution typically arises as the posterior distribution involved in imaging inverse problems, from which Bayesian estimators need to be formed \cite{Pereyra2016_jstsp}.
In the remainder, we will consider that $\HHH = \RR^{\overline{N}}$ and $\GGG = \RR^{\overline{M}}$, where $\overline{N}$ and $\overline{M}$ are very large.
Sampling from~\eqref{eq:dist-gen-lin} is challenging due to (i) the presence of the composite function $g \circ \Db$ and (ii) the large dimension of $\HHH$ and $\GGG$. To address these issues, this paper proposes a distributed MCMC algorithm which (i) leverages the approximate augmentation AXDA~\cite{Vono_etal_2020} to decouple (\emph{split}) functions involved in~\eqref{eq:dist-gen-lin}, and (ii) exploits the structure of~\eqref{eq:dist-gen-lin} to design a distributed sampler reminiscent of block-coordinate approaches in optimization.
We briefly discuss recent splitting-based distributed samplers in~\Cref{Ssec:introduction:distribtued_sampling}, and highlight some of their limitations. Scalable splitting optimization approaches are reviewed in~\Cref{Ssec:introduction:optimization} to motivate this paper. The proposed approach, which takes further inspiration from the optimization literature, is outlined in~\Cref{Ssec:introduction:contributions}.

% -------------------------------------------------------------------------
\subsection{Sampling methods: splitting and distributed techniques}
\label{Ssec:introduction:distribtued_sampling}
% -------------------------------------------------------------------------

Markov chain Monte Carlo (MCMC) algorithms are generic approaches providing estimates with associated credibility intervals~\cite{Robert2010}. They aim to generate a Markov chain that yields samples from the target  distribution~\eqref{eq:dist-gen-lin} in the stationary regime. Nevertheless, they are often considered computationally too expensive to handle high dimensional problems, especially when composite functions are involved. This is often the case with inverse problems in image processing. 
Over the last decade, many authors have proposed more versatile and scalable optimization-inspired MCMC algorithms~\cite{Durmus2018, Pereyra2016,Salim_Richtarik_2020}. 
These approaches exploit quantities repeatedly used in optimization to efficiently explore high dimensional parameter spaces, most often gradients and proximal operators\footnote{The proximal operator of a proper, lower semi-continuous function $f \colon \eR^N \rightarrow ]-\infty,+\infty]$ is defined for any $\yb\in \eR^N$ by~\cite{Hiriart1993}: $\displaystyle \prox_f (\yb) = \argmind{\xb \in \eR^N}{\bigl\{ f(\xb) + \| \xb - \yb \|^2_2/2 \bigr\}}$.}.

A splitting approach based on an \emph{asymptotically exact data augmentation} (AXDA) has also recently been proposed by~\cite{Vono_etal_2019, Vono_etal_2020}. 
Inspired by splitting optimization approaches \cite{Komodakis2015}, 
AXDA introduces auxiliary variables to split composite distributions. The density~\eqref{eq:dist-gen-lin} is then approximated by
\begin{equation} \label{eq:dist-SPA}
    \dist{(\alpha, \beta)}{\xb,\zb,\ub}
    \propto \exp \big( - h(\xb) -f(\xb) -g(\zb) - \phi_\alpha(\Db\xb,\zb-\ub) - \psi_\beta(\ub) \big),
\end{equation}
where $\phi_{\alpha} \colon \GGG \times \GGG \to ]-\infty, +\infty]$, $\psi_{\beta} \colon \GGG \to ]-\infty, +\infty]$, 
$\alpha$ controls the discrepancy between $\Db\xb$ and $\zb - \ub$, and $\beta$ is an augmentation parameter. The variable $\zb$ is to be interpreted as a splitting variable, and $\ub$ is an additional augmentation parameter. The role of $\phi_{\alpha}$ is to strongly couple $\Db\xb$ and $\zb - \ub$, while $\psi_{\beta}$ keeps $\ub$ small enough. This latter parameter is not mandatory, but improves the mixing properties of the sampler by a further decoupling between $\Db\xb$ and $\zb$~\cite{Vono_etal_2019}.
For appropriate choices of $\phi_\alpha$ and $\psi_\beta$, the marginal distribution of $\xb$ with respect to~\eqref{eq:dist-SPA} converges to the target distribution \eqref{eq:dist-gen-lin} as $(\alpha,\beta) \to (0,0)$. A Gibbs sampler is proposed in~\cite{Vono_etal_2019, Vono_etal_2020} to draw samples from~\eqref{eq:dist-SPA}, referred to as the split Gibbs sampler (SGS). The additional cost of adding new variables is compensated by the benefit of this divide-to-conquer strategy. Further details on this splitting technique are provided in Section~\ref{Ssec:SPA-simple}. 
Designing efficient algorithms to handle distributions of the form~\eqref{eq:dist-gen-lin} becomes even more challenging when potentially many composite functions are considered.
An extension of SGS has been considered~\cite{Vono_etal_2019,Rendell2021} for distributions involving $\ncomposite \in \mathbb{N}^*$ composite terms, with a density of the form
\begin{equation} \label{eq:dist-gen-split}
    \pi(\xb) \propto \exp \big( -h(\xb) -f(\xb) - \sum_{\idcomposite=1}^\ncomposite g_\idcomposite(\Db_\idcomposite \xb) \big),
\end{equation}
where for every $\idcomposite \in \{1, \ldots, \ncomposite\}$, $\GGG_\idcomposite = \RR^{\overline{M}_\idcomposite}$, $\overline{M}= \sum_{\idcomposite=1}^\ncomposite \overline{M}_\idcomposite$, $\Db_\idcomposite \colon \HHH \to \GGG_\idcomposite$ and $g_\idcomposite \colon \GGG_\idcomposite \to \RX$.
Applying AXDA to~\eqref{eq:dist-gen-split} leads to an approximation with density
\begin{multline}\label{eq:dist-gen-split-SPA}
     \pi_{(\alphab, \betab)} \big( \xb,(\zb_\idcomposite, \ub_\idcomposite)_{1 \le \idcomposite \le \ncomposite} \big)\\
     \propto \exp \Big( - h(\xb) - f(\xb) 
     -\sum_{\idcomposite=1}^\ncomposite \Big( g_\idcomposite(\zb_\idcomposite) + \phi_{\idcomposite,\alpha_\idcomposite}(\Db_\idcomposite \xb, \zb_\idcomposite-\ub_\idcomposite) + \psi_{\idcomposite, \beta_\idcomposite}(\ub_\idcomposite ) \Big) \Big),
\end{multline}
where $(\zb_\idcomposite,\ub_\idcomposite)_{1 \le \idcomposite \le \ncomposite}$ are auxiliary variables and, for $\idcomposite \in \{1, \ldots, \ncomposite\}$, $\phi_{\idcomposite, \alpha_\idcomposite} \colon \GGG_\idcomposite \times \GGG_\idcomposite \to \RX$ and $\psi_{\idcomposite, \beta_\idcomposite} \colon \GGG_\idcomposite \to \RX$. The variables $(\zb_\idcomposite,\ub_\idcomposite)_{1 \le \idcomposite \le \ncomposite}$ are conditionally independent, paving the way to a distributed implementation on a client-server architecture.

Only a few distributed samplers have been proposed in the literature~\cite{Rendell2021, Vono_etal_2019, Vono_etal_2020}.
However, these samplers only focus on distributing the splitting variables associated with the composite functions $g_\idcomposite \circ \Db_\idcomposite$ from \eqref{eq:dist-gen-split}, without decomposing the global variable $\xb$ into blocks. 
In particular, \cite{Rendell2021} propose a consensus-based approximate posterior distribution, addressed with a distributed Metropolis-within-Gibbs sampler relying on a \emph{client-server} architecture. The sampler exploits the conditional independence between the splitting variables to parallelize computations. 
This is especially relevant for data-distributed applications, in which a shared parameter value needs to be inferred from a dataset distributed over multiple workers, e.g., for distributed logistic regression.
However, this setting has several drawbacks when turning to high dimensional problems. 
First, the consensus constraint necessitates to duplicate the variables of interest on the different workers.
Second, the client-server architecture may induce communication bottlenecks, as all (\emph{clients}) workers need to communicate with the server.
It also exhibits limitations in terms of distribution flexibility, since it separates composite functions only, without splitting the high dimensional global variable of interest $\xb$ into blocks.
In high dimensions, it may be of interest to split the variable $\xb$ itself, not only the data or the composite functions.

To summarize, splitting techniques have been introduced in sampling methods. They have permitted to handle multiple composite functions in parallel. Distributed versions have also been proposed in the literature, but restricted to client-server approaches. The client-server architecture can be critical in a high dimensional setting. In addition, none of these methods considered splitting the high-dimensional variable of interest $\xb$ into blocks. Block-coordinate approaches can be necessary in practice to handle high-dimensional problems, e.g., when $\xb$ is an image with more than $10^6$ unknown parameters. Since the distributed split Gibbs sampler proposed in this work are essentially inspired by optimization approaches, 
the next paragraph reviews methods from the optimization literature that combine all at once composite-function splitting, variable splitting (i.e., block-coordinate approaches) and distributed techniques.

% -------------------------------------------------------------------------
\subsection{Splitting, distributed and block-coordinate methods in optimization}
\label{Ssec:introduction:optimization}
% -------------------------------------------------------------------------

Optimization-based inference consists in estimating a mode of the distribution~\eqref{eq:dist-gen-lin} (e.g., the maximum \emph{a posteriori} (MAP) estimator), defined as a solution to
\begin{equation}	\label{eq:min-gen-lin}
	\minimize{\xb \in \HHH} h(\xb) +f(\xb) + g(\Db \xb).
\end{equation}
Optimization algorithms aim to build sequences that asymptotically converges to a solution to problem~\eqref{eq:min-gen-lin}.
Problem~\eqref{eq:min-gen-lin} can be efficiently solved with proximal primal-dual methods \cite{Komodakis2015, Condat2013, Vu2013}, that can be seen as the optimization counterpart of the splitting approaches discussed in Section~\ref{Ssec:introduction:distribtued_sampling}. 
Similarly to AXDA, primal-dual methods rely on an auxiliary variable $\zb \in \GGG$ associated with the  composite function $g\circ \Db$. The variable $\xb \in \HHH$ is referred to as the \emph{primal variable} while $\zb$ is called the \emph{dual variable}, since it is associated with the dual space $\GGG$ induced by the operator $\Db$.
It is worth noticing that the SGS approach developed by \cite{Vono_etal_2019} was directly inspired by such an optimization splitting technique.
Splitting proximal algorithms benefit from many acceleration techniques (e.g., inertia \cite{Beck2009, Nesterov1983}, preconditioning \cite{Chouzenoux2014, Combettes_Vu_2014}), they are versatile, and scalable. In particular, they are highly parallelizable, and can be efficiently distributed to split the computational cost per iteration, under well-established theoretical guarantees~\cite{Bauschke2017CAMOTHS, Combettes2011, Komodakis2015}.

Designing efficient algorithms to handle distributions of the form~\eqref{eq:dist-gen-split} becomes more challenging when multiple composite functions are involved, especially when the dimensions of $\xb$ and $\yb$ increase. In this context, the counterpart of problem~\eqref{eq:dist-gen-split} is given by
\begin{equation}    \label{eq:g-multi}
\minimize{\xb \in \HHH} h(\xb) +f(\xb) + \sum_{\idcomposite=1}^\ncomposite g_\idcomposite(\Db_\idcomposite \xb).
\end{equation}
Primal-dual proximal algorithms~\cite{Condat2013, Vu2013, Komodakis2015, Pesquet2015, Combettes2018} permit to handle composite terms in~\eqref{eq:g-multi} in parallel in the dual domains induced by operators $(\Db_\idcomposite)_{1 \le \idcomposite \le \ncomposite}$. This is possible since dual variables can be distributed on multiple workers. 

\begin{figure}[!t]
    \centering
    \includegraphics[width=7cm]{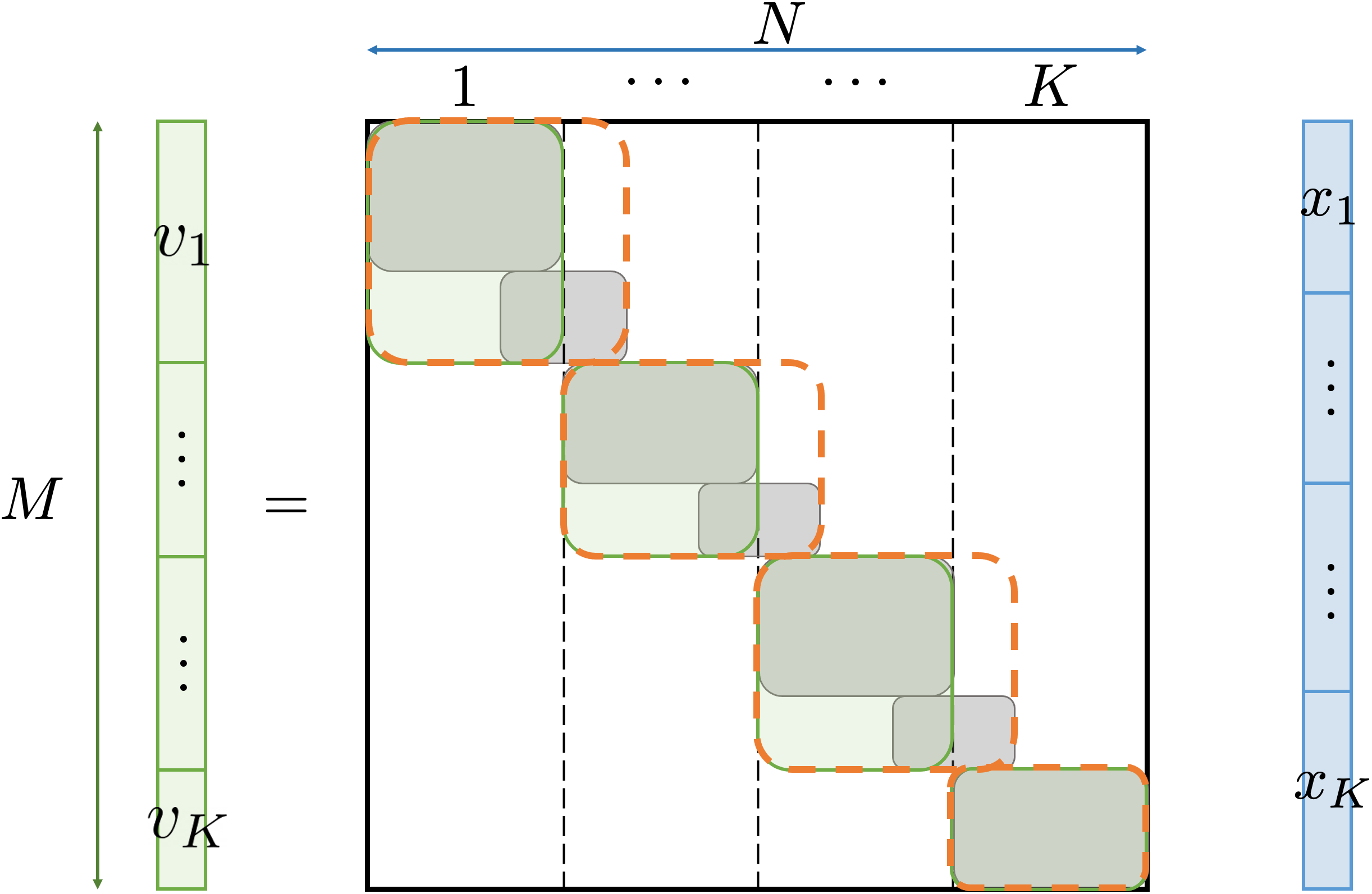}
    \vspace*{-0.3cm}
    \caption{\label{fig:blocksparse}\small
        Example of the block-sparse structure of the matrix $\Db$ involved in~\eqref{eq:dist-gen-lin}. 
        The columns of $\Db$ are split into $K$ contiguous blocks (black dashed lines), with a small overlap compared to the size of the blocks. 
        The orange dashed rectangles highlight the subparts of $\Db$ implemented on each worker $k$. These act on a parameter block stored on worker $k$, and a few parameters stored on worker $k+1$.
    }
\end{figure}

Further, to address problems with high dimensional variable $\xb$, a usual strategy in optimization consists in splitting $\xb$ into blocks $(\xb_k)_{1 \le k \le K}$, and either alternate between the blocks \cite{Bolte2014, Chouzenoux2016, Tseng92, TsengBCD2001}, or distribute the blocks over multiple workers \cite{Combettes2015, Pesquet2015}. 
In particular, \cite{Pesquet2015} combine such block-coordinate approaches with primal-dual splitting techniques to parallelize and distribute both the primal and dual variables. 
Then, the associated minimization problem is of the form
\begin{equation}	\label{eq:min-gen-multi}
	\minimize{\xb =(\xb_k)_{1\le k \le K} \in \HHH} \sum_{k=1}^K h_k(\xb_k) +f_k(\xb_k) + \sum_{\idcomposite=1}^\ncomposite g_\idcomposite(\sum_{k=1}^K \Db_{\idcomposite,k} \xb_k),
\end{equation}
where, for every $k\in \{1, \ldots, K\}$, $h_k$ and $f_k$ only act on the $k$-th block of the variable $\xb \in \HHH$.
Such a formulation is of particular interest when considering block-sparse matrices $\Db_\idcomposite$, as illustrated in Figure~\ref{fig:blocksparse}.
Finally, \cite{Pesquet2015} also developed asynchronous distributed algorithms over hypergraphs\footnote{Hypergraphs generalize structures of graphs, where edges can connect multiple nodes (i.e., variables) together, hence generalizing communications between variables.} structures by combining the resulting block-coordinate primal-dual algorithms with consensus constraints that impose all $(\xb_k)_{1 \le k \le K}$ to be equal. This strategy provides a high flexibility in the choice of the distribution architecture. Section~\ref{Ssec:dist:hypergraph} will detail the interest of hypergraphs.
Unfortunately, these algorithms only provide a point estimate, without additional information. 
In absence of ground truth, these approaches do not directly quantify the uncertainty over the estimate. 
The proposed approach will focus on drawing samples from the distribution corresponding to~\eqref{eq:min-gen-multi} using a distributed architecture.

% -------------------------------------------------------------------------
\subsection{Proposed distributed block-coordinate SGS}
\label{Ssec:introduction:contributions}
% -------------------------------------------------------------------------

This work focuses on cases where $\xb$ belongs to a high dimensional space, for instance for $\overline{N}$ larger than $10^6$. We introduce a distributed block-coordinate SGS to sample from~\eqref{eq:dist-gen-split} by splitting the global variable of interest $\xb$ into blocks $(\xb_k)_{1 \le k \le K}$.
This approach will also be able to handle multiple composite functions as in~\eqref{eq:min-gen-multi}. 
To this aim, we pair SGS with 
% advanced 
optimization-inspired MCMC transition kernels \cite{Durmus2018, Pereyra2016, Salim_Richtarik_2020} to sample from conditional distributions that would otherwise be either intractable or challenging to handle in a distributed setting.

Some steps of the MCMC sampler will correspond to term-wise operations, which are easy to distribute. Some others will involve the linear operator $\Db$ and its adjoint $\Db^*$, calling for communications. This work is aimed at designing an efficient distributed version of the corresponding serial sampler, in particular by dealing with terms that involve $\Db$ or $\Db^*$.

The distributed structure associates each block $\xb_k$ with a worker $k$. 
Communications between the workers are governed by the structure of the linear operators $(\Db_\idcomposite)_{1 \le \idcomposite \le \ncomposite}$. Precisely, the notion of hypergraph will be used to deal with this aspect. 
\Cref{Sec:Model} is dedicated to technical details.
For the sake of clarity and simplicity of the presentation, the main part of the paper presents the proposed distributed algorithm {with $\ncomposite = 1$ in~\eqref{eq:dist-gen-split}, corresponding to the target} distribution~\eqref{eq:dist-gen-lin}. 
The appendix extends the proposed approach to distributions of the form~\eqref{eq:dist-gen-split} with $\ncomposite > 1$. 
We also focus on the case with augmentation variables as in~\eqref{eq:dist-SPA} to ensure better mixing properties~\cite{Vono_etal_2019}. Note that our approach can be used without this optional variable.

The proposed algorithm accommodates several distributed architectures, and is particularly suitable for a Single Program Multiple Data (SPMD) architecture~\cite{Darema2001}. In contrast with a client-server configuration, all the workers involved in an SPMD architecture execute similar tasks on a subpart of all the variables, with no central server. An SPMD architecture can drastically reduce the communication costs compared to a client-server architecture when a small number of workers is involved in each communication channel. This is especially the case when the structure of $\Db$ induces \emph{localized} couplings between parameters. Figure~\ref{fig:blocksparse} illustrates the particular case of a block-sparse matrix $\Db$: localized interactions between parameters imply a block-sparse structure which induces a hypergraph structure, see \Cref{Sec:Model} for details. This case is often encountered in practice, e.g., for inverse problems in imaging, with applications such as image deconvolution or inpainting, or when considering models based on a TV norm regularization~\cite{Rudin1992}. 
When multiple composite terms are considered as in~\eqref{eq:min-gen-multi}, a parallel implementation based on a client-server architecture would be possible by exploiting conditional independence between blocks of variables. However this configuration may suffer from communication bottlenecks, with a number of workers limited by the number of conditionally independent blocks of variables in the model. In contrast, the SPMD approach permits to use a larger number of workers by exploiting the structure of the hypergraphs induced by the structure of operators $\Db_\idcomposite$ involved.

The remainder of the paper is organized as follows. The AXDA approach and the SGS algorithm~\cite{Vono_etal_2019, Vono_etal_2020} are recalled in~\Cref{Sec:SOA}. 
An overview of the proposed distributed SGS is given in~\Cref{Sec:dist}. This section further outlines how limitations of the client-server architecture can be addressed by a fully decentralized SPMD architecture. The advantages offered by the latter are specifically emphasized for distributions defined on a high-dimensional space, e.g., in imaging inverse problems.
The proposed hypergraph model is introduced in \Cref{Sec:Model}, and the associated distributed block-coordinate SGS is given in \Cref{Sec:dist-single-algo}. \Cref{Sec:application} describes an SPMD implementation of the proposed method for a large scale image deconvolution problem. Conclusion and perspectives are given in~\Cref{Sec:conclusion}. 
Eventually, the proposed method is extended to the general case of distributions involving multiple composite terms in
\if0\production
{~\Cref{appendix:SPA-PSGLA-multi}.} 
\else
{~the online appendix.} 
\fi

% *************************************************************************
\section{AXDA approach and the split Gibbs samplers}
\label{Sec:SOA}
% *************************************************************************
In this section, we summarize the approximation results at the basis of the AXDA approach~\cite{Vono_etal_2019}. We also describe a Gibbs sampler to approximately draw samples from~\eqref{eq:dist-gen-lin}, referred to as the Split Gibbs sampler (SGS)~\cite{Vono_etal_2020}.

% -------------------------------------------------------------------------
\subsection{AXDA splitting approach}
\label{Ssec:SPA-simple}
% -------------------------------------------------------------------------

A Gibbs sampler can be used to draw samples from \eqref{eq:dist-SPA}, approximating the target distribution~\eqref{eq:dist-gen-lin}.
The sampler successively draws one sample from each conditional distribution

\begingroup
\allowdisplaybreaks
\begin{align} 
    \distc{\alpha}{\xb}{\zb,\ub} 
    & \propto \exp \big( -h(\xb) -f(\xb)  - \phi_\alpha(\Db\xb,\zb-\ub) \big) ,    \label{eq:dist-SPA-cond:x}\\
    \distc{\alpha}{\zb }{ \vb,\ub }  
    & \propto \exp \big( -g(\zb) - \phi_\alpha(\vb,\zb-\ub)  \big) ,    \label{eq:dist-SPA-cond:z}\\
    \distc{(\alpha,\beta)}{\ub }{ \vb,\zb } 
    & \propto \exp \big( - \phi_\alpha(\vb,\zb-\ub) - \psi_\beta(\ub) \big) , \label{eq:dist-SPA-cond:u}
\end{align}
\endgroup
where $\vb = \Db \xb$.
The SGS associated with~\eqref{eq:dist-SPA-cond:x}--\eqref{eq:dist-SPA-cond:u} is reported in~Algorithm~\ref{algo:SPA-simple}.

\begin{algorithm}[htbp]
    \small
    \KwIn{$\zb^{(0)} \in \GGG$, $\ub^{(0)} \in \GGG$, $(\alpha, \beta) \in ]0,+\infty[^2$}
    \For{$t = 0$ \KwTo $T$}{
        $\xb^{(t)} \sim \distc{\alpha}{\xb }{ \zb^{(t)},\ub^{(t)} }$, \\
        $\vb^{(t)} = \Db \xb^{(t)}$, \\
        $\zb^{(t+1)} \sim \distc{\alpha}{\zb }{ \vb^{(t)}, \ub^{(t)} }$, \\
        $\ub^{(t+1)} \sim \distc{(\alpha,\beta)}{\ub}{ \vb^{(t)}, \zb^{(t+1)}}$
    }
    \KwOut{$(\xb^{(t)})_{1 \le t \le T}$, $(\zb^{(t)})_{1 \le t \le T}$, $(\ub^{(t)})_{1 \le t \le T}$}
    \caption{\small Generic Split Gibbs Sampler (SGS)~\cite{Vono_etal_2019}.}
    \label{algo:SPA-simple}
\end{algorithm}

Under technical conditions on the functions $\phi_\alpha$ and $\psi_\beta$ in \eqref{eq:dist-SPA}, \cite{Vono_etal_2019} showed that the marginal distribution of $\xb$ can be made arbitrarily close to the original distribution $\dist{}{\xb}$ given in~\eqref{eq:dist-gen-lin}, typically when $(\alpha, \beta)\to (0,0)$. This result is summarized in the following proposition.
\begin{proposition} \label{Prop:cv-SPA-simple}\cite[Thm. 1 and 2]{Vono_etal_2019}
Let $(\alpha ,\beta) \in \RPP^2$, $\phi_{\alpha} \colon \GGG \times \GGG \to ]-\infty, +\infty]$ and $\psi_{\beta} \colon \GGG \to ]-\infty, +\infty]$. 
Assume that
\begin{equation} \label{prop:cv-spa-simple-cond1}
    (\forall \xb \in \HHH) \quad \lim_{\alpha \rightarrow 0} \frac{\exp\big(- \phi_{\alpha}(\Db \xb, \zb)\big)}{\int_{\GGG} \exp\big(- \phi_{\alpha}( \Db \xb, \zb)\big) d\zb} = \delta_{\xb} (\zb),
\end{equation}
and that, there exists $\eta_{\alpha,\beta} \in \RPP$,
\begin{equation}\label{prop:cv-spa-simple-cond2}
    (\forall (\xb, \zb) \in \HHH \times \GGG) \quad 
    \int_{\GGG} \exp\big(- \phi_{\alpha}(\Db \xb, \zb-\ub) - \psi_{\beta}(\ub)  \big) d\ub 
    \propto \exp\big(-\phi_{\eta_{\alpha,\beta}}(\Db \xb, \zb)\big).
\end{equation}
Let, for $\xb \in \HHH$, $\pi_{\eta_{\alpha,\beta}}(\xb) = \int_{\GGG \times \GGG}  \pi_{\alpha, \beta} (\xb, \zb, \ub) d\ub d\zb$.
Then, $\|\pi - \pi_{\eta_{\alpha, \beta}}\|_{\text{TV}} \to 0$ when $\eta_{\alpha,\beta} \to 0$. 
\end{proposition}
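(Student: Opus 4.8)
The strategy is to interpolate between $\pi$ and $\pi_{\eta_{\alpha,\beta}}$ through the intermediate marginal obtained by integrating out only $\ub$. Write
\[
\pi_{\eta_{\alpha,\beta}}(\xb,\zb) \propto \exp\big(-h(\xb)-f(\xb)-g(\zb)-\phi_{\eta_{\alpha,\beta}}(\Db\xb,\zb)\big),
\]
which is justified by hypothesis~\eqref{prop:cv-spa-simple-cond2}: integrating~\eqref{eq:dist-SPA} over $\ub$ replaces the pair $\phi_\alpha(\Db\xb,\zb-\ub)+\psi_\beta(\ub)$ by $\phi_{\eta_{\alpha,\beta}}(\Db\xb,\zb)$ up to a multiplicative constant independent of $(\xb,\zb)$. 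Hence $\pi_{\eta_{\alpha,\beta}}(\xb)=\int_\GGG \pi_{\eta_{\alpha,\beta}}(\xb,\zb)\,d\zb$, and the problem reduces to showing that this three-term-plus-coupling model, with a single parameter $\eta=\eta_{\alpha,\beta}$, has $\xb$-marginal converging to $\pi$ in total variation as $\eta\to 0$. This is precisely the situation covered by the single-splitting convergence result of~\cite{Vono_etal_2019} (Thm. 1), invoked with the coupling kernel $\phi_\eta$: condition~\eqref{prop:cv-spa-simple-cond1} states exactly that the normalized kernel $\zb\mapsto \exp(-\phi_\eta(\Db\xb,\zb))/\!\int\exp(-\phi_\eta(\Db\xb,\zb))d\zb$ converges weakly to the Dirac mass $\delta_{\Db\xb}$, so that integrating $g(\zb)$ against it recovers $g(\Db\xb)$ in the limit.

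Concretely, I would proceed as follows. First, normalize: let $Z_\eta(\xb)=\int_\GGG\exp(-\phi_\eta(\Db\xb,\zb))\,d\zb$ and write
\[
\pi_{\eta}(\xb) = \frac{1}{C_\eta}\,\exp\big(-h(\xb)-f(\xb)\big)\, Z_\eta(\xb)\int_\GGG \frac{\exp(-\phi_\eta(\Db\xb,\zb))}{Z_\eta(\xb)}\,e^{-g(\zb)}\,d\zb,
\]
with $C_\eta$ the global normalizing constant. Second, use~\eqref{prop:cv-spa-simple-cond1} together with (lower semi)continuity/regularity assumptions on $g$ from~\cite{Vono_etal_2019} to pass to the limit inside the inner integral, obtaining the pointwise limit $\exp(-h(\xb)-f(\xb)-g(\Db\xb))$ up to the factor $Z_\eta(\xb)$; one then argues that the relevant $\xb$-dependence of $Z_\eta$ is asymptotically constant (or is absorbed into $C_\eta$), which is where the precise technical conditions of~\cite{Vono_etal_2019} enter. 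Third, upgrade pointwise convergence of the unnormalized densities to $\|\cdot\|_{\mathrm{TV}}$ convergence of the normalized ones via Scheffé's lemma, after establishing a dominating function to justify $C_\eta\to C$ and the integrability needed — again, this is exactly the machinery of the cited theorems. Finally, combine with the trivial identity $\int_\GGG\pi_\eta(\xb,\zb)\,d\zb=\pi_\eta(\xb)$ and the relation $\eta=\eta_{\alpha,\beta}\to0$ to conclude $\|\pi-\pi_{\eta_{\alpha,\beta}}\|_{\mathrm{TV}}\to0$.

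The main obstacle is not the interpolation idea — which is essentially bookkeeping with~\eqref{prop:cv-spa-simple-cond2} — but verifying that the hypotheses of~\cite[Thm. 1]{Vono_etal_2019} are genuinely met by the composed kernel $\phi_\eta$, in particular the uniform integrability/domination needed to interchange limit and integral and to control the normalizing constant $Z_\eta(\xb)$, which may depend on $\xb$ through $\Db\xb$ in a way that does not vanish in the limit unless the chosen family $\phi_\alpha$ is translation-type in its second argument. Since Proposition~\ref{Prop:cv-SPA-simple} is stated as a direct citation of~\cite[Thms. 1 and 2]{Vono_etal_2019}, the cleanest route is to observe that~\eqref{prop:cv-spa-simple-cond1}–\eqref{prop:cv-spa-simple-cond2} are precisely the hypotheses under which those theorems are proved, so that the proof amounts to (i) the two-line reduction via~\eqref{prop:cv-spa-simple-cond2} collapsing the $(\alpha,\beta)$ model to a one-parameter $\eta$ model, and (ii) a direct application of the cited single-splitting TV-convergence theorem to that reduced model. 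I would present it in that order, flagging that all analytic estimates are borrowed verbatim from~\cite{Vono_etal_2019}.
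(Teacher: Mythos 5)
The paper offers no proof of this proposition: it is imported verbatim from \cite[Thms.~1 and 2]{Vono_etal_2019}, and your outline --- collapsing the $(\alpha,\beta)$ model to the one-parameter $\eta_{\alpha,\beta}$ model via \eqref{prop:cv-spa-simple-cond2}, then applying the single-splitting TV-convergence result whose hypothesis is exactly \eqref{prop:cv-spa-simple-cond1} --- is precisely the structure of the cited argument. The technical points you flag (domination to justify interchanging limit and integral, control of the $\xb$-dependence of the normalizing constant, and Scheff\'e's lemma to upgrade pointwise convergence of densities to total-variation convergence) are exactly the content delegated to the reference, so your proposal is consistent with the paper's treatment.
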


\Cref{Prop:cv-SPA-simple} ensures that the marginal distribution of $\xb$ with respect to $\pi_{(\alpha, \beta)}$ converges to the target distribution $\pi$ given in~\eqref{eq:dist-gen-lin}, provided conditions~\eqref{prop:cv-spa-simple-cond1} and \eqref{prop:cv-spa-simple-cond2} hold. 
In particular, in \cite{Vono_etal_2019}, these conditions are shown to be satisfied for $\eta_{\alpha, \beta}^2 = \alpha^2 + \beta^2$, with
\begin{equation}    \label{eq:phi-psi-def}
    \begin{cases}
    \phi_\alpha(\Db \xb, \zb-\ub ) = \frac{1}{2 \alpha^2} \| \Db\xb - (\zb-\ub) \|^2, \\
    \psi_\beta(\ub) = \frac{1}{2 \beta^2} \| \ub \|^2.
    \end{cases}
\end{equation}
In the particular case when $f\equiv 0$, stronger results can be found in \cite[Theorem 2]{Vono_etal_2020}, including theoretical guarantees on the convergence rate.

% -------------------------------------------------------------------------
\subsection{PSGLA within SGS}
\label{Ssec:SPA-PSGLA}
% -------------------------------------------------------------------------

Drawing samples directly from the conditional distributions \eqref{eq:dist-SPA-cond:x}--\eqref{eq:dist-SPA-cond:u} can still be difficult. This is the case for the application described in \Cref{Sec:application}. 
To overcome this issue, samples can be drawn using appropriate transition kernels. 

To sample from~\eqref{eq:dist-SPA}, Metropolis-Hastings transition kernels are classical choices to draw samples from the conditional distributions $\distc{\alpha}{\xb}{\zb,\ub}$, $\distc{\alpha}{\zb }{ \xb, \ub }$ and $\distc{(\alpha,\beta)}{\ub}{ \xb, \zb }$, leading to a conditional Metropolis-Hastings sampler~\cite{Jones2014}. Appropriate proposals include Langevin-based kernels, such as the Moreau-Yosida unadjusted Langevin algorithm (MYULA)~\cite{Durmus2018} and the proximal stochastic gradient Langevin algorithm (PSGLA) \cite{Salim_Richtarik_2020}. These kernels can handle differentiable and non-differentiable potential functions simultaneously, and have been shown suitable to address high dimensional problems. To avoid the extra cost of the accept-reject step, approximate sampling can be considered while maintaining a good approximation of the target distribution by using unadjusted kernels, as in~\cite{Durmus2018}.
This approach is adopted in the following. Deriving non-asymptotic convergence bounds and analyzing the bias between the target distribution and~\eqref{eq:dist-SPA} is however beyond the scope of this paper.

The choice of suitable transition kernels will be instrumental to design a distributed Gibbs sampler when direct sampling from the conditional distributions is difficult. 
Technical assumptions on the functions in~\eqref{eq:dist-SPA} will ensure that the transition kernels are amenable to a distributed implementation, see \Cref{Sec:Model}. This implementation will be adapted to the specific structure of the distributions $\distc{\alpha}{\xb}{ \zb,\ub}$, $\distc{\alpha}{\zb}{ \xb,\ub}$ and $\distc{\alpha,\beta}{\ub}{ \xb,\zb}$.
In particular, for the proposed distributed SGS, we consider the case where any appropriate transition kernel can be used to sample $(\zb, \ub)$. The parameter $\xb$ is approximately sampled using a PSGLA transition to avoid the extra cost induced by a Metropolis correction step. To this aim, we assume that $h$ is $\lambda_h$-Lipschitz-differentiable. 
For simplicity, $\phi_\alpha$ and $\psi_\beta$ are also taken as $\ell_2$ norms as in~\eqref{eq:phi-psi-def}, so that \eqref{prop:cv-spa-simple-cond1} and \eqref{prop:cv-spa-simple-cond2} are satisfied. Then $\ub$ can be directly sampled from its Gaussian conditional distribution. These blanket assumptions, summarized in~\Cref{ass:psgla-spa}, will be adopted in the following.
\begin{assumption} \label{ass:psgla-spa}\
    \begin{enumerate}
        \item\label{ass:psgla-spa:phi_psi}
            $\phi_\alpha$ and $\psi_\beta$ are given by
            \begin{align}
            (\forall (\vb, \ub) \in \GGG^2)\quad
            &  \phi_\alpha(\vb, \ub) = \frac{1}{2\alpha^2} \| \vb - \ub \|^2, \label{eq:phi} \\
            &   \psi_\beta(\ub) = \frac{1}{2\beta^2} \| \ub \|^2.  \label{eq:psi}
            \end{align}
        \item\label{ass:psgla-spa:smoothness}
            $h \colon \HHH \to \RR$ is $\lambda_h$-Lipschitz differentiable, with $\lambda_h>0$.
    \end{enumerate}
\end{assumption}

%%% ALGORITHM 2 %%%%
\begin{algorithm}[htbp]
    \small
    \KwIn{$\xb^{(0)} \in \HHH$, $(\zb^{(0)}, \ub^{(0)}) \in \GGG^2$, $(\alpha, \beta) \in ]0,+\infty[^2$, $\gamma \in \big]0, \big(\lambda_h + \| \Db \|^2/\alpha^2\big)^{-1} \big[$}

    $\vb^{(0)} = \Db \xb^{(0)}$; \\

    \For{$t = 0$ \KwTo $T$}{
        \tcp{Draw $\xb^{(t+1)}$ with a PSGLA~\cite{Salim_Richtarik_2020} kernel}
        $\xb^{(t+1)} = \prox_{\gamma f} \bigg( \xb^{(t)} - \gamma \nabla h(\xb^{(t)}) - \gamma \Db^* \Big( \nabla \phi_\alpha( \cdot , \zb^{(t)}-\ub^{(t)}) (\vb^{(t)}) \Big) + \sqrt{2\gamma} \, \wb^{(t)} \bigg)$; \\
        $\vb^{(t+1)} = \Db \xb^{(t+1)}$; \\
        \tcp{Draw $\zb^{(t+1)}$ and $\ub^{(t+1)}$ from their conditional distribution}
        $\zb^{(t+1)} \sim \distc{\alpha}{\zb}{\vb^{(t+1)}, \ub^{(t)}}$; \\
        $\ub^{(t+1)} \sim \Nc \Big( \frac{\beta^2}{\alpha^2+\beta^2} (\zb^{(t+1)} - \vb^{(t+1)}), \frac{\alpha^2+\beta^2}{\alpha^2 \beta^2} \Idb \Big)$;
    }
    \KwOut{$(\xb^{(t)})_{1 \le t \le T}$, $(\zb^{(t)})_{1 \le t \le T}$, $(\ub^{(t)})_{1 \le t \le T}$}
    \caption{\small Proposed sampler (PSGLA within SGS).}
    \label{algo:SPA-simple-sampling}
\end{algorithm}

%%%%%%%%%%%%%%%%%%%%%%

\Cref{ass:psgla-spa}\ref{ass:psgla-spa:phi_psi} ensures that \Cref{Prop:cv-SPA-simple} is valid, see~\eqref{eq:phi-psi-def}. \Cref{ass:psgla-spa}\ref{ass:psgla-spa:smoothness} is necessary to use the PSGLA transition kernel in the proposed SGS detailed in Algorithm~\ref{algo:SPA-simple-sampling},
where $\Db^*$ denotes the adjoint of $\Db$, and $(\wb^{(t)})_{1 \le t \le T}$ is a sequence of independent and identically distributed (i.i.d) standard Gaussian random variables in~$\HHH $.

% *************************************************************************
\section{Proposed distributed SGS in a nutshell}
\label{Sec:dist}
% *************************************************************************
For high-dimensional problems, every step of Algorithm~\ref{algo:SPA-simple} is computationally expensive, calling for a distributed algorithm. Steps 5 and 6 are easy to distribute since they correspond to term-wise operations. In contrast, steps 1, 3 and 4 require communications due to the presence of the linear operator $\Db$ and its adjoint $\Db^*$. This work is aimed at designing an efficient distributed version of Algorithm~\ref{algo:SPA-simple-sampling} by exploiting the hypergraph structure that can emerge from the topology of $\Db$. 
This section describes the proposed approach \textit{in a nutshell}, reducing technicality. 
Section~\ref{Sec:Model} introduces the model to describe the hypergraph structure and the resulting distribution strategy. Section~\ref{Sec:dist-single-algo}  details the distributed version of the proposed Algorithm~\ref{algo:single-synch}.

In \Cref{Ssec:dist:client-server-vs-spmd}, we first explain the advantages offered by an SPMD distributed implementation over a client-server approach to address high dimensional problems.
To enable this SPMD implementation, we rely on hypergraph structures that facilitate flexible communications. These hypergraph structures are reminiscent of the topology of $\Db$, as described in \Cref{Ssec:dist:hypergraph}.  
Practical considerations for an efficient distributed implementation are given in~\Cref{Sssec:Model:cond-eff-dist}. 
We refer the reader to \Cref{Sec:Model} and \Cref{Sec:dist-single-algo} for a rigorous description of the proposed distributed algorithm. For an example application of the proposed approach to an imaging inverse problem, the reader can jump to \Cref{Sec:application}.

% -------------------------------------------------------------------------
\subsection{Splitting-based samplers} 
\label{Ssec:dist:client-server-vs-spmd}
% -------------------------------------------------------------------------

Splitting-based samplers from the literature~\cite{Vono_etal_2020,Rendell2021} can benefit from a distributed implementation on a client-server architecture. These algorithms can accommodate densities of the form~\eqref{eq:dist-gen-split-SPA} with $\ncomposite > 1$ linear operators. After splitting, a collection of $K - 1 \leq \ncomposite$ workers (i.e., \emph{clients}) handle computations associated with groups of variables conditionally independent from one worker to another. The server handles operations on the full variable $\xb$, possibly duplicated across all the workers. This configuration can suffer from several drawbacks listed below.

\begin{enumerate}
    \item \textbf{Shared variable and communication costs} :
    The computation of $(\Db_\idcomposite\xb)_{1 \leq \idcomposite \leq \ncomposite}$ on the server can be expensive. The shared parameter $\xb$ needs to be broadcast to all the clients at each iteration. Communications can significantly increase for applications defined on a high dimensional parameter space, such as inverse problems in imaging applications.
    \item \textbf{Limited number of workers} :
    The number of workers $K$ is restricted by the number of composite functions $\ncomposite$ since $K-1 \leq \ncomposite$, typically with $1 \leq \ncomposite \leq 5$ for most imaging inverse problems. A client-server approach cannot be applied to distributions with $\ncomposite = 1$ such as~\eqref{eq:dist-gen-lin}.
    \item \textbf{Load balancing}: 
    Computing costs induced by the operators $(\Db_\idcomposite)_{1 \leq \idcomposite \leq \ncomposite}$ can be hard to balance over the workers. Prohibitive idle time due to synchronization with \emph{stragglers} may thus drastically limit the parallel efficiency of the algorithm.
\end{enumerate}

To address these issues, a possible approach consists in exploiting the hypergraph structure of the operator $\Db$, as outlined in \Cref{Ssec:dist:hypergraph}.
Such an approach enables the use of an SPMD architecture~\cite{Darema2001}, so that all the workers can conduct locally the same tasks over a subpart of the variables only. In this configuration, load balancing is in general easier to handle, with no restriction on the number of workers $K$ imposed by the model and lighter communications as explained in \Cref{Ssec:dist:hypergraph,Sssec:Model:cond-eff-dist}. 
\Cref{Ssec:algo:dist-spmd} explains in more details how the SPMD architecture takes advantage of the hypergraph structure of the algorithm detailed in \Cref{Sec:Model,Sec:dist-single-algo}.

% -------------------------------------------------------------------------
\subsection{A hypergraph structure to better communicate} 
\label{Ssec:dist:hypergraph}
% -------------------------------------------------------------------------

We consider a localized linear operator $\Db$ in~\eqref{eq:dist-gen-lin}, such that couplings between latent parameters are localized. This is for instance the case when $\Db$ is block-sparse, see \Cref{fig:blocksparse}. 
Such a structure is not strictly necessary to design the proposed distributed algorithm but will be instrumental in practice to reduce the communications between workers for an efficient implementation of the algorithm. 
Practical uses of interest for block-sparse structures include imaging inverse problems, e.g., where $\Db$ corresponds to a deconvolution or inpainting measurement operator. Block-sparse operators also appear when considering models such as the TV norm.

The structure of the matrix $\Db$ can be described with a binary matrix that can be interpreted as the adjacency matrix of a hypergraph. 
% Def hypergraph
A hypergraph is a generalization of a graph, in which an edge can join any finite number of vertices. Formally, an undirected hypergraph $\Hc$ is a pair $(\xb, \eb)$ made of vertices $\xb = (x_n)_{1 \le n \le N}$ and hyperedges $\eb = (e_m)_{1\le m \le M}$, where each hyperedge is a set of connected vertices, that is a subset of the $(x_n)_{1 \le n \le N}$.
The main idea is to take benefit from this structure to distribute the computation of various quantities in Algorithm~\ref{algo:SPA-simple-sampling} over $K$ workers.
% to sample from~\eqref{eq:dist-SPA}. 
It appears that the product $\Db\xb$ is of special interest.

%% Hypergraph <-> distributed computing
\Cref{fig:blocksparse} illustrates the proposed approach.
The variable $\xb$ is divided into $K$ blocks $(\xb_k)_{1 \le k \le K}$, with each block $\xb_k$ assigned to a single worker.
The hyperedges will characterize the necessary communications between the workers. A worker $k \in \{1, \ldots, K\}$ stores $\xb_k$ but it may as well need  access to some coefficients of $\xb_{k'}$ stored on another worker $k'\neq k$ to carry out its computations.
% Operator storage
Note that the operator $\Db$ will be split once and for all in $\Db=(D_{m,n})_{1 \le m \le M, \, 1 \le n \le N}$  in an adequate manner over the $K$ workers as well; this decomposition will not necessitate any additional communication between the workers.
%
%% MOVED AND SHORTENED FORMER PARAGRAPH 3.2.1
The output quantity $\vb = (v_m)_{1 \le m \le M} = \Db\xb $ computed in Algorithm~\ref{algo:SPA-simple-sampling} will also be distributed among the $K$ workers. 
To this aim, each hyperedge $m\in \{1, \ldots, M\}$ is associated to a worker denoted by $k_m \in \{1, \ldots, K\}$.
Then, each subpart $v_m$ will be computed and stored on worker $k_m$. 
In practice, only the vertices $x_n$ that correspond to non-zero blocks $D_{m, n}$ in $\Db$ will be necessary to compute $v_m$.
% Communication and intersections
Communications occur when the worker $k=k_m$ requires a subpart of $\xb_{k'}$ stored on another worker $k'\neq k_m$ to carry out the computation of $v_m$. 
A subset of each $\xb_k$ will never be involved in communications since it is used to compute $v_m$ only, and will not be used to compute any other $v_{m'}$. 
Finally, the worker $k=k_m$, that is in charge of the computation of $v_m$, will also store $\xb_k$ to reduce communications.
In the particular case when $\Db$ is block-sparse, the communication cost will remain small as long as the subpart of $\xb_{k'}$ required by worker $k_m$ remains small.

% -------------------------------------------------------------------------
\subsection{Conditions for an efficient distributed implementation}
\label{Sssec:Model:cond-eff-dist}
% -------------------------------------------------------------------------

Even though the proposed approach detailed in \Cref{Sec:Model,Sec:dist-single-algo} is very general, some particular structures of matrix $\Db$ can ensure limited communications, which are often the bottleneck of distributed methods.

Complexity costs of algorithms can roughly be divided into three terms. First, a computation term $\tflop$ reflects the time to perform a single floating point operation. Second, a communication term $\tband$, defined as the inverse of the communication bandwidth, quantifies the time necessary to send a single value. Third, a latency term $\tlatency$ represents the cost incurred by establishing a communication. 
In practice, $\tflop \ll \tband \ll \tlatency$. This implies that the number of communications and the size of the messages need to be sufficiently small to ensure the overall communication time to be negligible compared to the computation time.

Consequently, to ensure a higher efficiency of the proposed distributed SGS, both in terms of computations and communications, we can identify two conditions on the structure of $\Db$.
These conditions are typically satisfied when $\Db$ is block-sparse, as the operators considered in \Cref{Sec:application}.

The first condition consists in ensuring that, each worker only needs to communicate with a small amount of other workers compared to the total number of workers $K$.

The second condition is that the number of variables that two communicating workers need to exchange remains small compared to the variables already stored on each worker. In particular, for two workers $(k, k')\in \{1, \ldots, K\}^2$, $k\neq k'$, the subpart of $\xb_k$ that need to be sent from $k$ to $k'$ needs to be small compared to $\xb_k$ and $\xb_{k'}$.

% *************************************************************************
\section{Model and hypergraph structure}
\label{Sec:Model}
% *************************************************************************
This section provides the notation and a description of the hypergraph model used to define the proposed distributed SGS. It gives a formal description of the proposed distributed computing strategy, which exploits the structure of the linear operator $\Db$ and a separability assumption on the functions involved in~\eqref{eq:dist-SPA}.
Notation is summarized in Tables~\ref{table:notation-hypergraph} and~\ref{table:notation2-hypergraph}, and illustrated on a simple example in \Cref{fig:opsplitting}.

\begin{table}[!ht]
    \centering\footnotesize
    \begin{tabular}{@{}p{4cm}@{}|p{8.5cm}|@{}c@{}|@{}c@{}} \toprule 
        \multirow{2}{*}{Notation} & \multirow{2}{*}{Definition} & \multicolumn{2}{c}{Given by}    \\
        && operator $\Db$ & $\phantom{a}$ user $\phantom{a}$ \\
        \hline
        $n\in \{1, \ldots, N\}$ & Indices for vertices & \checkmark &  \\
        $m\in \{1, \ldots, M\}$ & Indices for hyperedges & \checkmark & \\
        $\eb = (e_m)_{1 \le m \le M}$ & Hyperedges & \checkmark & \\
        $e_m \subset \{1, \ldots, N\}$ & Vertex indices in hyperedge $m$ & \checkmark & \\
        $k \in \{1, \ldots, K\}$ & Indices for workers & & \checkmark \\
        $\Rc_k \subset \{1, \ldots, K\}\setminus\{k\}$ & Indices of workers $k' \in \{1, \ldots, K\}\setminus\{k\}$ from which worker~$k$ receives vertex values & & \checkmark \\
        $\Sc_k \subset \{1, \ldots, K\}\setminus\{k\}$ & Indices of workers $k' \in \{1, \ldots, K\}\setminus\{k\}$  to which worker~$k$ sends vertex values & & \checkmark \\
        $\eV_k \subset \{1, \ldots, N\}$ & Indices of vertices stored on worker~$k$, such that $(\eV_k)_{1 \le k \le K}$ is a partition of $\{1, \ldots, N\}$ & \checkmark & \checkmark \\
        $k_m \in\{1, \ldots, K\}$ & Worker associated with $m$-th hyperedge $e_m$ (chosen by the user). $k_m$ must satisfy $e_m \cap \eV_{k_m} \neq \emp$  & & \checkmark \\
        $\Wc_m \subset \{1, \ldots, K\} \setminus \{k_m\}$ & Set of all workers but $k_m$, containing vertices from $e_m$ & \checkmark & \\
        $\overline{\Wc}_m \subset \{1, \ldots, K\} $ & $\overline{\Wc}_m = k_m \cup \Wc_m$ & \checkmark & \\
        $\eV_{(k,k')} \subset \eV_{k'}$ & Indices of vertices sent from worker $k'$ to worker~$k$ & \checkmark & \checkmark    \\
        $\eV_{\Wc_m} \subset \{1, \ldots, N\}\setminus \eV_{k_m}$ & $\eV_{\Wc_m} = \bigcup_{k' \in \Wc_m} \eV_{(k_m, k')}$ the set of vertex indices that will be communicated to worker $k_m$ from all workers $k'\in \Wc_m$ & \checkmark & \checkmark    \\
        $\eV_{\overline{\Wc}_m} \subset \{1, \ldots, N\}$ & $\eV_{\overline{\Wc}_m} = \eV_{k_m} \cup \eV_{\Wc_m}$ the set of vertex indices necessary to perform computations associated with $k_m$ & \checkmark & \checkmark    \\
        $\eE_k \subset \{1, \ldots, M\}$ & Indices of hyperedges only containing vertices stored on worker~$k$ & \checkmark & \checkmark \\
        $\eE_{(k,k')} \subset \{1, \ldots, M\}$ & Indices of hyperedges containing vertices sent from worker $k'$ to worker~$k$ & \checkmark & \checkmark \\
        $\eE_{\Rc_k} \subset \{1, \ldots, M\}$ & $\eE_{\Rc_k} = \cup_{k'\in \Rc_k} \eE_{(k,k')}$ set of all hyperedges containing vertices that will be communicated to worker $k$ & \checkmark & \checkmark    \\
        $\overline{\eE}_k \subset \{1, \ldots, M\}$ & $ \overline{\eE}_k = \eE_k \cup \eE_{\Rc_k} $, such that $(\overline{\eE}_k)_{1 \le k \le K}$ is a partition of $\{1, \ldots, M\}$ & \checkmark & \checkmark \\
        \bottomrule
    \end{tabular}
    \caption{\label{table:notation-hypergraph}\small
    Sets used to define the hypergraph structure of the problem and the distributed algorithm. 
    }
\end{table}

\begin{table}[h!]
    \centering\footnotesize
    \begin{tabular}{@{}p{4.5cm}@{}|@{}p{11.1cm}@{}} \toprule
        Notation & Definition  \\
        \hline
        $\xb = (x_n)_{1 \le n \le N} \in \HHH$ & Vertex values \\
        $\Db \colon \HHH \to \GGG$ & $\Db = (D_{m,n})_{1 \le m \le M, 1\le n \le N}$ linear operator defining the hypergraph structure \\
        $\ub$, $\vb\in \GGG$ & \multirow{2}{*}{Hyperedge weights} \\
        $\vb = (v_m)_{1 \le m \le M} \in \GGG$ & \\
        $\vb_k \in \GGG_k$ & $\vb_k = (v_m)_{m \in \overline{\eE}_k}$ hyperedge weights stored on worker $k$ \\        
        $\xb_k  \in \HHH_k$ & $\xb_k = (x_n)_{n \in \eV_k}$ vertex values stored on worker $k$ \\
        $\Db_{m,k} \colon \HHH_k \to 
        \GG_m $ & $\Db_{m,k} = (D_{m,n})_{n \in \eV_k}$, for $m\in \eE_k$, subpart of $\Db$ stored on worker $k$, associated with hyperedges containing vertices only on worker $k$ \\
        $\xb_{(k,k')} \in \underset{n\in \eV_{(k,k')}}{\Cart} \HH_n$  & $\xb_{(k,k')} = (x_n)_{n \in \eV_{(k,k')}}$ vertex values sent from worker $k'$ to worker $k$ \\
        $\Db_{m,(k,k')} \colon \underset{n\in \eV_{(k,k')}}{\Cart} \HH_n \to
        \GG_m $ & $\Db_{m,(k,k')} = (D_{m,n})_{n \in \eV_{(k,k')}}$, for $m\in \eE_{(k,k')}$, subpart of $\Db$ stored on worker $k$, associated with hyperedges containing vertices overlapping workers $k$ and $k'$ \\
        $\xb_{\overline{\Wc}_m} \in \underset{n\in \eV_{\overline{\Wc}_m}}{\Cart} \HH_n$  & $\xb_{\overline{\Wc}_m} = (x_n)_{n \in \eV_{\overline{\Wc}_m}}$, concatenation of vertex values stored on worker $k_m$, and those sent from all worker $k' \in \Wc_m$ to worker $k_m$ \\
        $\Db_{m,\overline{\Wc}_m} \colon \underset{n\in \eV_{\overline{\Wc}_m} }{\Cart} \HH_n \to
        \GG_m $ &
        $\Db_{m,\overline{\Wc}_m} = (D_{m,n})_{n \in \eV_{\overline{\Wc}_m}}$,
        for $m\in \eE_{\Rc_k}$, subpart of $(D_{m,n})_{1 \le n \le N}$ corresponding to vertices stored either on worker $k_m$ or on a worker in $\Wc_m$ \\
        \bottomrule
    \end{tabular}
    \caption{\label{table:notation2-hypergraph}\small
    Notation used for the variables involved in the proposed distributed algorithm. 
    }
\end{table}

% -------------------------------------------------------------------------
\subsection{Hypergraph structure of the model}
\label{Ssec:hypergraph-struct-model}
% -------------------------------------------------------------------------

Let $\HHH = \RR^{\overline{N}}$ be such that $\HHH = \HH_1 \times \ldots \times \HH_N$, where for every $n\in \{1, \ldots, N\}$, $\HH_n = \RR^{N_n}$ and $\overline{N} = \sum_{n=1}^N N_n$. 
An element of $\HHH$ is denoted by $\xb = (x_n)_{1 \le n \le N}$, where, for every $n\in \{1, \ldots, N\}$, $x_n \in \HH_n$. 
Similarly, let $\GGG = \RR^{\overline{M}}$ be such that $\GGG = \GG_1 \times \ldots \times \GG_M$, where for every $m\in \{1, \ldots, M\}$, $\GG_m = \RR^{M_m}$ and $\overline{M} = \sum_{m=1}^M M_m$. Let $\vb = (v_m)_{1 \le m \le M}$ be an element of $\GGG$ such that, for every $m \in \{1, \ldots, M\}$, $v_m \in \GG_m$.

The distribution of the problem over the different workers will follow the topology of a hypergrah encoded by the structure of $\Db$. To this aim, we assume that $\Db$ holds some block separability structure, and consider the following hypergraph model.
\begin{model}\label{mod:hypergraph}
Let $\Db \colon \HHH \to \GGG$ be such that $\Db = (D_{m,n})_{1 \le m \le M, 1\le n \le N}$, with, for every $(m,n) \in \{1, \ldots,M\} \times \{1, \ldots,N\}$, $D_{m,n} \colon \HH_n \to \GG_m$.
Let $\Hc = (\xb, \eb)$ be the hypergraph associated with $\Db$, with $N$ vertices $\xb$, and $M $ hyperedges denoted by $\eb =(e_m)_{1\le m \le M}$, such that 
\begin{equation}
    (\forall m \in \{1, \ldots, M\}) \quad
    e_m = \menge{ n \in \{1, \ldots, N\}}{ D_{m,n} \neq 0},
\end{equation}
where $0_{\GG_m \times \HH_n}$ is the null element from $\GG_m$ to $\HH_n$
\end{model}
The hyperedges of $\Hc$, i.e., the connections between vertices, are described by the topology of $\Db$. Precisely, the $M$ rows of $\Db$ represent the $M$ hyperedges of $\Hc$, and the $N$ columns represent the $N$ vertices of $\Hc$. Hence, as described in \Cref{mod:hypergraph}, for each $m\in \{1, \ldots, M\}$, the hyperedge $e_m$ links nodes $(x_n)$ if $D_{m,n} \neq 0_{\GG_m \times \HH_n}$.
Hence, $\Db$ can be seen as a weighted incidence matrix associated with $\Hc$.
In this context, any variable in $ \GGG$ can be seen as a hyperedge weight of the hypergraph $\Hc$.

%% Distributed computation strategy
According to \Cref{mod:hypergraph}, the computation of $\vb = \Db \xb$ in Algorithm~\ref{algo:SPA-simple-sampling} can be decomposed and computed block-wise as
\begin{equation}
\label{eq:vm-Dmn-xn}
    \vb = (v_m)_{1 \le m \le M} \text{ with }
    (\forall m \in \{1, \ldots, M\}) \quad
    v_m = \sum_{n\in e_m} D_{m,n} x_n.
\end{equation}
In~\eqref{eq:vm-Dmn-xn}, for every $m\in \{1, \ldots, M\}$, only the non-zero blocks of $\Db$ are taken into account, i.e., only using the vertices contained in the hyperedge $e_m$.
In the particular case when all hyperedges are disjoint (i.e., disconnected hypergraph), a very simple distributed algorithm would distribute the computation of the quantities $v_m$ on independent workers. This is not the case in general, and hyperedges $e_m$ can share some vertices $x_n$. A distributed algorithm will thus need to carry out communications between some workers. The next section exploits the hypergraph model described here to distribute the computation of $\Db\xb$, while limiting the communication cost.

% -------------------------------------------------------------------------
\subsection{Distribution of the hypergraph nodes over the workers}
\label{Ssec:Model:dist-hyp-notation}
% -------------------------------------------------------------------------

Let $K \in \NN^*$, $K \le N$, be the number of workers available to the user to parallelize the algorithm. 
On each worker $k \in \{1, \ldots, K\}$, we store a subpart $\xb_k$ of the nodes $\xb$ of the hypergraph $\Hc$. 
The pattern for the distribution of the nodes $\xb$ over the $K$ workers will be driven by the topology of the matrix~$\Db$, as explained in \Cref{Ssec:hypergraph-struct-model}. This pattern needs to be fixed before designing the algorithm.
Note that the blocks $(D_{m,n})_{1 \le m \le M, \, n \in e_m}$ of $\Db$ will be distributed over the workers as well, but will not require to be communicated. 
In particular, we provide below the notation to split the hypergraph (i.e., the operator $\Db$) over the $K$ workers. The notation for the splitting of the nodes and the hyperedge weights are illustrated in \Cref{fig:opsplitting} for a simple example of a block-sparse matrix $\Db$.

% -------------------------------------------------------------------------
\subsubsection{Distributing the hypergraph over the workers}
\label{Sssec:distributed-hypergraph}

%% * Partitions E_k, V_k of hyperedges and nodes/vertices
%% Vk
Let $(\eV_k)_{1 \le k \le K}$ be the partition of the set of vertex indices $\{1, \ldots, N\}$ such that, for every $k\in \{1, \ldots, K\}$, $\eV_k $ is the set of vertex indices handled on worker $k$, and for every $\xb \in \HHH$,  $\xb_k = (x_n)_{n \in \eV_k}$.
Recall that $k_m \in \{1, \ldots, K\}$ denotes the worker associated with the $m$-th hyperedge $e_m$. The association of hyperedges to workers is a choice left to the user. Then, we must have  $e_m \cap \eV_{k_m} \neq \emp$, i.e., worker $k_m$ must contain at least one vertex belonging to hyperedge $e_m$. 

%% Choice of $k_m$ => \overline{\eE}_k 
The set of all the hyperedges associated with worker $k$ is denoted by
\begin{equation}    \label{def:Ebar}
\overline{\eE}_k = \menge{m \in \{1, \ldots, M\}}{k_m = k}.
\end{equation} 
Note that $\big( \overline{\eE}_k \big)_{1 \leq k \leq K}$ defines a partition of $\{1,\ldots, M\}$ over the $K$ workers.
Optimizing the configuration for a specific operator $\Db$ is, on its own, a resource allocation problem~\cite{Pilla2021} that is out of the scope of this work.
Using notation~\eqref{def:Ebar}, for any $\vb \in \GGG$, we can denote by $\vb_k = (v_m)_{m\in \overline{\eE}_k}$ the hyperegde weights stored on worker $k$.

We also define $\eE_k$, the set of hyperedge indices only containing vertices stored on worker $k$, as
\begin{equation}    \label{eq:def:Ek}
    \eE_k = \menge{ m \in \{1, \ldots, M\} }{ n \in e_m \; \Leftrightarrow \; n \in \eV_k}.
\end{equation}
The sets $(\eE_k)_{1 \le k \le K}$ identify the rows of $\Db$ whose non-zero elements are multiplied with vertices that are stored on a single worker, i.e., the rows that can be used with no communication between two different workers. In contrast, the computation of $v_m$ for $m\in\overline{\eE}_k\setminus\eE_k$ will call for communications.

% -------------------------------------------------------------------------
\subsubsection{Hyperedges overlapping over workers ensuring communications}
\label{Ssec:communication-overlap-hyperedge}

%% Communications
The hyperedges overlapping over multiple workers will require vertices stored on different workers to be communicated between each other.
% The rows of $\Db$ involved in computations that need some communication 
% % have non-zero coefficients over multiple workers.  
% require vertices stored on different workers.
% % have non-zero coefficients on columns handled by different workers.
%
They correspond to hyperedges $e_m$, with $ m \in \{1, \ldots, M\} \setminus (\bigcup_{k=1}^K \eE_k)$.

For a fixed hyperedge $e_m$, let $\Wc_m \subset \{1 , \ldots, K\}\setminus \{k_m\}$ be the set of workers different from $k_m$ containing vertices $x_n$ belonging to the same hyperedge $e_m$, i.e., 
\begin{equation*}
\Wc_m = \menge{k \in \{1 , \ldots, K\}\setminus \{k_m\} }{ \exists n \in e_m \text{ such that } n \in \eV_{k}}.
\end{equation*}
Using this notation, we can give an equivalent definition of \eqref{eq:def:Ek} as, for every $k\in \{1, \ldots, K\}$, $\eE_k = \menge{m \in \{1, \ldots, M\}}{k_m = k \text{ and } \Wc_m = \emp}$ (i.e., no overlap, no communication).
For completeness, we also introduce the notation $\overline{\Wc}_m = k_m \cup \Wc_m$.

%% Comm in algo in terms of workers
For every $m \in \{1, \ldots, M\} \setminus (\bigcup_{k=1}^K \eE_k)$, workers $k' \in \Wc_m$ will need to send some vertex values from $\xb_{k'}$ to the worker $k_m$ so that it can compute $v_m$, as described in \eqref{eq:vm-Dmn-xn}. 
For every $k\in \{1, \ldots, K\}$, we denote by $\Rc_k  \subset \{1, \ldots, K\}\setminus \{k\}$ the set of workers $k'$ from which worker $k$ receives vertices. These workers store vertices belonging to a hyperedge of index $m\in \{1, \ldots, M\}$ such that $k_m = k$, but that are not stored on worker $k$. 
Similarly, we denote by $\Sc_k \subset \{1, \ldots, K\}\setminus \{k\}$ the set of workers $k'$ to which worker $k$ is sending vertices, i.e., all the workers $k'$ such that $k\in \Rc_{k'}$.

%% Indices of hyperedges for communication
Communications will occur through hyperedges connecting different workers.  
For every $k'\in \Rc_k$, let 
\begin{equation*}
\eE_{(k,k')} = \menge{m\in \{1, \ldots, M\}}{k_m = k \text{ and } k' \in \Wc_m }, 
\end{equation*}
be the set of hyperedges containing the vertices from worker $k'$ required by worker $k$ to compute $v_m$, so that $\Rc_k = \menge{k' \in \{1, \ldots, K\} \setminus \{k\}}{\exists m \in \eE_{(k,k')} }$.
Thus, the set of all the hyperedges that will carry out some communication to worker $k$ is  
\begin{equation*}
\eE_{\Rc_k} = \bigcup_{k'\in \Rc_k} \eE_{(k,k')} = \menge{m \in \{1, \ldots, M\}}{k_m = k \text{ and } \Wc_m \neq \emp}.
\end{equation*} 
As a result, $\eE_k$ and $\eE_{\Rc_k}$ form a partition of $\overline{\eE}_k$, 
where $\eE_k$ corresponds to the set of weights $v_m$ that can be computed locally on worker $k$ (i.e. hyperedges inducing no communication), and $\eE_{\Rc_k}$ corresponds to the weights $v_m$ that necessitate vertices provided by other workers, and therefore communications.

Each hyperedge $e_m$ potentially sends vertices towards the corresponding worker $k_m$. To identify vertices that are communicated between workers, let 
\begin{equation*}
\eV_{(k,k')} = \menge{n \in \eV_{k'}}{\exists m \in \eE_{(k,k')} \text{ such that } n \in e_m} 
\end{equation*}  
be the indices of vertices that are received by worker $k$ from $k'$. 
The set of vertices sent to worker $k_m$ by all workers in $\Wc_m$ is denoted by $\eV_{\Wc_m} = \bigcup_{k' \in \Wc_m} \eV_{(k_m, k')}$. For completeness, we also introduce $\eV_{\overline{\Wc}_m} = \eV_{k_m} \cup \eV_{\Wc_m}$, the set of all the vertex indices necessary to perform computations associated with the $m$-th hyperedge.

%%%%%%%%%%%%%%%%%%%%% 
% Parler de e_m qui implique des communications pour k_m=k
For every $\xb \in \HHH$, we denote by $\xb_{(k,k')} = (x_n)_{n \in \eV_{(k,k')}}$ the vertices communicated from worker $k'$ to worker $k$ through some hyperedge $e_m$ such that $k_m=k$. 
% Hyperedges involved in communications = v_m that need communication
For every $m\in \{1, \ldots,M\} \setminus (\bigcup_{k=1}^K \eE_k)$, let $\xb_{\overline{\Wc}_m} = (x_n)_{n \in \eV_{\overline{\Wc}_m}}$ be the concatenation of vertices stored on worker $k_m$ and vertices sent from all workers $k'\in \Wc_m$ to worker $k_m$ (i.e., all the vertices such that $D_{m,n} \neq 0$).
%%%%%%%%%%%%%%%%%%%%% 

\begin{remark}
Let $k \in \{1, \ldots, K\}$. 
For every $k'\in {\Sc}_k$, since $\eV_{(k,k')}$ denotes the set of vertices that need to be communicated from worker $k'$ to worker $k$, the (reciprocal) set $\eV_{(k',k)} \subset \eV_k$ corresponds to the set of vertices that will be sent back from worker $k$ to $k'$ in the distributed implementation of the adjoint operator $\Db^*$. 
Similarly, with $\eE_{(k,k')}$ the set of hyperedges containing vertices that will be communicated from worker $k'$ to worker $k$, the set $\eE_{(k',k)} \subset \{1, \ldots, M\}$ corresponds to the hyperedge indices containing vertices that will be sent back from worker $k$ to worker $k'$.
\end{remark}

% -------------------------------------------------------------------------
\subsubsection{Splitting of $\Db$ over the workers} 
\label{Sssec:splitting_D}

%% Split of D
Eventually, we will introduce some notation to split the matrix $\Db$ over the workers.
%
%%% Partition of D on the G_k (\overline{E}_k) -> H_k (V_k)
In practice, the input space $\GGG$ and the output space $\HHH$ are partitioned using the sets $(\overline{\eE}_k)_{k\in \{1, \ldots, K\}}$ and $(\eV_k)_{k\in \{1, \ldots, K\}}$, respectively. For every $k \in \{1, \ldots, K\}$, let $\GGG_k = \cart_{m \in \overline{\eE}_k} \GG_m$ and $\HHH_k=\cart_{n \in \eV_k} \HH_n$ such that $\GGG = \cart_{1\leq k \leq K} \GGG_k$ and $\HHH = \cart_{1\leq k \leq K} \HHH_k$.

For every $k\in \{1, \ldots, K\}$, the subparts $\vb_k$ of $\Db\xb$ that belong to $ \GGG_k$ are stored on worker~$k$. The subparts of $\Db$ acting on $\HHH_k$ only are associated with vertices in $\eV_k$. For every $m \in \eE_k$, they will be denoted by $\Db_{m,k} = (D_{m,n})_{n \in \eV_k}$. These subparts of $\Db$ are involved in purely local computations only.
Similarly, for every $k' \in \Rc_k$, for every $m \in \eE_{(k,k')}$, the subparts of $\Db$ acting on vertices $\eV_{(k,k')}$ will be denoted by $\Db_{m,(k,k')} = (D_{m,n})_{n \in \eV_k \cup \eV_{(k,k')}} $.
In addition, for every $m \in \eE_{\Rc_k}$, we will denote by $\Db_{m,\overline{\Wc}_m} = (D_{m,n})_{n \in \eV_{\overline{\Wc}_m}}$ the subparts of $\Db$ acting on vertices stored either on worker $k_m$ or sent to worker $k_m$ by all other workers $k'\in \Wc_m$. Hence
\begin{equation}
    \label{eq:vm-formula}
    v_m = 
    \begin{cases}
    \Db_{m,k} \xb_k, &  \text{if } m \in \eE_k, \\
    \Db_{m,\overline{\Wc}_m} \xb_{\overline{\Wc}_m}, &   \text{otherwise}.
    \end{cases}
\end{equation}
%
%% Storage of D
As a result, the subparts of $\Db$ which must be stored on worker $k$ are the $\Db_{m,k}$, for every $m\in\eE_k$, and the $\Db_{m,\overline{\Wc}_m}$, for every $m \in \eE_{\Rc_k}$.

The notation given above is summarized in~\Cref{table:notation-hypergraph} and \Cref{table:notation2-hypergraph}. A simple example for block-sparse matrices is also provided below, and illustrated in~\Cref{fig:opsplitting}.

\begin{figure}
\begin{center}
\includegraphics[width=7cm]{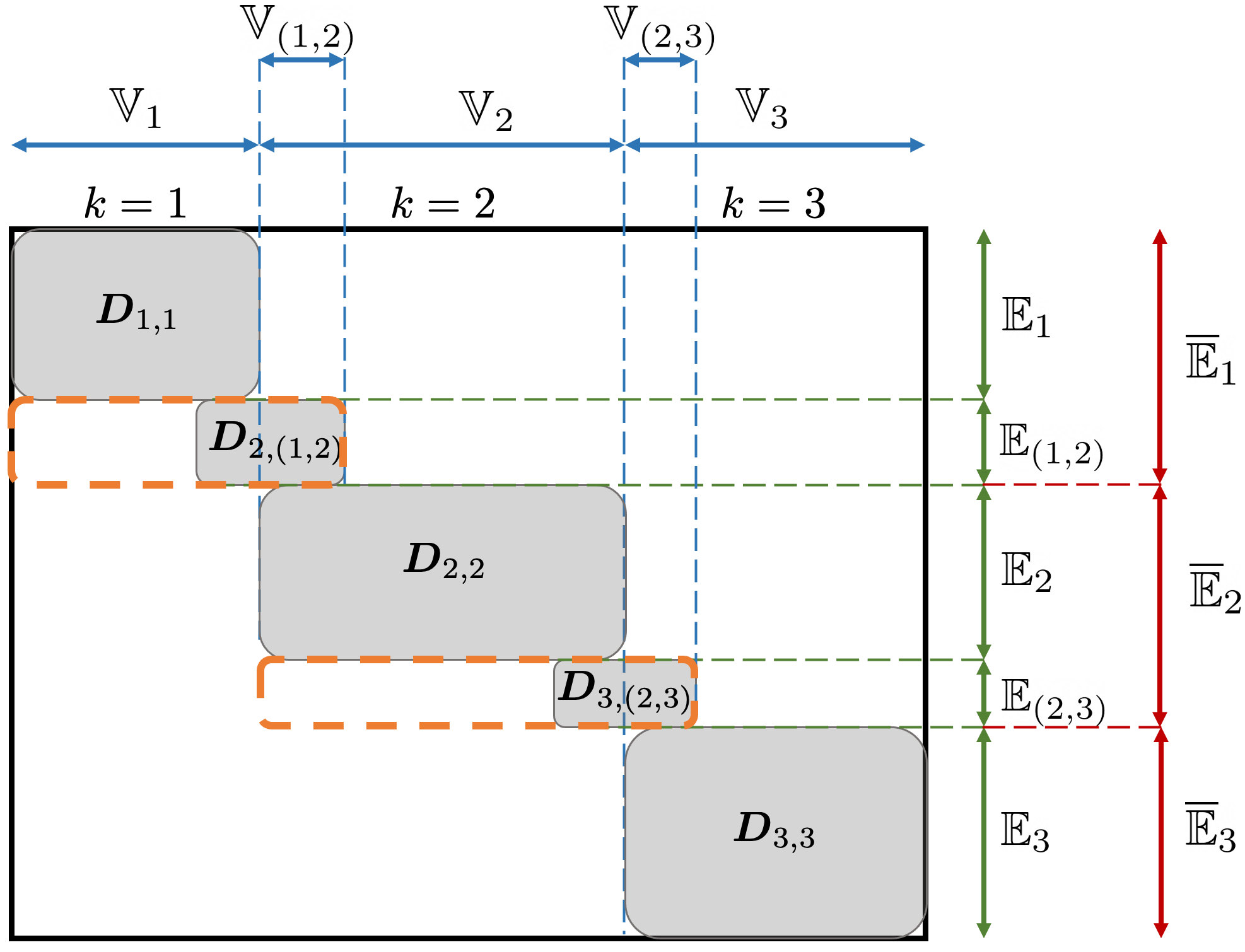}
\end{center}

\caption{\label{fig:opsplitting}\small
Distribution of vertices (horizontal lines) and hyperedges (vertical lines) of the hypergraph over $K=3$ workers, for a block-sparse matrix $\Db$ corresponding to~\Cref{ex:block-sparse}. 
}
\end{figure}

\begin{example}\label{ex:block-sparse}
To illustrate some of the notation introduced above, consider the block-sparse matrix $\Db$ shown in~\Cref{fig:opsplitting}, with $K=3$ workers. 
For the vertices, $\eV_1$ (resp. $\eV_2$ and $\eV_3$) contains the vertices handled on worker $k=1$ (resp. $k=2$ and $k=3$).
In addition, $\eV_{(1, 2)}$ (resp. $\eV_{(2, 3)}$) identifies the vertices of $\xb_2$ (resp. $\xb_3$) that will need to be communicated to worker $k=1$ (resp. $k=2$). 
For the hyperedges, $\overline{\eE}_1$ (resp. $\overline{\eE}_2$ and $\overline{\eE}_3$) identifies the hyperedge weights stored on worker $k=1$ (resp. $k=2$ and $k=3$). 
The set $\eE_1$ (resp. $\eE_2$ and $\eE_3$) identifies the hyperedges fully contained in worker $k=1$ (resp. $k=2$ and $k=3$).
Hyperedges in $\eE_{(1, 2)}$ (resp. $\eE_{(2, 3)}$) contain the vertices that will be communicated from worker $k=2$ to worker $k=1$ (resp. $k=2$ to $k=3$). 
In this example, only the blocks $\Big( (\Db_{m,k})_{m \in \eE_k} \Big)_{1 \le k \le 3}$, $(\Db_{m, (1,2)})_{m \in \eE_{(1,2)}}$ and $(\Db_{m, (2,3)})_{m \in \eE_{(2,3)}}$ are non-zero (e.g., convolution operator). 
Blocks $\Big( (\Db_{m,1})_{m\in \eE_1}, (\Db_{m, (1,2)})_{m\in \eE_{(1,2)}} \Big)$ 
(resp. $\Big( (\Db_{m,2})_{m \in \eE_2}, (\Db_{m, (2,3)})_{m \in \eE_{}(2, 3)} \Big)$ 
and $(\Db_{m,3})_{m \in \eE_3}$) are stored on worker $k=1$ (resp. $k=2$ and $k=3$). 
For worker $k=1$ (resp. $k=2$ and $k=3$), workers sending vertex values to this worker is given by $\Rc_1 = \{2\}$ (resp. $\Rc_2 = \{3\}$ and $\Rc_3 = \emp$). Similarly, workers receiving vertex values from this worker is given by $\Sc_1 = \emp$ (resp. $\Sc_2 = \{1\}$ and $\Sc_3 = \{2\})$.
For every $m\in \overline{\eE}_1$ (resp. $\overline{\eE}_2$ and $\overline{\eE}_3$), $k_m = 1$ (resp. $k_m = 2$ and $k_m = 3$). 
For every $m \in \eE_{(1, 2)}$, $\Wc_m = \Rc_1 = \Sc_2 = \{2\}$, and for every $m\in \eE_{(2,3)}$, $\Wc_m = \Rc_2 = \{3\}$. 
Finally, for every $m\in \eE_{\Rc_1}= \eE_{(1,2)}$ (resp. $ \eE_{\Rc_2}= \eE_{(2,3)}$), $\Db_{m, \overline{\Wc}_m} = (D_{m,n})_{n \in \eV_1 \cup \eV_{(1,2)}}$ (resp.  $\Db_{m, \overline{\Wc}_m} = (D_{m,n})_{n \in \eV_2 \cup \eV_{(2,3)}}$). This last notation corresponds to the orange rectangles in~\Cref{fig:blocksparse}. In this example, $\eE_{\Rc_3} = \emp$.
\end{example}

% -------------------------------------------------------------------------
\subsection{Separability assumptions}
\label{Ssec:Model:separability}
% -------------------------------------------------------------------------

We can now give the assumptions that will be used to design an efficient distributed SGS algorithm. 
Precisely, we will consider distribution~\eqref{eq:dist-SPA}, where the functions are assumed to satisfy the following separability conditions. 
\begin{assumption}  \label{ass:gen}\
\begin{enumerate}
    \item \label{ass:gen:ii}
    $h \colon \HHH \to \RR$ and $f  \colon \HHH \to \RX$ are additively separable on the workers, i.e. 
    \begin{equation}
        (\forall \xb \in \HHH) \quad
            h(\xb) = \sum_{k=1}^K h_k(\xb_k),
            \text{ and }
            f(\xb) = \sum_{k=1}^K f_k(\xb_k),
    \end{equation}
    where, for every $k \in \{1, \ldots, K\}$, $h_k \colon \HHH_k \to \RR$ and $f_k \colon \HHH_k \to \RX$.
    
    \item\label{ass:gen:iii}
    $g \colon \GGG \to \RX$ is additively separable, i.e.
    \begin{equation} 
        (\forall \vb  \in \GGG) \quad
            g(\vb) = \sum_{m=1}^M g_m(v_m), \label{eq:g-add}
    \end{equation}
    with, for every $m\in \{1, \ldots,M\}$, $g_m \colon \GG_m \to \RX$. 
\end{enumerate}
\end{assumption}
Further note that~\eqref{eq:phi} and~\eqref{eq:psi} can be written
\begin{align}
    (\forall (\vb, \ub) \in \GGG^2)\quad
    &  \phi_\alpha(\vb, \ub) = \sum_{m=1}^M \phi_{m, \alpha} (v_m, u_m) = \sum_{m=1}^M  \frac{1}{2\alpha^2} \| v_m - u_m \|^2, \label{eq:phi-add} \\
    &   \psi_\beta(\ub) = \sum_{m=1}^M \psi_{m, \beta} (u_m) = \sum_{m=1}^M  \frac{1}{2\beta^2} \| u_m \|^2 .  \label{eq:psi-add}
\end{align}
The above assumptions can be rewritten to highlight the separability over the workers. Indeed, there exists a permutation $\varrho \colon \GGG \to \GGG$ such that
\begin{align} 
    (\forall \xb \in \HHH) \quad
    \Db \xb
    = \left( \left( D_{m,n} \right)_{1 \le n \le N} \xb \right)_{1 \le m \le M}
    &= \varrho \left(\left( \begin{matrix}
    \left(  \Db_{m,k}\right)_{m \in \EE_k}  \xb_k     \\[0.1cm]
    \left(  \Db_{m,\overline{\Wc}_m}  \xb_{\overline{\Wc}_m} \right)_{m \in \eE_{\Rc_k}} 
    \end{matrix} \right)_{1 \le k \le K} \right). \label{eq:def:perm}
\end{align}
Reorganizing the vertex indices emphasizes the distinction between local computations and those for which communications are required.
For every $(\ub, \vb) \in \GGG^2$ and $\xb \in \HHH$, \eqref{eq:g-add}, \eqref{eq:phi-add} and \eqref{eq:psi-add} can thus be rewritten as 
\begin{align}
    g(\vb)
    &=  \sum_{k=1}^K \Big( 
        \sum_{m \in \overline{\eE}_k} g_m \left( v_m  \right) 
        \Big),    \label{eq:g-dist-simple}   \\
    \phi_\alpha(\Db \xb, \ub)
    &=  \sum_{k=1}^K \Big( 
        \sum_{m \in \overline{\eE}_k} \phi_{m, \alpha} \left( [\Db \xb]_m, u_m  \right) 
        \Big),  \\
    &=  \sum_{k=1}^K \Big( 
        \sum_{m \in \eE_k} \phi_{m,\alpha} \left( \Db_{m,k} \xb_k, u_m \right) 
        + \sum_{m \in \eE_{\Rc_k}}  \phi_{m,\alpha} \left( \Db_{m,\overline{\Wc}_m} \xb_{\overline{\Wc}_m}, u_m \right) 
        \Big),   \label{eq:phi-dist-simple}    \\
    \psi_\beta(\ub)
    &=  \sum_{k=1}^K \Big( 
        \sum_{m \in \overline{\eE}_k} \psi_{m,\beta} \left( u_m \right) 
        \Big),    \label{eq:psi-dist-simple} 
\end{align}
where $[\cdot]_m$ denotes the $m$-th element of its argument.

% *************************************************************************
\section{Distributed Split Gibbs sampler}
\label{Sec:dist-single-algo}
% *************************************************************************
% -------------------------------------------------------------------------
\subsection{Proposed distributed algorithm}
\label{Ssec:algo}
% -------------------------------------------------------------------------

Using the model from the previous section, we are now able to introduce a distributed block-coordinate version of Algorithm~\ref{algo:SPA-simple-sampling}. 
The term \emph{block-coordinate} is to be interpreted as in the optimization literature, where the variables from \eqref{eq:dist-SPA} are divided into $K$ blocks distributed over the $K$ workers, and updated in parallel.
The proposed distributed algorithm is the following.

% --------------------------------------------------------
\begin{proposition} \label{prop:synch-SPA-simple}
Consider a distribution~\eqref{eq:dist-SPA} satisfying \Cref{ass:psgla-spa,ass:gen}. Assume that the operator $\Db$ follows \Cref{mod:hypergraph}, and is split over workers $\{1, \ldots, K\}$ such that, for every $k\in \{1, \ldots, K\}$, $(\Db_{m,k})_{m \in \overline{\eE}_k}$ is stored on worker $k$. 
For every $k\in \{1, \ldots, K\}$, let $\xb_k^{(0)} \in \HHH_k$, $ \zb_k^{(0)} \in \GGG_k$, and $ \ub_k^{(0)} \in \GGG_k$. 
Let $(\xb^{(t)}, \zb^{(t)}, \ub^{(t)})_{1 \le t \le T}$ be samples generated by 
Algorithm~\ref{algo:single-synch},
where $\gamma \in ]0, (\lambda_h + \| \Db \|^2/\alpha^2)^{-1}[$ and, for every $k\in \{1, \ldots, K\}$, $(\wb^{(t)}_k)_{1 \le t \le T}$ is a sequence of i.i.d. standard Gaussian random variables in $\HHH_k$.
Let $\pi_{k, \alpha}$ denote the distribution
\begin{equation}
       \distc{k, \alpha}{\zb_k}{\vb_k, \ub_k}   
        \propto \exp \Bigg( - \sum_{m\in \overline{\eE}_k} \left( g_m(z_m) + \phi_{m,\alpha}( v_m  , z_m - u_m ) \right) \Bigg)
\end{equation}
where, for every $\xb \in \HHH$, $\vb_k = (v_m)_{m \in \overline{\eE}_k} = \varrho\left(\begin{matrix}
( \Db_{m,k} \xb_k )_{m \in \eE_k}  \\
( \Db_{m,\overline{\Wc}_m} \xb_{\overline{\Wc}_m} )_{m \in \eE_{\Rc_k}}
\end{matrix}\right)$. \\
Then, Algorithm~\ref{algo:single-synch} is equivalent to Algorithm~\ref{algo:SPA-simple-sampling}.
\end{proposition}

%%%%% ALGORITHM 3
\begin{algorithm}[htbp]
    \small

    \For{$k = 1$ \KwTo $K$}{

        \For{$ k' \in \Sc_k$}{
            Send $(x_n^{(0)})_{n \in \eV_{(k',k)}}$ to worker $k'$;
        }

        \For{$k' \in \Rc_k$}{
            Receive $(x_n^{(0)})_{n \in \eV_{(k,k')}}$ from worker $k'$;
        }

        $\displaystyle \vb_k^{(0)} = \varrho
            \left(\begin{matrix}
            ( \Db_{m,k} \xb_k^{(0)} )_{m \in \eE_k}  \\
            ( \Db_{m,\overline{\Wc}_m} \xb_{\overline{\Wc}_m}^{(0)} )_{m \in \eE_{\Rc_k}}
            \end{matrix}\right)$; \\[0.2cm]
    }

    \For{$t = 0$ \KwTo $T$}{
        \For{$k= 1$ \KwTo $K$}{

            $(d_m^{(t)})_{m \in \overline{\eE}_k} 
                = \left( \phi_{m,\alpha}' (\cdot, z_m^{(t)}-u_m^{(t)}) (v_m^{(t)}) \right)_{m \in \overline{\eE}_k}$; \\[0.2cm]

            \For{$k' \in \Rc_k$}{
                $\displaystyle \widetilde{\db}_{(k',k)}^{(t)} = \sum_{m \in \eE_{(k',k)}} \Db_{m,k'}^* d_m^{(t)}$; \\
                Send $\widetilde{\db}_{(k',k)}^{(t)}$ to worker $k'$; \\[0.2cm]
            }

            \For{$k' \in \Sc_k$}{
                Receive $\widetilde{\db}_{(k,k')}^{(t)}$ from worker $k'$;
            }

            $\displaystyle \deltab_k^{(t)} = \sum_{m \in \overline{\eE}_k} \Db_{m,k}^* d_m^{(t)} + \sum_{k'\in \Sc_k} \widetilde{\db}_{(k,k')}^{(t)}$; \\[0.1cm]
            $\displaystyle \xb_k^{(t+1)} = \prox_{\gamma f_k}\left( \xb_k^{(t)} - \gamma \nabla h_k(\xb_k^{(t)}) - \gamma \deltab_k^{(t)} + \sqrt{2\gamma} \wb_k^{(t)} \right)$; \\[0.2cm]

            \For{$k' \in \Sc_k$}{
                Send $(x_n^{(t+1)})_{n \in \eV_{(k',k)}}$ to worker $k'$;
            }

            \For{$k' \in \Rc_k$}{
                Receive $(x_n^{(t+1)})_{n \in \eV_{(k,k')}}$ from worker $k'$; \\[0.2cm]
            }

            $\displaystyle \vb_k^{(t+1)} = \varrho
            \left(\begin{matrix}
            ( \Db_{m,k} \xb_k^{(t+1)} )_{m \in \eE_k}  \\
            ( \Db_{m,\overline{\Wc}_m} \xb_{\overline{\Wc}_m}^{(t+1)} )_{m \in \eE_{\Rc_k}}
            \end{matrix}\right)$; \\[0.3cm]
            $\zb_k^{(t+1)} \sim \distc{k,\alpha}{ \zb_k}{\vb_k^{(t+1)}, \ub_k^{(t)}}$; \\[0.2cm]
            $\ub_{k}^{(t+1)} \sim \Nc \Big( \frac{\beta^2}{\alpha^2+\beta^2} (\zb_{k}^{(t+1)} - \vb_{k}^{(t+1)}), \frac{\alpha^2+\beta^2}{\alpha^2 \beta^2} \Idb \Big)$;
        }
    }

    \caption{\small Proposed distributed SGS (distributed version of Algorithm~\ref{algo:SPA-simple-sampling}).}
    \label{algo:single-synch}
\end{algorithm}

%%%%%%%%%%%%%%%%%%%%%

All the computations described in Algorithm~\ref{algo:single-synch} are conducted simultaneously on each worker, after the necessary communication phases.

Before giving the proof of this result, we want to emphasize that communications to apply the operator $\Db$ and its adjoint $\Db^*$ are symmetric: the same workers are involved in communications, but the direction of the communications (send/receive) is reversed; the indices of the subparts that are communicated are not the same. Communications necessary to computations involving $\Db$ are received by worker $k$ from those of ${\cal R}_k$; communications necessary to computations involving $\Db^*$ are received by worker $k$ from those of ${\cal S}_k$ (see \Cref{ex:comp-comm-D}).

% --------------------------------------------------------
\begin{proof}
\Cref{ass:psgla-spa} ensures that \Cref{Prop:cv-SPA-simple} is verified (see~\eqref{eq:phi-psi-def}), and that the SGS algorithm can be instantiated as Algorithm~\ref{algo:SPA-simple-sampling}.

Using notation from \Cref{Sec:Model}, \eqref{eq:g-add}--\eqref{eq:psi-add} in \Cref{ass:gen} can be directly re-written as in~\eqref{eq:g-dist-simple}--\eqref{eq:psi-dist-simple}. Then, fixing $\vb = \Db\xb$, the conditional distributions~\eqref{eq:dist-SPA-cond:z}-\eqref{eq:dist-SPA-cond:u} can be rewritten as
\begin{align}
    \distc{\alpha}{\zb}{\vb, \ub}
    &   = \prod_{k=1}^K  \distc{k, \alpha}{\zb_k}{\vb_k, \ub_k}, \\
    \distc{(\alpha, \beta)}{\ub}{\vb, \zb} 
    &= \prod_{k=1}^K   \distc{k, (\alpha, \beta)}{\ub_k}{\vb_k, \zb_k},
\end{align}
respectively, where, for every $k\in \{1, \ldots, K\}$,
\begin{align}
    \distc{k, \alpha}{\zb_k}{\vb_k, \ub_k}
    &   \propto \exp \left( - \sum_{m\in \overline{\eE}_k} \left( g_m(z_m) + \phi_{m,\alpha}(v_m  , z_m - u_m ) \right)\right) ,  \\
    \distc{k, (\alpha, \beta)}{\ub_k}{\vb_k, \zb_k}
    &   \propto \exp \left( - \sum_{m\in \overline{\eE}_k} \left( \phi_{m, \alpha}( v_m, z_m - u_m) - \psi_{m, \beta} (u_m) \right) \right),
    \\
    &= \Nc \bigg( \frac{\beta^2}{\alpha^2+\beta^2} (\zb_k - \vb_k), \frac{\alpha^2+\beta^2}{\alpha^2 \beta^2} \Idb \bigg), \nonumber
\end{align}
for $\vb_k = (v_m)_{m \in \overline{\eE}_k} = ([\Db \xb]_m)_{m \in \overline{\eE}_k}$. 
Hence, for every iteration $t \in \{0, \ldots, T\}$ of the Algorithm~\ref{algo:SPA-simple}, and for every worker $k \in \{1, \ldots, K\}$,
the sampling of $\zb_k^{(t+1)}$ and $\ub_k^{(t+1)}$ only requires partial information from $\big( \vb_k^{(t+1)}, \ub_k^{(t)} \big) $ and $\big( \vb_k^{(t+1)}, \zb_k^{(t+1)} \big) $, respectively. 
Algorithm~\ref{algo:SPA-simple-sampling} is thus equivalent to 
\begin{equation}    \label{prop:synch-SPA-simple:proof:algo1}
\begin{array}{l}
    \text{for } t = 0,1, \ldots, T	\\
    \left\lfloor
    \begin{array}{l}
        \xb^{(t+1)} = \prox_{\gamma f} \Big( \xb^{(t)} - \gamma \nabla h(\xb^{(t)}) - \gamma \Db^* \Big( \nabla \phi_\alpha(\cdot , \zb^{(t)}-\ub^{(t)})((v_m^{(t)})_{m \in \overline{\eE}_k}) \Big) \\
        \qquad \qquad \qquad \quad + \sqrt{2\gamma} \, \wb^{(t)} \Big) \\
        \text{for } k = 1, \ldots, K	\\
        \left\lfloor
        \begin{array}{l}
            \vb_k^{(t+1)}
            = ([\Db \xb^{(t+1)}]_m)_{m \in \overline{\eE}_k}, \\
            \zb_k^{(t+1)} 
                \sim \distc{k, \alpha}{\zb_k}{ \vb_k^{(t+1)}, \ub_k^{(t)}}, \\
            \ub_k^{(t+1)} 
                \sim \Nc \bigg( \frac{\beta^2}{\alpha^2+\beta^2} (\zb_k^{(t+1)} - \vb_k^{(t+1)}), \frac{\alpha^2+\beta^2}{\alpha^2 \beta^2} \Idb \bigg).
        \end{array}
        \right.
    \end{array}
    \right.
\end{array}
\end{equation}

It remains to show that, for every $t\in \{0, \ldots, T\}$, the computation of $\xb^{(t+1)}$ can be parallelized over $k\in \{1, \ldots, K\}$. 
Since $(\eV_k)_{1 \le k \le K}$ is a partition of $\{1, \ldots, N\}$, then, for every $k\in \{1, \ldots, K\}$, $\xb_k^{(t+1)}$ will be given by the $k$-th vertex values of the random variable generated from the PSGLA transition kernel.

The gradient of \eqref{eq:phi-dist-simple} with respect to $\xb$ is required to compute $(\xb_k^{(t+1)})_{1 \le k \le K}$.
Using the chain rule on $\xb \in \HHH \mapsto (\phi_\alpha( \cdot, \zb-\ub) \circ \Db) (\xb)$, for $(\zb, \ub)\in \GGG^2$ fixed, we obtain, for every $\xb \in \HHH$,
\begin{equation}
    \nabla_{\xb} \phi_\alpha(\Db \xb , \zb-\ub)
    = \nabla_{\xb} (\phi_\alpha( \cdot, \zb-\ub) \circ \Db) (\xb) 
    = \Db^* \nabla \phi_\alpha (\cdot, \zb-\ub) (\Db \xb).
\end{equation}
According to \eqref{eq:phi-dist-simple}, for every $(\vb, \zb) \in \GGG^2$, we have
\begin{align}
    \nabla_{\vb} \phi_\alpha (\vb, \zb)
    = \varrho \left( \left( \begin{matrix} 
    \Big( \phi'_{m,\alpha}(v_m, z_m) \Big)_{m \in \eE_k} \\
    \Big( \phi'_{m, \alpha}(v_m, z_m) \Big)_{m \in \eE_{\Rc_k}}
    \end{matrix} \right)_{1 \le k \le K} \right),
\end{align}
where $\varrho$ is the permutation operator defined in~\eqref{eq:def:perm}.
For every $k \in \{1, \ldots, K\}$, every $\xb \in \HHH$ and $(\zb, \ub) \in \GGG^2$, let $(d_m)_{m\in \overline{\eE}_k}$ be defined as
\begin{align}
    d_m 
    := \phi'_{m,\alpha}(\cdot , z_m-u_m) (v_m)
    &= \begin{cases}
    \phi'_{m,\alpha}(\cdot , z_m-u_m) (\Db_{m,k} \xb_k),
    &   \text{if } m \in {\eE}_k,  \\
    \phi'_{m,\alpha}(\cdot , z_m-u_m) (\Db_{m,\overline{\Wc}_m} \xb_{\overline{\Wc}_m}),
    &   \text{if } m \in \eE_{\Rc_k}.
    \end{cases}     
\end{align}
Finally,
\begin{equation}
    \nabla_{\xb} \phi_\alpha(\Db \xb , \zb-\ub) = \Db^* \varrho \Bigg( \Big( (d_m)_{m \in \overline{\eE}_k} \Big)_{1 \le k \le K} \Bigg). \label{proof:gradient}
\end{equation}
To extract the $k$-th block from this gradient, \eqref{proof:gradient} can be decomposed as follows:
\begin{align}
    \deltab_k 
    &:= [\nabla_{\xb} \phi_\alpha(\Db \xb , \zb-\ub)]_k \nonumber \\
    &= \sum_{m \in \overline{\eE}_k} \Db_{m,k}^* d_m + \sum_{k' \in \Sc_k} \sum_{m \in \eE_{(k,k')}} \Db_{m, k}^* d_m.  \label{prop:synch-SPA-simple:sum-grad-glob}
\end{align}
Figure~\ref{fig:comp-comm-D} illustrates this for the simple example described in Figure~\ref{fig:opsplitting}. 

The second term in \eqref{prop:synch-SPA-simple:sum-grad-glob} gathers information received from the set of neighbours $\Sc_k$ of worker $k$ that is necessary to computations involving $\Db^*$.
Then, using the notation 
\begin{equation}
    (\forall k' \in \Sc_k)\quad
    \widetilde{\db}_{(k,k')} = \sum_{m \in \eE_{(k,k')}} \Db_{m,k}^* d_m,
\end{equation}
we obtain
\begin{equation} \label{eq:aggregate_for_adjoint}
    \deltab_k
    = \sum_{m \in \overline{\eE}_k} \Db_{m,k}^* d_m + \sum_{k' \in \Sc_k} \widetilde{\db}_{(k,k')} ,
\end{equation}
where, for every $k' \in \Sc_k$, $\widetilde{\db}_{(k,k')}$ is computed on worker $k'$ and communicated to worker $k$ to form $\deltab_k$.
Therefore the computations performed on a given worker $k \in \{1, \ldots, K\}$ include the computation of \textit{local} gradients $(\Db_{m,k}^* d_m)_{m \in \overline{\eE}_k}$ as well as the gradients to be sent to neighbours, i.e., $(\widetilde{\db}_{(k',k)})_{k' \in \Rc_k}$. 
Note that the role of the sets $\Rc_k$ and $\Sc_k$ is exchanged when the adjoint operator $\Db^*$ is considered.

Using the above notation, and using \cite[Prop. 24.11]{Bauschke2017CAMOTHS} on the additively separable functions $h$ and $f$, leads to the conclusion that Algorithm~\ref{algo:single-synch} is equivalent to \eqref{prop:synch-SPA-simple:proof:algo1}, and therefore to Algorithm~\ref{algo:SPA-simple-sampling}.
\end{proof}

% -------------------------------------------------------------------------
\subsection{Illustration of operator computation}
\label{ex:comp-comm-D}
% -------------------------------------------------------------------------

In this section, we give an example to explain the communications involved in Algorithm~\ref{algo:single-synch}. \Cref{fig:dsgs} describes one iteration of Algorithm~\ref{algo:single-synch}.
Figure~\ref{fig:comp-comm-D} describes the application of the forward $\Db$ and backward $\Db^*$ operators.

\Cref{fig:dsgs}~(a) illustrates one iteration of the proposed distributed sampling Algorithm~\ref{algo:single-synch} implemented on an SPMD architecture, with local computations conducted on a worker $k\in \{1, \ldots, K\}$. \Cref{ass:gen} ensures that most of the related operations can be conducted independently on each worker. 
Applying the forward operator $\Db$ and its adjoint $\Db^*$ requires communications illustrated in~\Cref{fig:dsgs}~(b).

\begin{figure}[t]
    \centering
    \small
    \captionsetup[subfigure]{justification=centering}
    \begin{subfigure}[b]{0.53\textwidth}
        \centering
        \includegraphics[keepaspectratio, height=0.23\textheight]{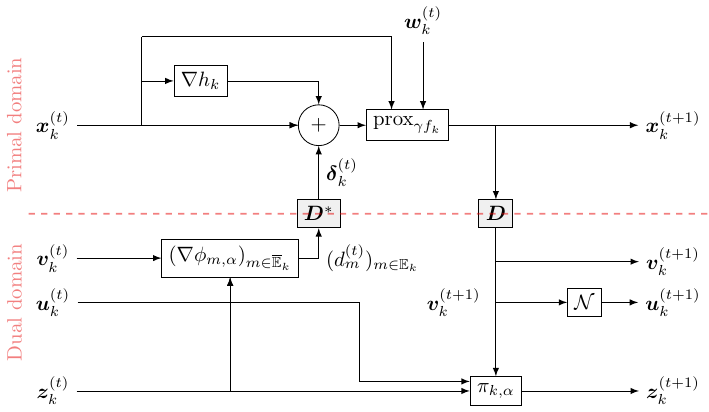} 
        \caption{Iteration $k$ of the proposed sampler}
        \label{fig:dsgs:iteration}
    \end{subfigure}
    \hfill
    \begin{subfigure}[b]{0.45\textwidth}
        \centering
        \includegraphics[keepaspectratio, height=0.23\textheight]{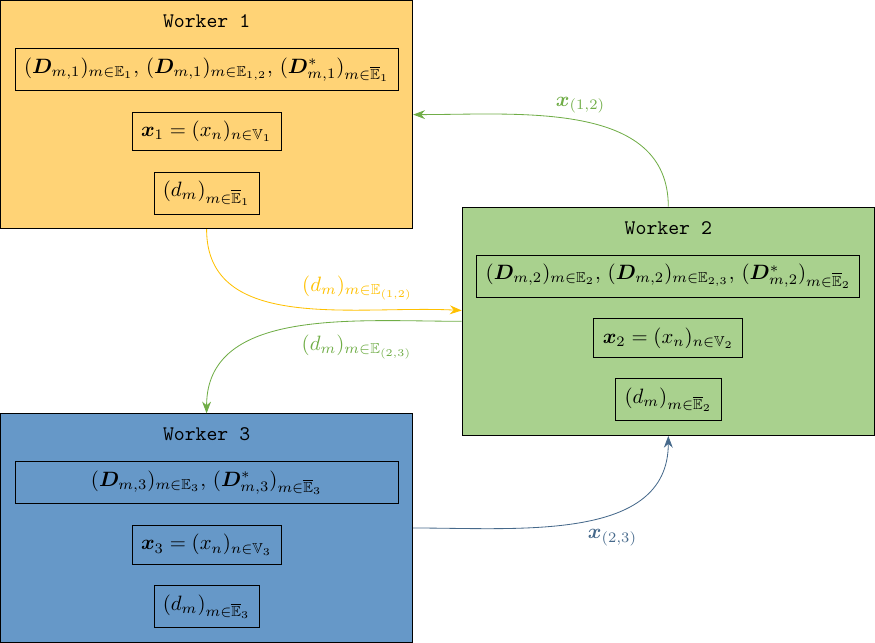}
        \caption{Communications between workers}
        \label{fig:dsgs:communication-graph}
    \end{subfigure}

    \caption{\small (a) Operations required by one iteration $k \in \eN^*$ of Algorithm~\ref{algo:single-synch}, a distributed version of Algorithm~\ref{algo:SPA-simple-sampling} using an SPMD architecture. Shaded rectangles correspond to operations for which communications are required. (b) Communications required by the distributed implementation of Algorithm~\ref{algo:single-synch}, associated with the operator $\Db$ illustrated in~\Cref{fig:comp-comm-D}. 
    Local variables are displayed in the boxes, and variables exchanged between workers are represented with colored arrows.}
    \label{fig:dsgs}
\end{figure}

\begin{figure}[ht]
\centering
\begin{tabular}{cc}\small
    \includegraphics[width=0.48\textwidth]{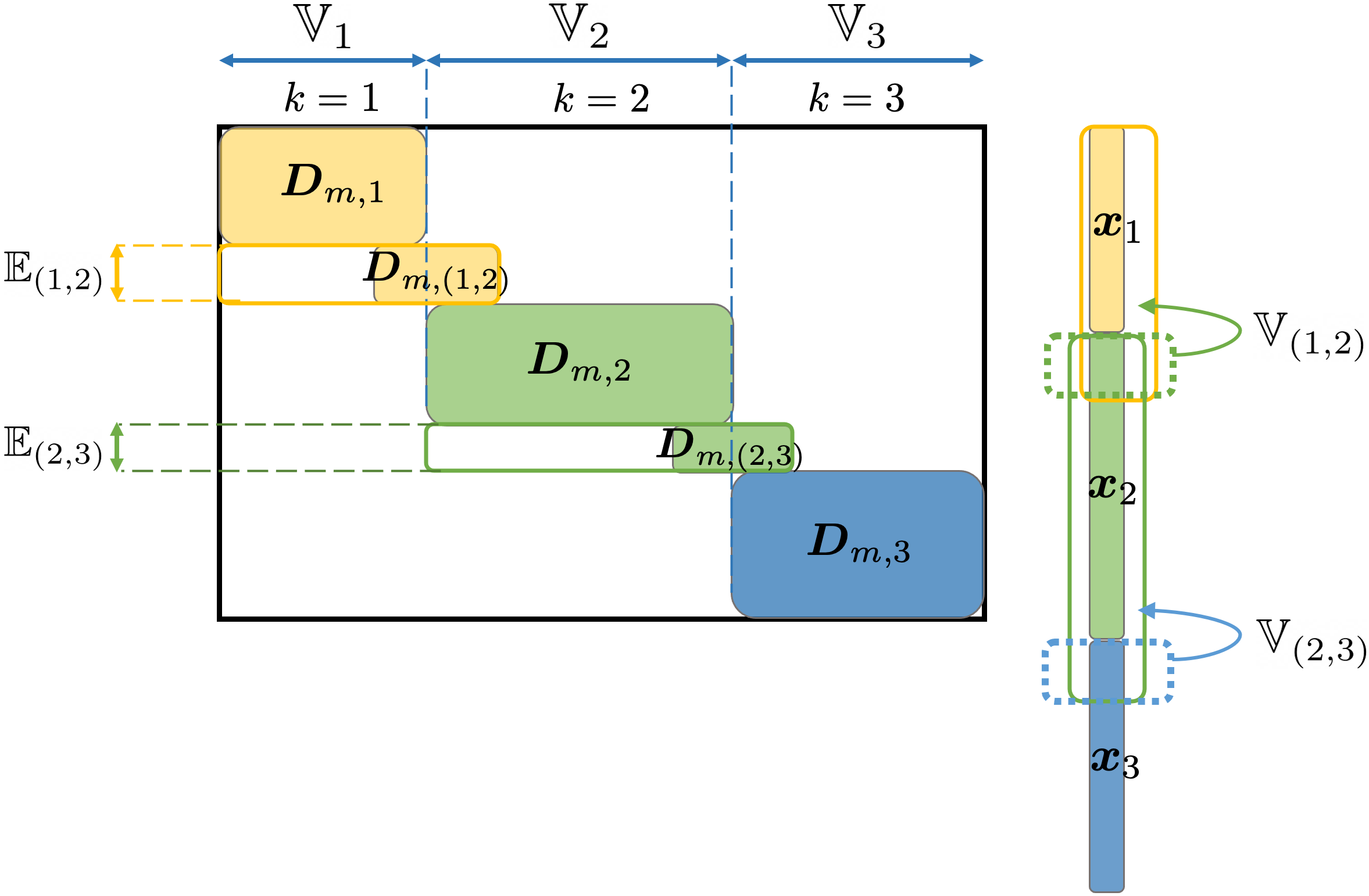} 
    &   \includegraphics[width=0.4\textwidth]{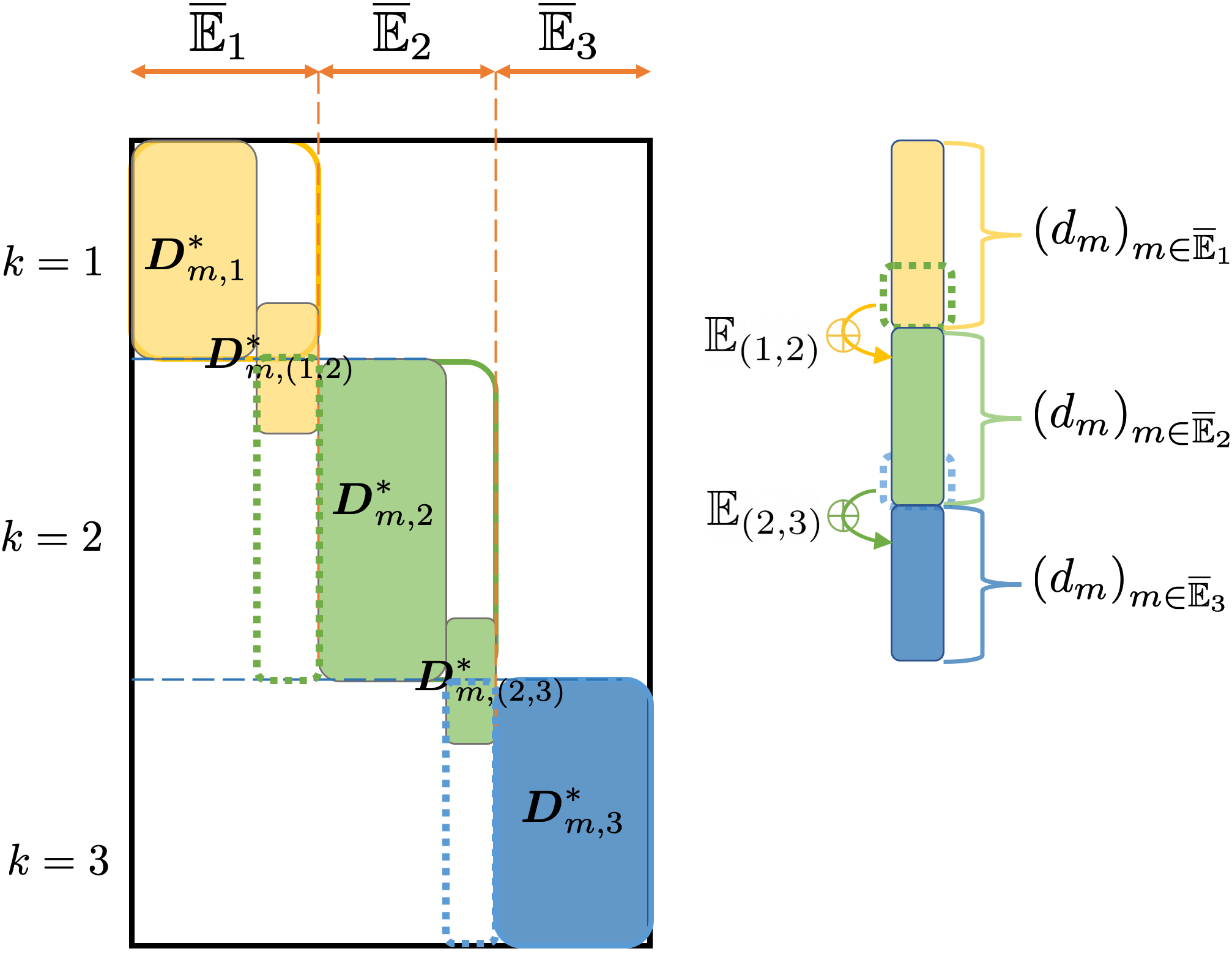}  \\
    (a) Computation of $\Db \xb$ 
    &   (b) Computation of $\Db^* \db$
\end{tabular}

\caption{\label{fig:comp-comm-D}\small
Illustration of the computation of (a) $\Db \xb$ for $\xb \in \HHH$, and (b) $\Db^* \db$ for $\db \in \GGG$. $\Db$ corresponds to the block-sparse matrix from Figure~\ref{fig:opsplitting}, and $\Db^*$ to its adjoint. The yellow, green and blue colours identify how $\Db$, $\Db^*$, $\xb$ and $\db$ are distributed over the workers $k=1$, $k=2$, and $k=3$, respectively. 
Communications are illustrated by coloured arrows over subparts of $\xb$ and $\db$, respectively. 
Continuous coloured lines on $\xb$ and $\db$ identify vertices and hyperedge weights required to perform local computations. The corresponding blocks in $\Db$ and $\Db^*$ are emphasized accordingly. Vertices and hyperedge weights to be sent from a worker to another are delineated in dashed lines, with the corresponding blocks within $\Db$ and $\Db^*$ highlighted accordingly.
(a) Vertices are on columns, and hyperedges on rows. 
(b) Vertices are on rows, and hyperedges on columns. The ``plus'' symbol emphasizes that hyperedge weights are aggregated upon reception.
}
\end{figure}

\Cref{fig:comp-comm-D}(a) illustrates the computation of $\Db \xb$, for $\xb \in \HHH$ with the block-sparse matrix $\Db$ of Example~\ref{ex:block-sparse}, see also Figure~\ref{fig:blocksparse}. Figure~\ref{fig:comp-comm-D}(b) illustrates the computation of $\Db^* \db$, for $\db \in \GGG$ with some adjoint symmetry. \\
Subparts $\eV_{(1,2)}$ of $\xb_2$ and $\eV_{(2,3)}$ of $\xb_3$ need to be communicated to workers $1$ and $2$, respectively, to compute $\Db \xb$. 
In Figure~\ref{fig:comp-comm-D}(a), for worker $k=1$, the quantity $(\Db_{m,\overline{\Wc}_m} \xb_{\overline{\Wc}_m})_{m\in \eE_{\Rc_1}}$ can be computed by multiplying the subpart of $\Db$ corresponding to $(\Db_{m, \overline{\Wc}_m})_{m \in \eE_{\Rc_1}}$ (see \Cref{fig:comp-comm-D}~(a), horizontal rectangle in continuous yellow lines) with $(\xb_{\overline{\Wc}_m})_{m\in \eE_{\Rc_1}} = (\xb_1, \xb_{(1,2)})$ (vertical rectangle in continuous yellow lines). 
This quantity is computed on worker $1$, once worker $2$ has communicated $\xb_{(1,2)}$ to worker $k=1$. 
Similarly, for worker $k=2$, the quantity $(\Db_{m,\overline{\Wc}_m} \xb_{\overline{\Wc}_m})_{m\in \eE_{\Rc_2}}$ can be computed by multiplying the subpart of $\Db$ corresponding to $(\Db_{m , \overline{\Wc}_m})_{m \in \eE_{\Rc_2}}$ (horizontal rectangle in continuous green lines) with $(\xb_{\overline{\Wc}_m})_{m\in \eE_{\Rc_2}} = (\xb_2, \xb_{(2,3)})$ (vertical rectangle in continuous green lines). 
This quantity is computed on worker $2$, once worker $3$ has communicated $\xb_{(2,3)}$ to worker $k=2$. \\
To compute $\Db^* \db$, communications are performed after computing subparts of $\Db^* \db$. 
In Figure~\ref{fig:comp-comm-D}(b), for worker $k=1$, the quantity $[\Db^* \db]_1$ can be computed by multiplying the subpart of $\Db^*$ corresponding to $(\Db^*_{m, 1})_{m \in \overline{\eE}_1}$ (continuous-line yellow rectangle, top left) with $(d_m)_{m \in \overline{\eE}_1}$. For the computation of $[\Db^* \db]_1$, no communication with other workers is needed in this example. 
For worker $k=2$, $[\Db^* \db]_2$ needs to be decomposed between parts of $\Db^*$ stored on worker $k=2$, and parts of $\Db$ that are stored on other workers, i.e., $k=1$ for this example. For the parts of $\Db^*$ stored on worker $k=2$, the subpart of $\Db^*$ corresponding to $(\Db^*_{m, 2})_{m \in \overline{\eE}_2}$ (continuous-line green rectangle) needs to be multiplied with $(d_m)_{m \in \overline{\eE}_2}$. For the parts of $\Db^*$ stored on $k=1$, the subpart of $\Db^*$ corresponding to $(\Db_{m, 2}^*)_{m \in \eE_{(1,2)}})$ (dotted-line green rectangle) needs to be multiplied with $(d_m)_{m\in \eE_{(1,2)}}$ (dotted-line green rectangle on vector $\db$). This second part is computed on worker $k=1$ (using only the yellow part of the dotted-line green rectangle), and then communicated and aggregated on worker $k=2$.  
For worker $k=3$, the decomposition of the quantity $[\Db^* \db]_3$ is similar to the one taken for worker $k=2$.

% -------------------------------------------------------------------------
\subsection{Distributed SPMD architecture}
\label{Ssec:algo:dist-spmd}
% -------------------------------------------------------------------------

The distributed block-coordinate sampler described in the previous section can benefit from an implementation on a Single Program Multiple Data (SPMD) architecture. 
In contrast with a client-server configuration, all the workers 
execute the same task on a subset of each block of parameters and observations~\cite{Darema2001}. This enables the hypergraph structure of $\Db$ to be exploited to reduce the number and volume of the communications.

In practice, \Cref{mod:hypergraph} and \Cref{ass:gen} ensure that most of the operations of Algorithm~\ref{algo:SPA-simple-sampling} are compatible with an SPMD architecture.
The separability Assumptions~\ref{ass:gen}-\ref{ass:gen:ii} and~\ref{ass:gen:iii} ensure that $K$ conditionally independent blocks 
can be formed for variables $\vb \in \GGG$ in the dual domain.
The separability of $g$ further implies that the evaluation of its proximity operator is easy to parallelize. Combining the definition of $\phi_{\alpha}$ and $\psi_{\beta}$ 
(\Cref{ass:psgla-spa}-\ref{ass:psgla-spa:phi_psi}) 
with the structure of $\Db$ finally enables Algorithm~\ref{algo:SPA-simple-sampling} to be reformulated using blocks of parameters, each stored on (only) one of the $K$ workers. Using an SPMD architecture for Algorithm~\ref{algo:single-synch} offers several advantages listed below.

\begin{enumerate}
    \item Load balancing: computation costs can be equally shared among the workers, as they all operate similar tasks on a subset of (overlapping) parameters.
    \item Parallelization flexibility: an SPMD architecture can be readily used to address~\eqref{eq:dist-gen-lin} under \Cref{ass:gen}, whereas a client-server architecture cannot (no conditionally independent blocks of variables as in~\cite{Rendell2021})
    \item Memory and computing costs per worker: each worker can be assigned 
    a block of parameters. This opportunity can significantly reduce the computing and memory costs per worker.
    \item Communications and data locality:
    most of the parameters required to perform operations on a worker can be directly stored on the same worker, and do not require to be communicated (\emph{data locality}). The conditions given in~\Cref{Sssec:Model:cond-eff-dist} guarantee that communication costs are limited: only a few elements need to be retrieved by each worker from a small number of connected workers.
\end{enumerate}

\if0\production
{\Cref{appendix:SPA-PSGLA-multi} } 
\else
{The online appendix} 
\fi
addresses the multi-term extension of~\eqref{eq:dist-gen-lin}. Note that a client-server architecture can also be used to address this case. However, it cannot take advantage of the structure of the hypergraphs underlying the linear operators $(\Db_i)_{1 \leq i \leq I}$. The number of workers it can accommodate is also restricted to the number of conditionally independent blocks of variables, as in~\cite{Vono_etal_2020, Rendell2021} (see \Cref{Ssec:dist:client-server-vs-spmd}).

% *************************************************************************
\section{Application to supervised image deconvolution}
\label{Sec:application}
% *************************************************************************
To show the performance of the proposed distributed SGS, we use it to solve an image deconvolution problem. 
Image deconvolution is an inverse problem that consists in inferring an unknown variable $\overline{\xb}$ from observations $\yb$. 
Observations and parameters are typically related by a model of the form
\begin{equation} \label{eq:full_model}
    \yb = \Dc(\Ab \overline{\xb}),
\end{equation}
where the linear operator $\Ab $ models the acquisition process, and $\Dc$ models random perturbations -- referred to as noise -- damaging the clean data $\Ab \overline{\xb}$. Bayesian inference relies on the posterior distribution of the random variable $\xb$ to estimate the true value $\overline{\xb}$. The posterior distribution, often of the form~\eqref{eq:dist-gen-lin}, combines information from the likelihood -- related to the observations $\yb$ -- and the prior. For instance, in image processing, a usual choice consists in promoting sparsity in a selected basis, e.g., a gradient basis leading to the total variation (TV) regularization~\cite{Rudin1992}, or a wavelet basis~\cite{Mallat2009}. Prior information can also encompass constraints based on the physics of the data acquisition process, such as nonnegativity for intensity images~\cite{Cai2018, Thouvenin2021}, or polarization constraints~\cite{Birdi2018}.

We consider a supervised image deconvolution problem corrupted by Poisson noise. The induced hypergraph structures are used to adopt an SPMD strategy. Note that another application of the proposed sampler has also been studied in \cite{Thouvenin2022eusipco} for an inpainting problem (\emph{i.e.}, with $\Ab$ a selection matrix) corrupted by additive white Gaussian noise under a TV prior.

The application example presented in this section is associated with a distribution involving $\ncomposite = 2$ composite terms. The notation used below corresponds to the one introduced in
\if0\production
{~\Cref{appendix:SPA-PSGLA-multi}}%
\else
{~the online appendix}%
\fi
for distributions with multiple composite terms.

% -------------------------------------------------------------------------
\subsection{Problem statement}
\label{Ssec:application:problem_statement}
% -------------------------------------------------------------------------

Supervised Poisson deconvolution aims at inferring an unknown image $\overline{\xb} \in \RR^N$ from observations $\yb = (y_m)_{1\le m \le M} \in \RR^M$ such that
\begin{equation} \label{eq:model_convolution}
    (\forall m \in \{1, \ldots, M\}) \quad
    y_m \sim \Pc ( [\Db_1 \overline{\xb}]_m ),
\end{equation}
where $\Db_1 \in \RR^{M \times N}$ is a convolution operator derived from a kernel of size $L = L_1 \times L_2 \ll N$, and $\Pc(\mu)$ is a Poisson distribution with mean $\mu$. 
In this context, $\overline{N}=N$ and, for every $n\in \{1, \ldots, N\}$, $\HH_n = \RR$. Similarly, $\overline{M}=M$ and, for every $m\in \{1, \ldots, M\}$, $\GG_m = \RR$.

We propose to solve this problem with 
a hybrid prior combining a non-negativity constraint and a TV regularization~\cite{Rudin1992}. Such prior has for instance been considered in \cite{Figueiredo2010,Vono2019icassp}. 
The resulting posterior distribution is given by
\begin{equation}
    \dist{}{\xb} \propto \exp \left( -f(\xb) - g_1(\Db_1 \xb) - g_2(\Db_2 \xb) \right),
\end{equation}
where
$g_1\colon \mathbb{R}^M \to ]-\infty, +\infty] \colon (z_m)_{1 \le m \le M}  \mapsto \sum_{m=1}^M g_{1,m}(z_m)$ is the data-fidelity term, with
\begin{align} 
    &(\forall m \in \{1, \ldots, M\}) \quad
    g_{1,m}(z_m) =  - y_m \log( z_m ) + z_m, \label{eq:data_fidelity}
\end{align}
due to the Poisson distribution, $f =\iota_{[0,+\infty[^N} \colon \RR^N \to ]-\infty, +\infty]$ is the indicator function of the positive orthant, and $g_2 \circ \Db_2$ models the discrete isotropic TV~\cite{Rudin1992}. 
Precisely, $\Db_2 \colon \RR^N \to \RR^{2\times N}$ is the concatenation of the vertical and horizontal discrete gradients, and $g_2: \mathbb{R}^{2 \times N} \to ]-\infty, +\infty]$ is the $\ell_{2,1}$-norm
\begin{equation}
    \begin{split}
        &(\forall \zb = (\zb_n)_{1\leq n \leq N} \in \RR^{2 \times N}) \quad g_2(\zb) = \sum_{n=1}^{N} g_{2, n} (\zb_n), \\
        &(\forall n \in \{1, \ldots, N\}) (\forall \zb_n \in \RR^2) \quad
        g_{2, n}(\zb_n) = \kappa \| \zb_n \|_2, \quad \text{where }\kappa > 0.
    \end{split}
\end{equation}

As a first step towards a distributed implementation, the AXDA approach (see%
\if0\production
{~\Cref{appendix:SPA-PSGLA-multi}}%
\else
{~the online appendix}%
\fi
for details) is used to approximate $ \dist{}{\xb} $ by
\begin{multline} \label{eq:posterior_deconvolution}
     \dist{\alphab, \betab}{\xb,(\zb_\idcomposite, \ub_\idcomposite)_{1 \le \idcomposite \le 2}} 
     \!\propto\! \exp \bigg( \!\! - f(\xb) - \sum_{\idcomposite=1}^2 \big( g_\idcomposite(\zb_\idcomposite) + \phi_{\idcomposite, \alpha_\idcomposite}(\Db_\idcomposite \xb, \zb_\idcomposite-\ub_\idcomposite)  + \psi_{\idcomposite, \beta_\idcomposite}(\ub_\idcomposite) \big) \bigg),
\end{multline}
where $(\alpha_1, \alpha_2, \beta_1, \beta_2) \in ]0, +\infty[^4$, and the functions $(\phi_{\idcomposite, \alpha_\idcomposite}, \psi_{\idcomposite, \beta_\idcomposite})_{1 \le \idcomposite \le 2}$ are defined in~\eqref{eq:phi-psi-def}.

The directed acyclic graph reported in~\Cref{fig:dag} summarizes the structure of the approximate posterior distribution~\eqref{eq:posterior_deconvolution}, highlighting dependencies between the variables. Note that $(\zb_1, \ub_1)$ and $(\zb_2, \ub_2)$ are the only conditionally independent blocks of variables.

\begin{figure}[htbp]
    \centering
    \includegraphics[keepaspectratio, width=0.45\textwidth]{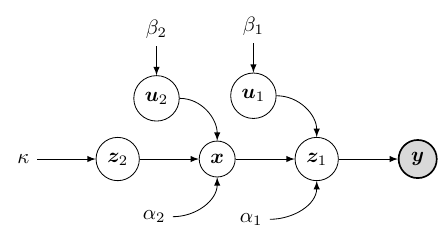}
    
    \vspace*{-0.5cm}
    
    \caption{\small
    Directed acyclic graph describing the factorization of the approximate posterior distribution~\eqref{eq:posterior_deconvolution}. Circled variables correspond to random variables. Other variables are fixed \emph{a priori}.} 
    \label{fig:dag}
\end{figure}

% -------------------------------------------------------------------------
\subsection{Proposed SPMD implementation}
\label{Ssec:application:spmd}
% -------------------------------------------------------------------------

The proposed approach is applicable when a single operator is involved in the model (as in Section~\ref{Sec:dist-single-algo}), but also when multiple operators are involved (see%
\if0\production
{~\Cref{appendix:SPA-PSGLA-multi}}%
\else
{~the online appendix}%
\fi
), as in the present example. 
Note that a client-server architecture could also be considered. However, it would drastically increase the communication costs, as full-size variables would need to be duplicated on all workers or exchanged (see \Cref{Ssec:algo:dist-spmd}).
Note that~\eqref{eq:model_convolution} indicates that the variables $(y_m)_{1 \leq m \leq M}$ are assumed independent. This implies that the observations can be partitioned into $K$ statistically independent blocks $\yb = (\yb_k)_{1 \leq k \leq K}$, with $\yb_k \in \GGG_k$ a block of observations to be stored on the worker $k$. In this case, each worker can be assigned a block of observations $\yb_k$ and a corresponding block of parameters $\xb_k$. Thus, no observation needs to be exchanged between the workers.

We propose to use the proposed distributed SGS% 
\if0\production
{ Algorithm~\ref{algo:multi-synch}}%
\else
{ Algorithm~2 from the online appendix}%
\fi
with $\ncomposite=2$ linear operators. In addition, for every $t\in \{0, \ldots, T\}$ and $k\in \{1, \ldots, K\}$, we use PSGLA transitions to compute $(\zb_{1, k}^{(t+1)}, \zb_{2, k}^{(t+1)})$, while $(\ub_{1, k}^{(t+1)}, \ub_{2, k}^{(t+1)})$ are drawn from their conditional distribution. More precisely, we have, for $\idcomposite \in \{1, 2\}$,
\begin{align}
    \zb_{\idcomposite, k}^{(t+1)}
    &=   \prox_{\eta_\idcomposite, g_{\idcomposite,k}} \left( \zb_{\idcomposite,k}^{(t)} - \eta_\idcomposite\alpha_\idcomposite^{-2} \big( \zb_{\idcomposite,k}^{(t)} - \vb_{\idcomposite,k}^{(t+1)} + \ub_{\idcomposite,k}^{(t)} \big) + \sqrt{2\eta_\idcomposite} \xib_{\idcomposite,k}^{(t)} \right) ,
\end{align}
where $g_{\idcomposite, k} = \sum_{m\in \overline{\eE}_{\idcomposite,k}} g_{\idcomposite,m}$, $\eta_\idcomposite \in ]0, \alpha_\idcomposite^{-2}[$, and $\xi_{\idcomposite,k}^{(t)} \sim \Nc(0, \Idb)$.
The proximity operators involved in the resulting algorithm can be found, e.g., in~\cite{Figueiredo2010,Komodakis2015}.
The distributed implementation of $(\Db_\idcomposite)_{1 \leq \idcomposite \leq 2}$ and $(\Db_\idcomposite^*)_{1 \leq \idcomposite \leq 2}$ is detailed below, using a 2D Cartesian grid of $K$ workers.

\begin{figure}[htbp]
    \centering
    \begin{tabular}{@{}c@{}c@{}}
    \includegraphics[keepaspectratio, width=0.5\textwidth]{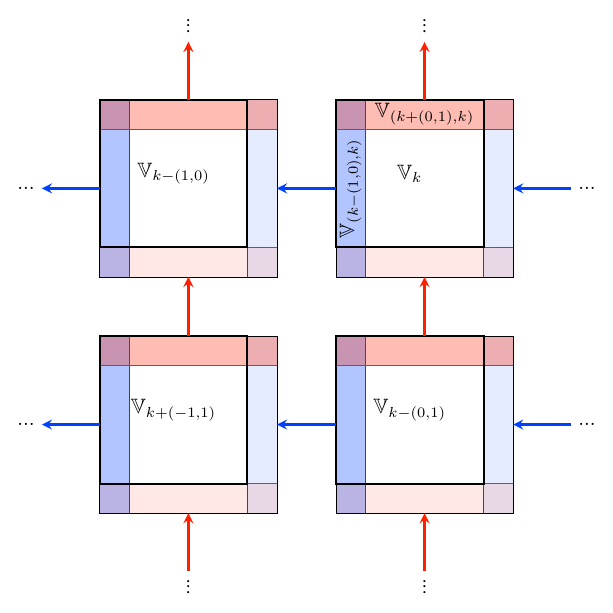}
    &	\includegraphics[keepaspectratio, width=0.5\textwidth]{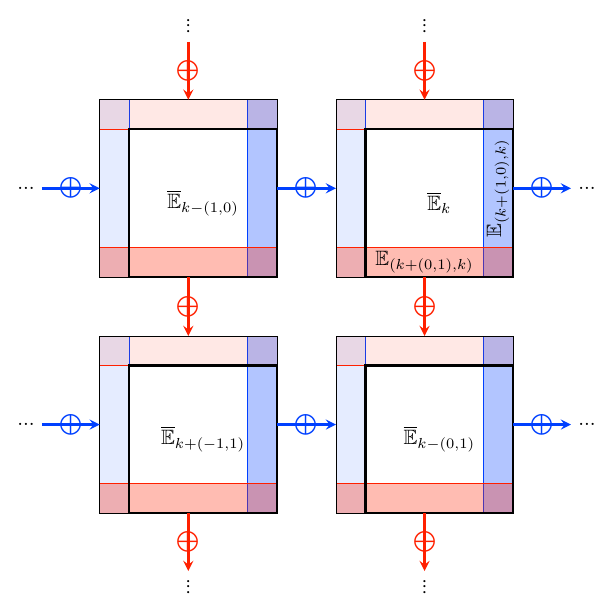}	\\[-0.2cm]
    (a) Distributed implementation of $\Db_1$
    &    (b) Distributed implementation of $\Db_1^*$
    \end{tabular}
    
    \vspace*{-0.2cm}
    \caption{\small
    Communication patterns involved in the distributed implementation of (a) $\Db_1$ and (b) $\Db_1^*$. 
    Each worker $k$ is required to communicate with two contiguous workers along each direction successively. 
    Colored arrows indicate whether vertices (a) or hyperedge weights (b) are received from or sent to a nearby worker. Vertices (a) and hyperedge weights (b) sent and received along the horizontal and vertical communication steps are highlighted in blue and red, respectively. (b) The circled ``plus'' symbols represent contributions aggregated with the corresponding hyperedge weights of the reception worker. These operations correspond to the second term in~\eqref{eq:aggregate_for_adjoint}.
    }
    \label{fig:app:comm2D}
\end{figure}

\paragraph{Distributed implementation of $\Db_1$} 

\Cref{fig:app:comm2D} illustrates the preliminary communications required by the distributed implementation of $\Db_1$. Each worker $k\in \{1, \ldots, K\}$ needs to collect $(\xb_{1,(k,k')})_{k' \in \Rc_k}$, as illustrated in~\Cref{fig:app:comm2D}(a). A first communication step occurs along the horizontal axis. Each worker $k$ sends vertices from its left-hand-side border (of width $L_2$, the horizontal width of the blur kernel) to its neighbour on the left (\Cref{fig:app:comm2D}(a), dark blue areas), and receives vertices from the neighbour on its right (\Cref{fig:app:comm2D}(a), light blue areas). Once the first step is complete, a second communication step occurs along the vertical axis with the top and bottom neighbours of the worker $k$ (\Cref{fig:app:comm2D}(a), dark and light red areas, of width $L_1$). The local operator $(\Db_{1,m,\overline{\Wc}_{1,m}})_{ m \in  \eE_{1,\Rc_k} }$ is equivalent to considering a convolution matrix and a selection operator. The latter ensures that  the correct boundaries are considered (only the convolution outputs which have not interacted with the boundaries of $(\xb_{\overline{\Wc}_{1, m}})_{ m \in  \eE_{\Rc_k} }$ are valid).

\paragraph{Distributed implementation of $\Db_1^*$} 

The distributed implementation of $\Db_1^*$ is similar to that of $\Db_1$. A first communication step occurs along the horizontal axis. Each worker $k$ sends hyperedge weights from its right-hand-side border (of width $L_2$) to its neighbour on the right (\Cref{fig:app:comm2D}(b), dark blue areas), and aggregates weights from the neighbour on its left (\Cref{fig:app:comm2D}(b), light blue areas). Similar communications occurs along the vertical axis with the bottom and top neighbours of the worker $k$ (\Cref{fig:app:comm2D}(b), dark and light red areas, of width $L_1$). A convolution is then applied on each worker to all the weights locally available, followed by a selection operator to ensure correct boundaries are used.

\paragraph{Distributed implementation of $\Db_2$}

The distributed implementation of $\Db_2$ requires the same communication pattern as $\Db_1$, successively exchanging messages with width 1 along the horizontal and vertical directions. The operator $(\Db_{2,m,\overline{\Wc}_{2,m}})_{ m \in  \eE_{2,\Rc_k} }$ corresponds to a local discrete gradient operator, using the boundaries retrieved during the communication step. Note that, for $k \in \{1, \dotsc, K\}$ and $k' \in \Rc_k$, the set of vertices $\eV_{2, (k,k')}$ to be communicated by the worker $k'$ to $k$ is such that $\eV_{2, (k,k')} \subset \eV_{1, (k,k')}$. The total number of elements to be communicated in the algorithm is thus reduced, as vertices required by the distributed implementation of $\Db_2$ already need to be communicated for $\Db_1$.

\paragraph{Distributed implementation of $\Db_2^*$}

The distributed implementation of $\Db_2^*$ requires the same communication pattern as $\Db_1^*$, successively exchanging messages with width 1 along the horizontal and vertical directions. A local adjoint discrete gradient operator is then applied on each worker $k$ to all the hyperedge weights available locally.

% -------------------------------------------------------------------------
\subsection{Experiments}
\label{Ssec:application:experiments}
% -------------------------------------------------------------------------

The proposed approach is evaluated in terms of estimation quality and scalability on the deconvolution problem of~\Cref{Ssec:application:problem_statement}. Results are compared with those of the reference serial SGS algorithm from~\cite{Vono2019icassp}.

% -------------------------------------------------------------------------
\subsubsection{Simulation setting} \label{Ssec:experiments:settings}

All the experiments have been conducted on a single computer equipped with two 2.1~GHz, 18-core, Intel Xeon E5-2695 v4 series processors (36 CPU cores in total). In this setting, a worker corresponds to a process running on one CPU core. The proposed distributed sampler has been implemented in Python using the \texttt{mpi4py} library~\cite{Dalcin2021}. Codes to reproduce the experiments are available at \url{https://gitlab.cristal.univ-lille.fr/pthouven/dsgs}.

The proposed approach is compared with the SGS algorithm proposed in~\cite{Vono2019icassp}. The latter relies on a different splitting strategy compared to \Cref{Ssec:application:problem_statement}, using $\ncomposite = 3$ operators (see~\cite{Vono2019icassp} for further details).
In practice, MYULA transition kernels are leveraged to sample from conditional distributions involving non-smooth potential functions. This choice of transition kernel and splitting strategy requires the proximity operator of the TV norm to be evaluated at each iteration of the sampler, using a primal-dual algorithm~\cite{Chambolle2011}.

Performance is assessed in terms of average runtime per iteration (with associated standard deviation) and quality of both the \emph{minimum mean square error} (MMSE) and \emph{maximum a posteriori} (MAP) estimators. The estimators are denoted $\xmmse$ and $\xmap$, respectively. Reconstruction quality is quantified with the structural similarity index (SSIM)~\cite{Wang2004} and the signal-to-noise ratio (SNR)
expressed in dB. Associated 95\% credibility intervals (CIs) are also reported.

The sampler from~\cite{Vono2019icassp} has been applied with $\kappa = 1$ and $((\alpha_\idcomposite^2, \beta_\idcomposite^2)_{1 \leq \idcomposite \leq 3}) = \mathbf{1}_6$, where $\mathbf{1}_Q \in \mathbb{R}^Q$ is a vector with entries all equal to 1. The proposed approach uses $\kappa = 1$ and $((\alpha_\idcomposite^2, \beta_\idcomposite^2)_{1 \leq \idcomposite \leq 2}) = \mathbf{1}_4$. For both algorithms, $\nmc = 5 \times 10^3$ samples have been generated to form $\xmmse$, $\xmap$ and the 95\% CIs, discarding $\nbi = 2 \times 10^3$ burn-in samples.

% ------------------------------------------------------------------------
\subsubsection{Experiment results} \label{Ssec:experiments:results}

\paragraph*{Estimation quality}

Using $K=1$ worker, the proposed approach is compared with~\cite{Vono2019icassp}. Ground truth images with different maximum intensity levels $\overline{x}_{\max} = \max_{1 \leq n \leq N} \overline{x}_n$ have been considered. The values $\overline{x}_{\max} \in \{20, 30\}$ have been adopted for the following datasets:
\begin{enumerate}
    \item \texttt{house} image ($N = 256^2$) with a normalized Gaussian kernel of size $L \in \{3^2, 7^2\}$;
    \item \texttt{peppers} image ($N = 512^2$) with a normalized Gaussian kernel of size $L \in \{7^2, 15^2\}$.
\end{enumerate}

The results reported in \Cref{tab:results_serial} show that the estimators formed with the proposed approach have higher quality metrics compared to the method proposed in~\cite{Vono2019icassp}. In addition, the computing time required by the proposed sampler is between 1.5 and 2 times smaller than~\cite{Vono2019icassp}. This discrepancy comes from the difference in the splitting strategies considered by the two methods. In particular, the splitting approach proposed in~\cite{Vono2019icassp} requires the evaluation of the proximal operator of the TV norm, obtained as the output of an iterative optimization algorithm. The difference in splitting can also affect the quality of the resulting AXDA approximation, as can be seen in the difference in quality of the estimators. The MMSE estimator reported in \Cref{fig:results:peppers30} for \cite{Vono2019icassp} appears much smoother compared to the proposed algorithm for the same regularization parameter $\kappa$.
Note that the uncertainty level of the proposed approach is slightly lower and appears more diffuse than~\cite{Vono2019icassp}.

\begin{table}[htbp]
    \centering
    \resizebox{0.98\textwidth}{!}{%
    \begin{tabular}{lllrrrrrrrr} \toprule
        & Dataset & Algo. & $\SNR(\xmmse)$ & $\SNR(\xmap)$ & $\SSIM(\xmmse)$ & $\SSIM(\xmap)$ & Time per iter.  & Runtime \\
        &&&&&&& ($\times 10^{-2}$ s) & ($\times 10^{2}$ s) \\ \midrule
        \multirow{9}{*}{\rotatebox{90}{$\overline{x}_{\max} = 20$}} & House & \cite{Vono2019icassp} & 18.37 & 15.08 & 0.60 & 0.20 & 5.59 (0.12) & 1.68 \\
        & ($L=3^2$) & Proposed & 20.21 & 16.18 & 0.60 & 0.26 & \textbf{2.49 (0.09)} & \textbf{0.75} \\ \cmidrule{2-9}
        & House & \cite{Vono2019icassp} & 18.00 & 14.92 & 0.60 & 0.20 & 8.52 (0.22) & 2.56 \\
        & ($L=7^2$) & Proposed & 19.86 & 15.98 & 0.59 & 0.24 & \textbf{4.88 (0.13)} & \textbf{1.47} \\ \cmidrule{2-9}
        & Peppers & \cite{Vono2019icassp} & 18.98 & 15.08 & 0.66 & 0.25 & 16.71 (0.38) & 5.01 \\
        & ($L=7^2$) & Proposed & 20.52 & 16.07 & 0.67 & 0.30 & \textbf{7.73 (0.33)} & \textbf{2.32} \\ \cmidrule{2-9}
        & Peppers & \cite{Vono2019icassp} & 18.90 & 15.07 & 0.66 & 0.25 & 36.68 (0.33) & 11.01 \\
        & ($L=15^2$) & Proposed & 20.52 & 16.04 & 0.66 & 0.30 & \textbf{23.32 (0.40)} & \textbf{7.00} \\
        \midrule\midrule
        \multirow{9}{*}{\rotatebox{90}{$\overline{x}_{\max} = 30$}} & House & \cite{Vono2019icassp} & 18.23 & 16.46 & 0.64 & 0.29 & 4.71 (0.10) & 1.41 \\
        & ($L=3^2$) & Proposed & 20.18 & 17.84 & 0.66 & 0.34 & \textbf{2.03 (0.05)} & \textbf{0.61} \\ \cmidrule{2-9}
        & House & \cite{Vono2019icassp} & 17.89 & 16.25 & 0.64 & 0.28 & 7.01 (0.10) & 2.10 \\
        & ($L=7^2$) & Proposed & 19.80 & 17.55 & 0.65 & 0.33 & \textbf{3.93 (0.10)} & \textbf{1.18} \\ \cmidrule{2-9}
        & Peppers & \cite{Vono2019icassp} & 19.03 & 16.77 & 0.69 & 0.35 & 17.04 (0.45) & 5.11 \\
        & ($L=7^2$) & Proposed & 20.71 & 17.96 & 0.71 & 0.40 & \textbf{10.26 (1.22)} & \textbf{3.08} \\ \cmidrule{2-9}
        & Peppers & \cite{Vono2019icassp} & 19.00 & 16.72 & 0.69 & 0.35 & 36.81 (0.44) & 11.04 \\
        & ($L=15^2$) & Proposed & 20.74 & 17.98 & 0.71 & 0.41 & \textbf{23.22 (0.34)} & \textbf{6.97} \\
        \bottomrule
    \end{tabular}
    }
    \caption{\small
    Comparison between~\cite{Vono2019icassp} and the proposed approach with $\nworkers = 1$. Datasets have been generated from ground truth images with different maximum intensity $\overline{x}_{\max} \in \{20, 30\}$ and convolution kernel sizes $L \in \{3, 7, 15\}$. Results are reported in terms of estimation quality, average runtime per iteration (with standard deviation) and total runtime.}
    \label{tab:results_serial}
\end{table}

\begin{figure}[htbp]
    \captionsetup[subfigure]{justification=centering}
    \centering
    \small
    \begin{subfigure}[t]{0.31\textwidth}
         \centering
         \includegraphics[keepaspectratio, width=0.95\textwidth]{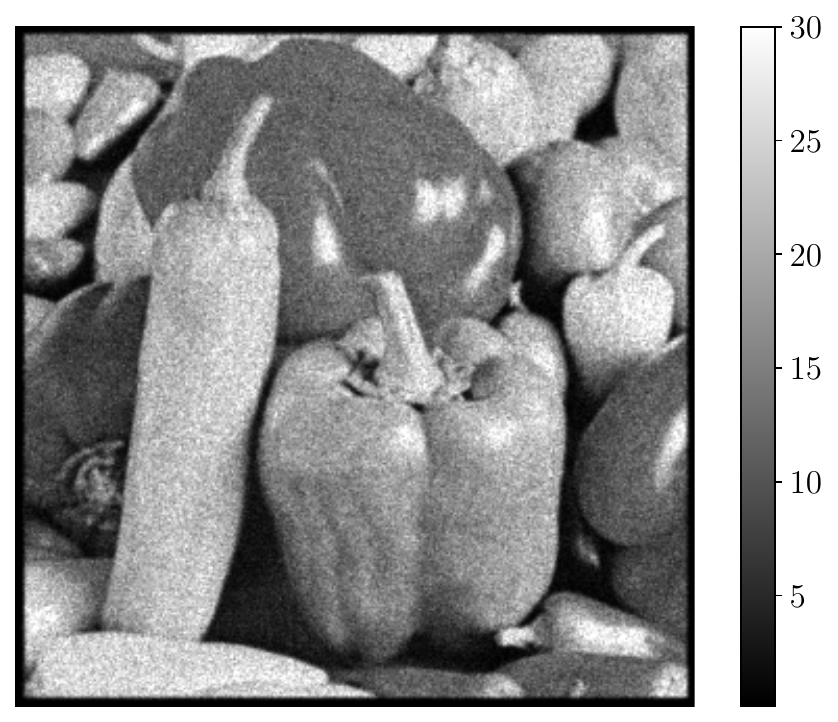} 
         \includegraphics[keepaspectratio, width=0.95\textwidth]{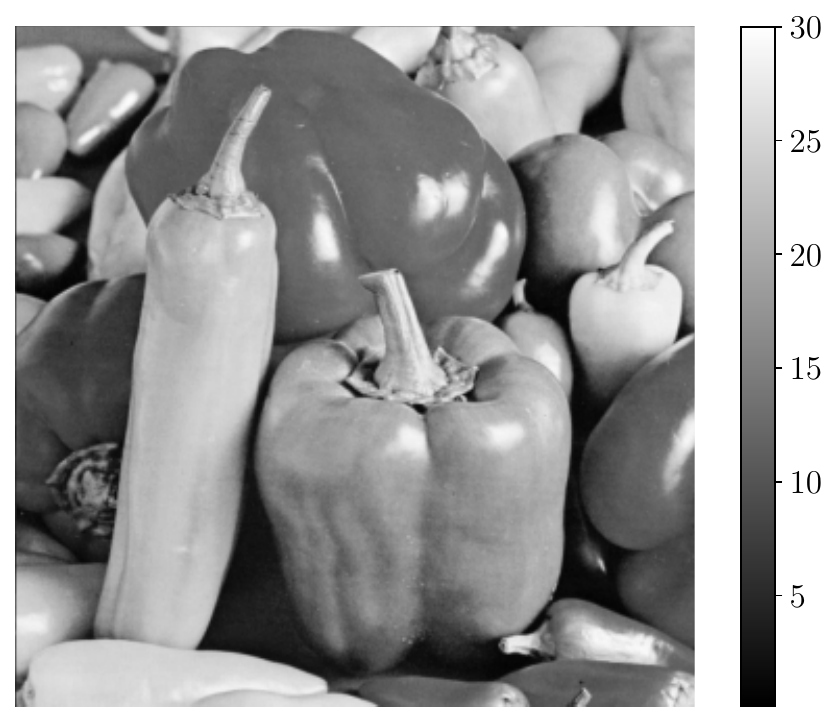}
         \caption{Observations and \\ground truth}
     \end{subfigure}
     \begin{subfigure}[t]{0.31\textwidth}
         \centering
         \includegraphics[keepaspectratio, width=0.95\textwidth]{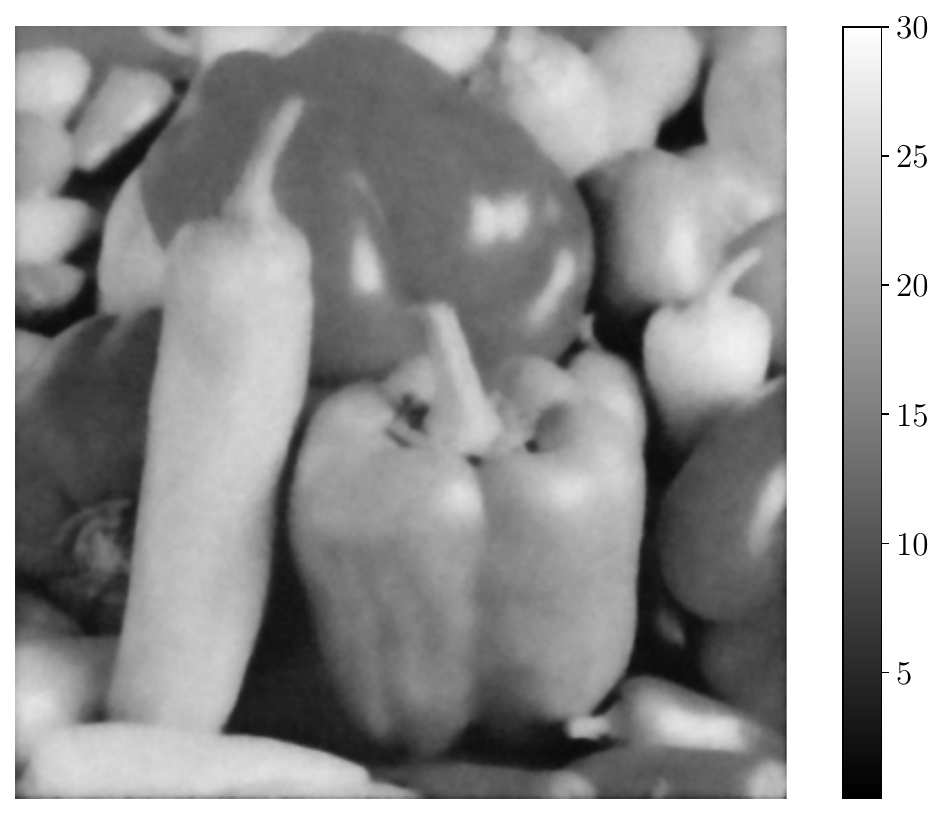}
         \includegraphics[keepaspectratio, width=0.95\textwidth]{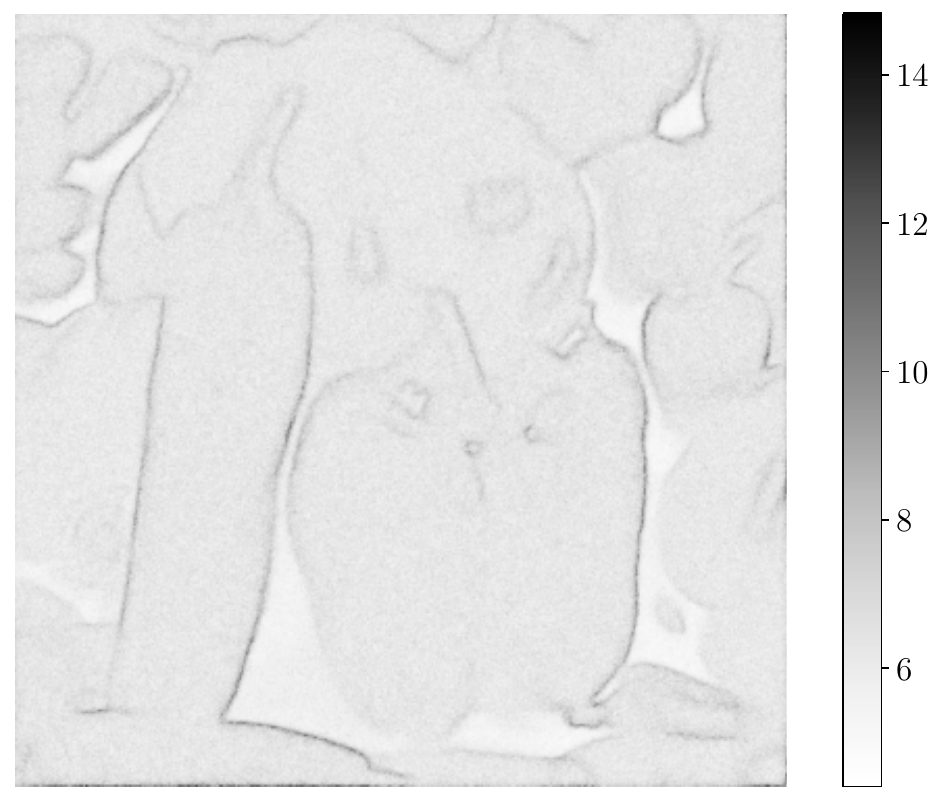}
         \caption{MMSE estimator and \\ 95\% CI~\cite{Vono2019icassp}}
     \end{subfigure}
     \begin{subfigure}[t]{0.31\textwidth}
         \centering
         \includegraphics[keepaspectratio, width=0.95\textwidth]{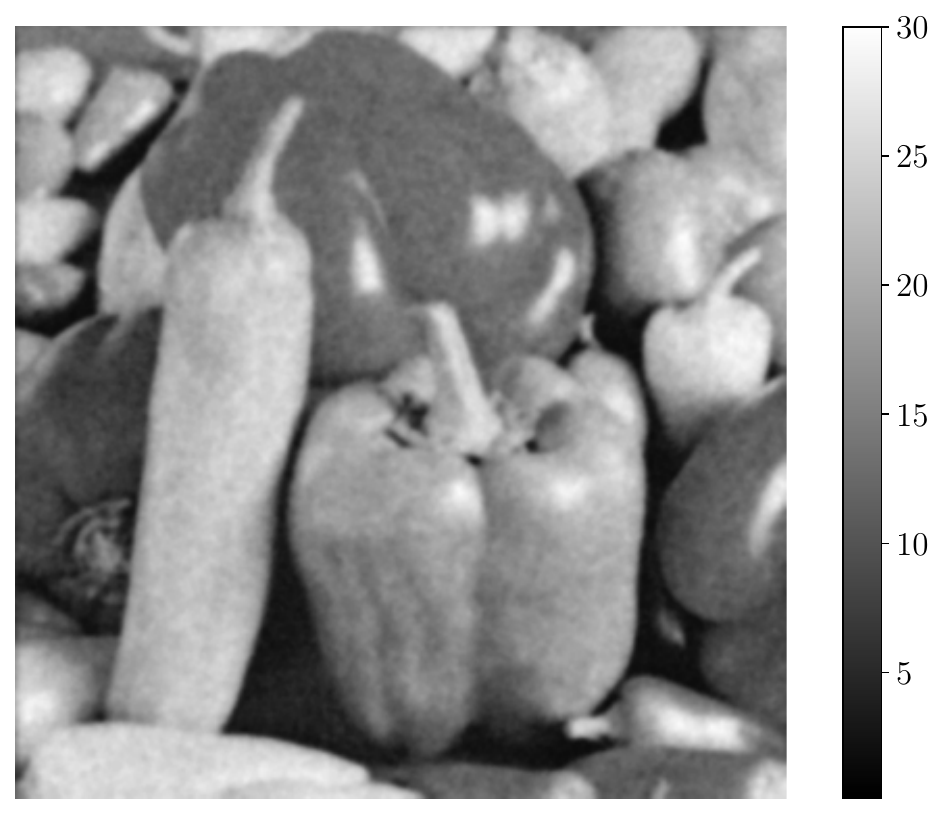}
         \includegraphics[keepaspectratio, width=0.94\textwidth]{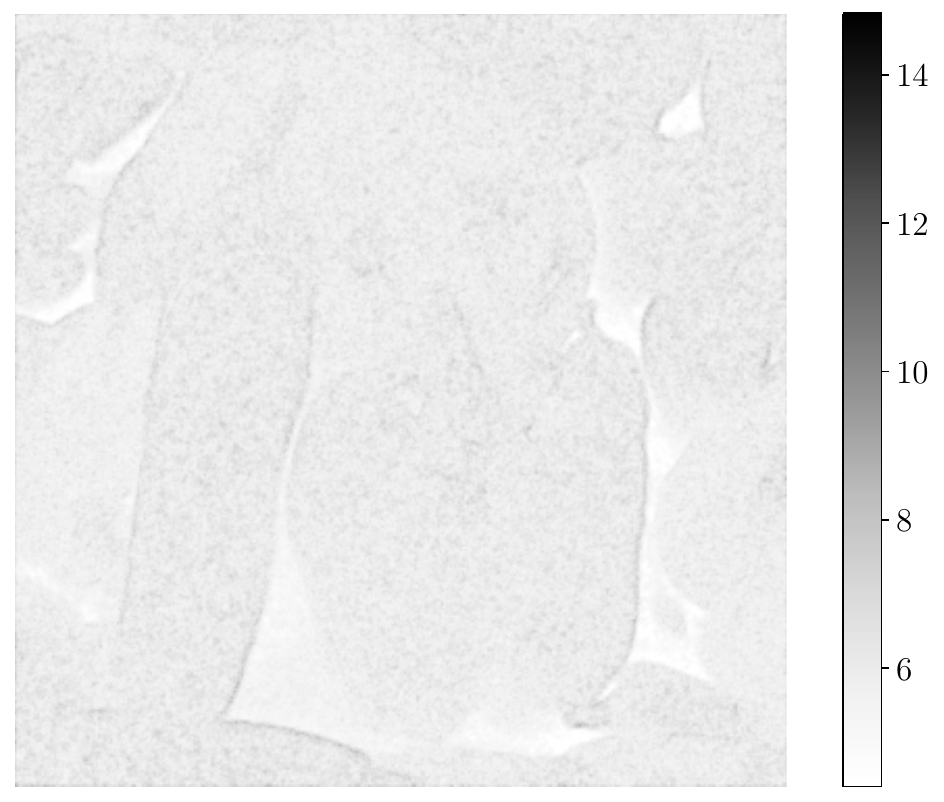}
         \caption{MMSE estimator and \\ 95\% CI~(prop., $K=1$)}
     \end{subfigure}
    
    \vspace{-0.5cm}
    
    \caption{\small
    Estimators and 95\% credibility intervals for \texttt{peppers}, with $\overline{x}_{\max} = 30$ and $L=15^2$.}
    \label{fig:results:peppers30}
\end{figure}

\paragraph*{Strong scaling experiment}

For this experiment, the behaviour of the proposed method is investigated with a varying number of workers $K \in \{1, 2, 4, 8, 16, 32\}$, using the maximum intensity level $\overline{x}_{\max} = 30$ for the \texttt{peppers} dataset ($N = 512^2$).

\Cref{tab:results_strong_scaling_small_kernel,tab:results_strong_scaling_large_kernel} show the speedup for the proposed approach when $L=7^2$ and $L=15^2$, respectively. 
In \Cref{tab:results_strong_scaling_large_kernel}, a close to ideal speedup (\emph{i.e.}, close to the number of workers $K$) is observed as $L=15^2$. Note that speedup factors larger than the number of cores may occur, depending on the cache state of the machine at the time the experiments have been run. In comparison, the ideal runtime per iteration (i.e., neglecting communication time) for~\cite{Vono2019icassp} for a client-server architecture with $K = 4$ cores is $1.48 \times 10^{-1}$ s when $L=7^2$ ($1.04 \times 10^{-1}$ s for TV-related terms), and $3.41 \times 10^{-1}$ s when $L=15^2$ ($1.02 \times 10^{-1}$ s for TV-related terms). In both cases, the time required to update the other splitting variables is almost 10 times lower. For this experiment, a client server approach thus leads to a limited runtime performance, given the heterogeneity in the complexity of the tasks assigned to workers. All these results illustrate the ability of the proposed sampler to provide estimators at a fraction of the runtime of the serial implementation by using an increasing number of cores $K$.

\paragraph*{Weak scaling experiment}

The behaviour of the proposed method is investigated when both the size of the problem and the number of workers considered simultaneously increase.
A fixed problem size per worker is considered, using $K \in \{1, 4, 16\}$. Datasets derived from upsampled versions of the \texttt{house} image are considered, with a maximum intensity level $\overline{x}_{\max} = 30$, using $(N, L) \in \{(256^2, 3^2),$ $(512^2, 7^2), (1022^2, 11^2)\}$. These numbers ensure that both the problem size $M$ and the associated number of workers $K$ evolve in the same proportions over the configurations tested.

\Cref{tab:results_weak_scaling} shows scaled speedup factors (that is, normalized by the factor of increase for $K$ and $M$) close to the number of cores used. Differences from a linear scaling may result from fixed communication costs, representing a larger cost per iteration as $L$ increases. Overall, the results illustrate the runtime stability of the approach for a fixed problem size per worker. This experiment efficiently processes a 1 million pixel image in about 2 minutes to obtain an estimator with the associated credibility intervals.

\begin{table*}[t]
    \centering
    \resizebox{0.98\textwidth}{!}{%
    \begin{tabular}{crrrrrrr} \toprule
        $\nworkers$ & $\SNR(\xmmse)$ & $\SNR(\xmap)$ & $\SSIM(\xmmse)$ & $\SSIM(\xmap)$ & Time per iter.  & Speedup & Runtime  \\
        &&&&& ($\times 10^{-2}$ s) && ($\times 10^{2}$ s) \\ \midrule
        1 & 20.71 & 17.96 & 0.71 & 0.40 & 10.26 (1.22) & \textbf{1.00} & 3.08 \\
        2 & 20.70 & 17.94 & 0.71 & 0.40 & 5.33 (0.10) & \textbf{1.93} & 1.60 \\
        4 & 20.72 & 17.93 & 0.71 & 0.40 & 4.35 (0.12) & \textbf{2.36} & 1.30 \\
        8 & 20.72 & 17.95 & 0.71 & 0.40 & 2.51 (0.20) & \textbf{4.08} & 0.75 \\
        16 & 20.73 & 17.97 & 0.71 & 0.41 & 1.23 (0.03) & \textbf{8.32} & 0.37 \\
        32 & 20.71 & 17.91 & 0.71 & 0.40 & \textbf{0.60 (0.06)} & \textbf{17.04} & \textbf{0.18} \\
      \bottomrule
    \end{tabular}
    }
    \caption{\small
    Results of the strong scaling experiment using a dataset with $\overline{x}_{\max} = 30$ and kernel size $L=7^2$. Performance is reported in terms of estimation quality, time per iteration and speedup.} 
    \label{tab:results_strong_scaling_small_kernel}
\end{table*}

\begin{table*}[thbp]
    \centering
    \resizebox{0.98\textwidth}{!}{%
    \begin{tabular}{crrrrrrr} \toprule
        $\nworkers$ & $\SNR(\xmmse)$ & $\SNR(\xmap)$ & $\SSIM(\xmmse)$ & $\SSIM(\xmap)$ & Time per iter.  & Speedup & Runtime  \\
        &&&&& ($\times 10^{-2}$ s) && ($\times 10^{2}$ s) \\ \midrule
        1 & 20.74 & 17.98 & 0.71 & 0.41 & 23.22 (0.34) & \textbf{1.00} & 6.97 \\
        2 & 20.72 & 17.94 & 0.71 & 0.40 & 12.42 (0.11) & \textbf{1.87} & 3.73 \\
        4 & 20.74 & 17.98 & 0.71 & 0.41 & 3.22 (0.10) & \textbf{7.22} & 0.97 \\
        8 & 20.74 & 17.97 & 0.71 & 0.41 & 1.91 (0.06) & \textbf{12.17} & 0.57 \\
        16 & 20.72 & 17.95 & 0.71 & 0.40 & 1.31 (0.11) & \textbf{17.73} & 0.39 \\
        32 & 20.73 & 17.97 & 0.71 & 0.40 & \textbf{0.70 (0.07)} & \textbf{33.35} & \textbf{0.21} \\
      \bottomrule
    \end{tabular}
    }
    \caption{\small
    Results of the strong scaling experiment using a dataset with $\overline{x}_{\max} = 30$ and kernel size $L = 15^2$. Performance is reported in terms of estimation quality, time per iteration and speedup. 
    } 
    \label{tab:results_strong_scaling_large_kernel}
\end{table*}

\begin{table*}[thbp]
    \centering
    \resizebox{0.98\textwidth}{!}{%
    \begin{tabular}{@{}c@{}rrrrrrrr@{}} \toprule
        $(N, L,\nworkers)$  & $\SNR(\xmmse)$ & $\SNR(\xmap)$ & $\SSIM(\xmmse)$ & $\SSIM(\xmap)$ & Time per iter.  & Scaled & Runtime  \\
        &&&&& ($\times 10^{-2}$ s) & speedup & ($\times 10^{2}$ s) \\ \midrule
        $(256^2, 3^2, \mathbf{1})$  & 20.18 & 17.84 & 0.66 & 0.34 & 2.03 (0.05) & \textbf{1.00} & \textbf{0.61} \\
        $(512^2, 7^2, \mathbf{4})$  & 23.86 & 19.56 & 0.74 & 0.34 & 2.13 (0.08) & \textbf{3.81} & \textbf{0.64} \\
        $(1022^2, 11^2, \mathbf{16})$ & 27.55 & 20.65 & 0.81 & 0.33 & 3.56 (0.18) & \textbf{9.13} & \textbf{1.07} \\
      \bottomrule
    \end{tabular}
    }
    % weak scaling = fixed problem size per process
    \caption{\small
    Results of the weak scaling experiment using a dataset with $\overline{x}_{\max} = 30$. The reconstruction quality is reported with the time per iteration, the scaled speedup and the runtime.}
    \label{tab:results_weak_scaling}
\end{table*}

% *************************************************************************
\section{Conclusion}
\label{Sec:conclusion}
% *************************************************************************

In this paper, a distributed block-coordinate SGS has been introduced to efficiently solve large scale imaging inverse problems. 
The approach leverages the approximate data augmentation scheme AXDA~\cite{Vono_etal_2020,Rendell2021} to efficiently handle composite functions involving linear operators. A block-coordinate approach is adopted to split and distribute all the variables over multiple workers. 
The proposed method exploits the hypergraph structure of the linear operators to design a versatile distributed block-coordinate split Gibbs sampler.

Experiments on a supervised image deblurring problem show that the proposed approach forms reliable estimates with quantified uncertainty in a significantly reduced amount of time, compared to a state-of-the-art non-distributed version of the sampler from~\cite{Vono2019icassp}. In particular, the proposed sampler is shown to provide estimators at a fraction of the runtime of the serial implementation by using an increasing number of workers $K$. Processing a 1 million pixel image using our current Python implementation takes less than 2 minutes to obtain a good restoration with associated credibility intervals.

Note that the proposed distributed block-coordinate SGS is directly applicable to a much wider class of applications than the restoration problems addressed in this work, such as image inpainting, super-resolution or reconstruction. Future works include the development of an asynchronous version of the proposed approach to further speed up the inference process, while maintaining the convergence of the Markov chain.

% *************************************************************************
\if0\production
{   
    \appendix
    % *************************************************************************
\section{SGS for multiple composite terms}
\label{appendix:SPA-PSGLA-multi}
% *************************************************************************

% -------------------------------------------------------------------------
\subsection{AXDA for multiple composite terms}
\label{appendix:Ssec:AXDA-multi}
% -------------------------------------------------------------------------

Following~\cite{Vono_etal_2020}, AXDA can be generalized to address distributions of the form~\eqref{eq:dist-gen-split}, which involve multiple composite terms. We will use similar ideas to generalize the proposed approach, leveraging the approximation~\eqref{eq:dist-gen-split-SPA} associated with distribution~\eqref{eq:dist-gen-split}.
In \cite{Vono_etal_2020}, the authors show that
Proposition~\ref{Prop:cv-SPA-simple} holds in this context, under the same conditions on $\phi_{\idcomposite, \alpha_\idcomposite}$ and $\psi_{\idcomposite, \beta_\idcomposite}$ as~\eqref{prop:cv-spa-simple-cond1}-\eqref{prop:cv-spa-simple-cond2}.

Using an approach similar to the one described in \Cref{Ssec:SPA-PSGLA}, we can design a PSGLA within Gibbs sampler to approximately draw samples from~\eqref{eq:dist-gen-split-SPA}, generalizing Algorithm~\ref{algo:SPA-simple-sampling}. The resulting algorithm is reported in Algorithm~\ref{algo:SPA-multi-sampling},
where $(\wb^{(t)})_{0 \le t \le T}$ is a sequence of independent and identically distributed (i.i.d) standard Gaussian random variables in $\HHH$, 
$\gamma \in ]0, \lambda^{-1}[$, 
$\lambda = \lambda_h + \| \sum_{\idcomposite=1}^\ncomposite \Db_\idcomposite/\alpha_\idcomposite \|^2$, and, for every $\idcomposite \in \{1, \ldots, \ncomposite\}$,
\begin{align}
    (\forall \idcomposite \in \{1, \ldots, \ncomposite\})\quad
       \pi_{\idcomposite, \alpha_\idcomposite}(\zb_\idcomposite \mid \vb_\idcomposite,\ub_\idcomposite ) 
    \propto \exp \left( -g_\idcomposite(\zb_\idcomposite) - \phi_{\idcomposite, \alpha_\idcomposite}(\vb_\idcomposite, \zb_\idcomposite-\ub_\idcomposite)  \right) , \label{eq:dist-SPA-split-cond:zi}
\end{align}
with $\vb_\idcomposite = \Db_\idcomposite \xb$.

\begin{algorithm}[thbp]
    \small
    \KwIn{$\xb^{(0)} \in \HHH$, $(\zb_\idcomposite^{(0)}, \ub_\idcomposite^{(0)}) \in \GGG_\idcomposite^2$ and $(\alpha_\idcomposite, \beta_\idcomposite) \in ]0,+\infty[^2$ for $\idcomposite \in \{1, \dotsc, \ncomposite\}$, $\gamma \in \big]0, \big(\lambda_h + \| \sum_{\idcomposite=1}^\ncomposite \Db_\idcomposite/\alpha_\idcomposite \|^2 \big)^{-1} \big[$}

    \For{$t = 0$ \KwTo $T$}{

        $\displaystyle \xb^{(t+1)} = \prox_{\gamma f} \left( \xb^{(t)} - \gamma \nabla h(\xb^{(t)}) - \gamma \sum_{\idcomposite=1}^\ncomposite \Db_\idcomposite^* \nabla \phi_{\idcomposite,\alpha_\idcomposite}(\cdot, \zb_\idcomposite^{(t)}-\ub_\idcomposite^{(t)})(\vb_\idcomposite^{(t)}) + \sqrt{2\gamma} \wb^{(t)} \right)$; \\

        \For{$i = 1$ \KwTo $\ncomposite$}{
                $\displaystyle \vb_\idcomposite^{(t+1)} = \Db_\idcomposite \xb^{(t+1)}$, \\
                $\displaystyle \zb_\idcomposite^{(t+1)} \sim 
                \distc{\idcomposite, \alpha_\idcomposite}{\zb_\idcomposite}{ \vb_\idcomposite^{(t+1)}, \ub_\idcomposite^{(t)}}$; \\
                $\displaystyle \ub_\idcomposite^{(t+1)} \sim \Nc \Big( \frac{\beta_\idcomposite^2}{\alpha_\idcomposite^2+\beta_\idcomposite^2} (\zb_\idcomposite^{(t+1)} - \vb_\idcomposite^{(t+1)}), \frac{\alpha_\idcomposite^2+\beta_\idcomposite^2}{\alpha_\idcomposite^2 \beta_\idcomposite^2} \Idb \Big)$;
        }
    }
    \KwOut{$(\xb^{(t)})_{1 \le t \le T}$, $(\zb_\idcomposite^{(t)}, \ub_\idcomposite^{(t)})_{1 \le t \le T, 1 \leq i \leq \ncomposite}$}
	\caption{\small Proposed multi-term SGS (with $\ncomposite \geq 1$ composite terms).}
	\label{algo:SPA-multi-sampling}
\end{algorithm}

\begin{table}[t!]
    \centering\footnotesize
    \begin{tabular}{@{}p{4.2cm}@{}|p{7.5cm}|@{}c@{}|c@{}c@{}} \toprule
        \multirow{2}{*}{Notation} & \multirow{2}{*}{Definition} & \multicolumn{2}{c}{Given by}    \\
        && operator $\Db_\idcomposite$ & $\phantom{a.}$ user $\phantom{.a}$    \\
        \hline
        $\eb_\idcomposite = (e_{\idcomposite,m})_{1 \le m \le M_\idcomposite}$ & Hyperedges of $\Hc_\idcomposite$ & \checkmark & \\
        $e_{\idcomposite,m} \subset \{1, \ldots, N\}$ & Vertex indices in hyperedge $m$ of $\Hc_\idcomposite$ & \checkmark & \\
        $k_{\idcomposite,m} \in\{1, \ldots, K\}$ & Worker associated with $m$-th hyperedge $e_{\idcomposite,m}$ (chosen by the user). $k_{\idcomposite,m}$ must satisfy $e_{\idcomposite,m} \cap \eV_{k_{\idcomposite,m}} \neq \emp$  & & \checkmark \\
        $\Wc_{\idcomposite,m} \subset \{1, \ldots, K\} \setminus \{k_{\idcomposite,m}\}$ & Set of all workers but $k_{\idcomposite,m}$, containing vertices from $e_{\idcomposite,m}$ & \checkmark & \\
        $\overline{\Wc}_{\idcomposite,m} \subset \{1, \ldots, K\} $ & $\overline{\Wc}_{\idcomposite,m} = k_{\idcomposite,m} \cup \Wc_{\idcomposite,m}$ & \checkmark & \\
        $\eV_{\idcomposite,(k,k')} \subset \eV_{k'}$ & Indices of vertices sent from worker $k'$ to worker~$k$ & \checkmark & \checkmark    \\
        $\eV_{\Wc_{\idcomposite,m}} \subset \{1, \ldots, N\}\setminus \eV_{k_{\idcomposite,m}}$ & $\eV_{\Wc_{\idcomposite,m}} = \bigcup_{k' \in \Wc_{\idcomposite,m}} \eV_{(k_{\idcomposite,m}, k')}$ the set of vertex indices that will be communicated to worker $k_{\idcomposite,m}$ from all workers $k'\in \eV_{\idcomposite,m}$ & \checkmark & \checkmark    \\
        $\eV_{\overline{\Wc}_{\idcomposite,m}} \subset \{1, \ldots, N\}$ & $\eV_{\overline{\Wc}_{\idcomposite,m}} = \eV_{k_{\idcomposite,m}} \cup \eV_{\Wc_{\idcomposite,m}}$ the set of vertex indices necessary to perform computations associated with $k_{\idcomposite,m}$ & \checkmark & \checkmark    \\
        $\eE_{\idcomposite,k} \subset \{1, \ldots, M_\idcomposite\}$ & Indices of hyperedges only containing vertices stored on worker~$k$ & \checkmark & \checkmark \\
        $\eE_{\idcomposite,(k,k')} \subset \{1, \ldots, M_\idcomposite\}$ & Indices of hyperedges containing vertices sent from worker $k'$ to worker~$k$ & \checkmark & \checkmark \\
        $\eE_{\idcomposite,\Rc_k} \subset \{1, \ldots, M_\idcomposite\}$ & $\eE_{\idcomposite,\Rc_k} = \cup_{k'\in \Rc_k} \eE_{\idcomposite,(k,k')}$ set of all hyperedges containing vertices communicated to worker $k$ & \checkmark & \checkmark    \\
        $\overline{\eE}_{\idcomposite,k} \subset \{1, \ldots, M_\idcomposite\}$ & $ \overline{\eE}_{\idcomposite,k} = \eE_{\idcomposite,k} \cup \eE_{\idcomposite,\Rc_k} $, such that $(\overline{\eE}_{\idcomposite,k})_{1 \le k \le K}$ is a partition of $\{1, \ldots, M_\idcomposite\}$ & \checkmark & \checkmark \\
        \bottomrule
    \end{tabular}
    \caption{\label{table:notation-hypergraph-multi}\small
    Summary of the set notation used to define the hypergraph structure associated with the operators $\Db_\idcomposite$, for $\idcomposite \in \{1, \ldots, \ncomposite\}$.
    This notation generalizes the one from Table~\ref{table:notation-hypergraph}.
    }
\end{table}

\begin{table}[thbp]
    \centering\footnotesize
    \begin{tabular}{@{}p{4.9cm}|p{10.4cm}@{}} \toprule
        Notation & Definition \\
        \hline
        $\Db_\idcomposite \colon \HHH \to \GGG_\idcomposite$ & $\Db_\idcomposite = (D_{\idcomposite,m,n})_{1 \le m \le M_\idcomposite, 1\le n \le N}$ linear operator defining the $i$th hypergraph \\
        $\ub_\idcomposite, \vb_\idcomposite \in \GGG_\idcomposite$ & \multirow{2}{*}{Hyperedge weights} \\
        $\vb_\idcomposite = (v_{\idcomposite,m})_{1 \le m \le M_\idcomposite} \in \GGG_\idcomposite$ & \\
        $\vb_{\idcomposite,k} \in \GGG_{\idcomposite,k}$ & $\vb_{\idcomposite,k} = (v_{\idcomposite,m})_{m \in \overline{\eE}_{\idcomposite,k}}$ hyperedge weights stored on worker $k$ \\  
        $\Db_{\idcomposite,m,k} \colon \HHH_{\idcomposite,k} \to \GG_{\idcomposite,m} $ & $\Db_{\idcomposite,m,k} = (D_{\idcomposite,m,n})_{n \in \eV_{\idcomposite, k}}$, for $m\in \eE_{\idcomposite,k}$, subpart of $\Db_\idcomposite$ stored on worker $k$, associated with hyperedges containing vertices only on worker $k$ \\
        $\xb_{\idcomposite,(k,k')} \in \underset{n \in \eV_{\idcomposite, (k,k')}}{\Cart} \HH_n$  & $\xb_{\idcomposite,(k,k')} = (x_n)_{n \in \eV_{\idcomposite,(k,k')}}$ vertex values sent from worker $k'$ to $k$ \\
        $\Db_{\idcomposite,m,(k,k')} \colon \!\!\!\!\!\! \underset{n\in  \eV_{\idcomposite,(k,k')}}{\Cart} \!\!\!\!\! \HH_n \to \GG_{\idcomposite,m} $ & $\Db_{\idcomposite,m,(k,k')} = (D_{\idcomposite,m,n})_{n \in \eV_{\idcomposite,(k,k')}}$, for $m\in \eE_{\idcomposite,(k,k')}$, part of $\Db_\idcomposite$ stored on worker $k$, related to hyperedges containing vertices overlapping $k$ and $k'$ \\
        $\xb_{\overline{\Wc}_{\idcomposite,m}} \in \underset{n\in \eV_{\overline{\Wc}_{\idcomposite,m}}}{\Cart} \HH_n$  & $\xb_{\overline{\Wc}_{\idcomposite,m}} = (x_n)_{n \in \eV_{\overline{\Wc}_{\idcomposite,m}}}$, concatenation of vertex values stored on worker $k_{\idcomposite,m}$, and those sent from all workers $k' \in \Wc_{\idcomposite,m}$ to $k_{\idcomposite,m}$ \\
        $\Db_{\idcomposite,m,\overline{\Wc}_{\idcomposite,m}} \colon \underset{n\in \eV_{\overline{\Wc}_{\idcomposite,m}} }{\Cart} \HH_n \to \GG_{\idcomposite,m} $ & $\Db_{\idcomposite, m,\overline{\Wc}_{\idcomposite,m}} = (D_{\idcomposite,m,n})_{n \in \eV_{\overline{\Wc}_{\idcomposite,m}}}$, for $m\in \eE_{\idcomposite,\Rc_k}$, subpart of $(D_{\idcomposite,m,n})_{1 \le n \le N}$ corresponding to vertices stored on worker $k_{\idcomposite,m}$ and vertices overlapping worker $k_{\idcomposite,m}$ and other workers of $\Wc_{\idcomposite,m}$ \\
        \bottomrule
    \end{tabular}
    \caption{\label{table:notation2-hypergraph-multi}\small
    Notation used for the variables associated with the hypergraph induced by the operators $\Db_\idcomposite$, for $\idcomposite \in \{1, \ldots, \ncomposite\}$. This notation generalizes the one provided in Table~\ref{table:notation2-hypergraph}.
    }
\end{table}

% -------------------------------------------------------------------------
\subsection{Distributed multi-term SGS}
\label{appendix:Ssec:SPA-PSGLA-multi-algo}
% -------------------------------------------------------------------------

This section introduces a distributed version of Algorithm~\ref{algo:SPA-multi-sampling}, using the same approach as in Sections~\ref{Sec:Model} and \ref{Sec:dist-single-algo}. 
We consider the distribution~\eqref{eq:dist-gen-split}, with its AXDA approximation~\eqref{eq:dist-gen-split-SPA}.

For every $\idcomposite \in \{1, \ldots, \ncomposite\}$, let $\GGG_\idcomposite = \RR^{\overline{M}_\idcomposite}$ such that $\GGG_\idcomposite = \GG_{\idcomposite,1} \times \ldots \times \GG_{\idcomposite,M_\idcomposite}$, where for every $m\in \{1, \ldots, M_\idcomposite\}$, $\GG_{\idcomposite,m} = \RR^{M_{\idcomposite,m}}$, with $\overline{M}_\idcomposite = \sum_{m=1}^{M_\idcomposite} M_{\idcomposite,m}$.
An element of $\GGG_\idcomposite$ is denoted by $\ub_\idcomposite = (u_{\idcomposite,m})_{1 \le m \le M_\idcomposite}$, where, for every $m \in \{1, \ldots, M_\idcomposite\}$, $u_{\idcomposite,m} \in \GG_{\idcomposite,m}$.

For every $\idcomposite \in \{1, \ldots,\ncomposite\}$, the linear operator $\Db_\idcomposite = (D_{\idcomposite,m,n})_{1 \le m \le M_\idcomposite, 1 \le n \le N}$ defines a hypergraph structure $\Hc_\idcomposite$ as described in \Cref{Sec:Model}. We thus consider $\ncomposite$ hypergraphs, distributed over the same $K$ workers. As in \Cref{Sec:Model}, the choice of the distribution is made by the user, depending on the shape of the hypergraphs. The associated notation given in \Cref{table:notation-hypergraph-multi,table:notation2-hypergraph-multi} generalize those introduced in \Cref{Sec:Model}. For every $k \in \{1, \ldots, K\}$, let $\GGG_{\idcomposite, k} = \cart_{m \in \overline{\eE}_{\idcomposite, k}} \GG_{\idcomposite,m}$ such that $\GGG_\idcomposite = \cart_{1\leq k \leq K} \GGG_{\idcomposite, k}$.

We assume that, for every $\idcomposite \in \{1, \ldots,\ncomposite\}$, the functions $g_\idcomposite$, $\phi_{\idcomposite, \alpha_\idcomposite}$ and $\psi_{\idcomposite, \beta_\idcomposite}$ satisfy the same assumptions as the functions $g$, $\phi_\alpha$, $\psi_\beta$ given in \Cref{ass:psgla-spa,ass:gen}.
Then, for every $\idcomposite \in \{1, \ldots, \ncomposite\}$, there exists a permutation $\varrho_\idcomposite \colon \GGG_\idcomposite \to \GGG_\idcomposite$ such that, for every $\xb \in \HHH$,
\begin{equation} 
    \Db_\idcomposite \xb
    = \left( \left( D_{\idcomposite,m,n} \right)_{1 \le n \le N} \xb \right)_{1 \le m \le M_\idcomposite}
    = \varrho_\idcomposite \left(\left( \begin{matrix}
    \left(  \Db_{\idcomposite,m,k}\right)_{m \in \EE_{\idcomposite,k}}  \xb_{k}     \\[0.1cm]
    \left(  \Db_{\idcomposite, \overline{\Wc}_{\idcomposite,m}}  \xb_{\overline{\Wc}_{\idcomposite,m}} \right)_{m \in \eE_{\idcomposite, \Rc_{k}}} 
    \end{matrix} \right)_{1 \le k \le K} \right),   \label{eq:Li-dist-simple}
\end{equation}%
and, for every $\ub_\idcomposite \in \GGG_\idcomposite$, we have
\begin{align*}
    g_\idcomposite(\ub_\idcomposite) 
    &=  \sum_{k=1}^K \Big( \sum_{m \in \overline{\eE}_{\idcomposite,k}} g_{\idcomposite,m} ( u_{\idcomposite,m} ) \Big), \\
    \phi_{\idcomposite,\alpha_\idcomposite}(\Db_\idcomposite \xb, \ub_\idcomposite)
    &=  \sum_{k=1}^K \Big( 
        \sum_{m \in \eE_{\idcomposite,k}} \phi_{\idcomposite,m,\alpha_\idcomposite} \left( \Db_{\idcomposite,m,k} \xb_{\idcomposite,k}, u_{\idcomposite,m} \right) 
        + \sum_{m \in \eE_{\idcomposite, \Rc_{k}}}  \phi_{\idcomposite,m,\alpha_\idcomposite} \left( \Db_{\eV_{\idcomposite,m}} \xb_{\eV_{\idcomposite,m}}, u_{\idcomposite,m} \right) 
        \Big), \\
    \psi_{\idcomposite,\beta_\idcomposite}(\ub)
    &=  \sum_{k=1}^K \Big( 
        \sum_{m \in \overline{\eE}_{\idcomposite,k}} \psi_{\idcomposite,m,\beta_\idcomposite} \left( u_{\idcomposite,m} \right) 
        \Big).
\end{align*}{}%
Using this notation, \Cref{prop:synch-SPA-simple} can be generalized to a multi-term setting as follows.

\begin{proposition} \label{prop:synch-SPA-multi}
Assume that, for every $\idcomposite \in \{1, \ldots, \ncomposite\}$, each operator $\Db_\idcomposite$ is split over workers $\{1, \ldots, K\}$ such that, for every $k\in \{1, \ldots, K\}$, $(\Db_{\idcomposite,m,k})_{m \in \overline{\eE}_{\idcomposite,k}}$ is stored on worker $k$. 
Let, for every $k\in \{1, \ldots, K\}$, $\xb_k^{(0)} \in \HHH_k$, $ \zb_{\idcomposite,k}^{(0)} \in \GGG_{\idcomposite,k}$, and $ \ub_{\idcomposite,k}^{(0)} \in \GGG_{\idcomposite,k}$. 
Let $(\xb^{(t)})_{1 \le t \le T}$ and $(\zb_\idcomposite^{(t)}, \ub_\idcomposite^{(t)})_{1 \le \idcomposite \le \ncomposite, 1 \le t \le T}$ be samples generated by 
Algorithm~\ref{algo:multi-synch},
where $\gamma \in \big]0, \big(\lambda_h + \sum_{\idcomposite=1}^\ncomposite (\| \Db_\idcomposite \|^2/\alpha_\idcomposite^2)\big)^{-1} \big[$, for every $k\in \{1, \ldots, K\}$, $(\wb^{(t)}_k)_{1 \le t \le T}$ is a sequence of i.i.d. standard Gaussian random variables in $\HHH_k$. 
In addition, 
for every $\idcomposite \in \{1, \ldots, \ncomposite\}$, 
\begin{align*}
       \distc{\idcomposite,k, \alpha_\idcomposite}{\zb_{\idcomposite,k}}{\vb_{\idcomposite,k}, \ub_{\idcomposite,k}}  
    &   \propto \exp \Bigg( - \sum_{m\in \eE_{\idcomposite,k}} \left( g_{\idcomposite,m}(z_{\idcomposite,m}) + \phi_{\idcomposite,m,\alpha_\idcomposite}( \Db_{\idcomposite,m,k} \xb_k  , z_{\idcomposite,m} - u_{\idcomposite,m} ) \right)    \\
    &   \qquad  
        - \sum_{m \in \eE_{\idcomposite,\Rc_k}} \left( g_{\idcomposite,m}(z_{\idcomposite,m}) + \phi_{\idcomposite,m,\alpha_\idcomposite}( \Db_{\idcomposite,m,\overline{\Wc}_{\idcomposite,m}} \xb_{\overline{\Wc}_{\idcomposite,m}}  , z_{\idcomposite,m} - u_{\idcomposite,m} ) \right) \Bigg),
\end{align*}
where, for every $\xb \in \HHH$, $\vb_{\idcomposite,k} = (v_{\idcomposite,m})_{m \in \overline{\eE}_{\idcomposite,k}} = \varrho_c 
\left(\begin{matrix}
( \Db_{\idcomposite,m,k} \xb_k )_{m \in \eE_{\idcomposite,k}}  \\
( \Db_{\idcomposite,m,\overline{\Wc}_{\idcomposite,m}} \xb_{\overline{\Wc}_{\idcomposite,m}} )_{m \in \eE_{\idcomposite,\Rc_k}}
\end{matrix}\right)$. \\
Then Algorithm~\ref{algo:multi-synch} is equivalent to Algorithm~\ref{algo:SPA-multi-sampling}.
\end{proposition}

\begin{algorithm}[htbp]
    \small

    \For{$k = 1$ \KwTo $K$}{
        \For{$k' \in \Sc_k$}{
            Send $(x_n^{(0)})_{n \in \cup_{\idcomposite =1}^\ncomposite \eV_{\idcomposite,(k',k)}}$ to worker $k'$\;
        }

        \For{$k' \in \Rc_k$}{
            Receive  $(x_n^{(0)})_{n \in \cup_{\idcomposite =1}^\ncomposite \eV_{\idcomposite,(k,k')}}$ from worker $k'$\;
        }

        \For{$\idcomposite = 1$ \KwTo $\ncomposite$}{
            $\vb_{\idcomposite,k}^{(0)} = \varrho_\idcomposite
                \left(\begin{matrix}
                ( \Db_{\idcomposite,m,k} \xb_k^{(0)} )_{m \in \eE_{\idcomposite,k}}  \\
                ( \Db_{\idcomposite,m,\overline{\Wc}_{\idcomposite,m}} \xb_{\overline{\Wc}_{\idcomposite,m}}^{(0)} )_{m \in \eE_{\idcomposite,\Rc_k}}
                \end{matrix}\right)$\;
        }
    }

    \For{$t = 0$ \KwTo $T$}{
        \For{$k = 1$ \KwTo $K$}{

            \For{$\idcomposite = 1$ \KwTo $\ncomposite$}{
                $(d_{\idcomposite,m}^{(t)})_{m \in \overline{\eE}_{\idcomposite,k}} 
                    = \left( \phi_{\idcomposite, m,\alpha_i}' (\cdot, z_{\idcomposite,m}^{(t)}-u_{\idcomposite,m}^{(t)}) (v_{\idcomposite,m}^{(t)}) \right)_{m \in \overline{\eE}_{\idcomposite,k}}$;
            }

            \For{$k' \in \Rc_k$}{
                $\displaystyle \widetilde{\db}_{(k',k)}^{(t)} = \sum_{\idcomposite =1}^\ncomposite \sum_{m \in \eE_{\idcomposite, (k',k)}} \Db_{\idcomposite,m,k'}^* d_{\idcomposite,m}^{(t)}$; \\
                Send $\widetilde{\db}_{(k',k)}^{(t)}$ to worker $k'$;
            }

            \For{$k' \in \Sc_k$}{
                Receive $\widetilde{\db}_{(k,k')}^{(t)}$ from worker $k'$;
            }

            $\displaystyle \deltab_k^{(t)} = \sum_{\idcomposite =1}^\ncomposite \sum_{m \in \overline{\eE}_{\idcomposite,k}} \Db_{\idcomposite,m,k}^* d_{\idcomposite,m}^{(t)} + \sum_{k'\in \Sc_k} \widetilde{\db}_{(k,k')}^{(t)}$; \\
            $\displaystyle \xb_k^{(t+1)} = \prox_{\gamma f_k}\left( \xb_k^{(t)} - \gamma \nabla h_k(\xb_k^{(t)}) - \gamma \deltab_k^{(t)} + \sqrt{2\gamma} \wb_k^{(t)} \right)$; \\

            \For{$k' \in \Sc_k$}{
                Send $(x_n^{(t+1)})_{n \in \cup_{\idcomposite =1}^\ncomposite \eV_{\idcomposite, (k',k)}}$ to worker $k'$;
            }

            \For{$k' \in \Rc_k$}{
                Receive $(x_n^{(t+1)})_{n \in \cup_{\idcomposite =1}^\ncomposite \eV_{\idcomposite, (k,k')}}$ from worker $k'$;
            }

            \For{$\idcomposite = 1$ \KwTo $\ncomposite$}{
                $\displaystyle \vb_{\idcomposite,k}^{(t+1)} = 
                \varrho_\idcomposite \left(\begin{matrix}
                ( \Db_{\idcomposite,m,k} \xb_k^{(t+1)} )_{m \in \eE_{\idcomposite,k}}  \\
                ( \Db_{\idcomposite,m,\overline{\Wc}_{\idcomposite,m}} \xb_{\overline{\Wc}_{\idcomposite,m}}^{(t+1)} )_{m \in \eE_{\idcomposite,\Rc_k}}
                \end{matrix}\right)$; \\
                $\zb_{\idcomposite,k}^{(t+1)} \sim 
                \distc{\idcomposite, \alpha_\idcomposite, k}{\zb_{\idcomposite,k}}{\vb_{\idcomposite,k}^{(t+1)}, \ub_{\idcomposite,k}^{(t)}}$; \\
                $\ub_{\idcomposite,k}^{(t+1)} \sim \Nc \Big( \frac{\beta_i^2}{\alpha_i^2+\beta_i^2} (\zb_{\idcomposite,k}^{(t+1)} - \vb_{\idcomposite,k}^{(t+1)}), \frac{\alpha_i^2+\beta_i^2}{\alpha_i^2 \beta_i^2} \Idb \Big)$;
            }
        }
    }
    \caption{\small Proposed distributed SGS with multiple composite terms $\ncomposite > 1$.}
    \label{algo:multi-synch}
\end{algorithm}

\begin{proof}
    The proof of Proposition~\ref{prop:synch-SPA-multi} is similar to the one of Proposition~\ref{prop:synch-SPA-simple}.
\end{proof}

} \fi

% *************************************************************************
\bibliographystyle{siamplain}
\bibliography{abbr,references}

\end{document}